\documentclass[11pt,a4paper]{article}
\usepackage{threeparttable}
\usepackage{stmaryrd}
\usepackage{booktabs}
\usepackage{amssymb,cite}
\usepackage{mathrsfs}
\usepackage{latexsym}
\usepackage{epsfig}
\usepackage{tikz}
\usepackage{subfigure}
\usepackage{graphicx}
\usepackage{epstopdf}
\usepackage{amsthm}
\usepackage{amsfonts}
\usepackage{amsmath}
\usepackage{amssymb,amsmath,amsthm,amscd}
\usepackage{algorithm}
\usepackage{algorithmic}
\usepackage{multirow}
\usepackage{xcolor}
\usepackage{array}
\usepackage{booktabs}
\usepackage{bm}
\usepackage[numbers,sort&compress]{natbib}
\usepackage{lscape}
\usepackage[indentafter]{titlesec}
\usepackage[colorlinks,linkcolor=red,linkcolor=blue,linktocpage]{hyperref}
\usepackage{threeparttable}
\usepackage{caption}
\usepackage{enumitem,calc}
\newcommand\preitem{\mdseries\textbullet\space}
\newlist{desclist}{description}{3}
\setlist[desclist,1]{format=\preitem\bfseries,leftmargin=\widthof{\preitem},style=sameline}
\setcounter{MaxMatrixCols}{30} \makeatletter

\newcommand{\Rmnum}[1]{\expandafter\@slowromancap\romannumeral #1@}

\usepackage[top=1.0 in,bottom=1.0 in,left=1.0 in,right=1.0 in]{geometry}
\floatname{algorithm}{Method}
\newtheorem{theorem}{Theorem}[section]
\newtheorem{corollary}{Corollary}[section]

\newtheorem{lemma}{Lemma}[section]
\newtheorem{remark}{Remark}[section]

\newtheorem{proposition}{Proposition}[section]
\linespread{1.1}
\hypersetup{colorlinks=true, citecolor=blue}

\def \b#1{\overline{#1}}
\def \u#1{\underline{#1}}
\def \W#1{\widehat{#1}}
\def \t#1{\widetilde{#1}}

\numberwithin{equation}{section}

\title{Nonlocal massive Thirring model and its solutions}

\author{Cong-han Wang$^{1}$, ~~ Shu-zhi Liu$^2$,~~
Jing Wang$^{3}$,~~ Da-jun Zhang$^{4,5}$\footnote{
Corresponding author. Email: djzhang@staff.shu.edu.cn}\\
{\small $^{1}$School of Mathematics and Statistics, Zhengzhou Normal University, Zhengzhou 450044, China}\\
{\small $^{2}$School of Statistics and Data Science, Ningbo University of Technology, Ningbo, 315211, China}\\
{\small $^{3}$School of Mathematics, Physics and Statistics, Shanghai Polytechnic University, Shanghai 201209, China}\\
{\small $^{4}$Department  of Mathematics, Shanghai University, Shanghai 200444, China}\\
{\small $^{5}$Newtouch Center for Mathematics of Shanghai University,  Shanghai 200444,  China}
}

\date{\today}

\begin{document}

\maketitle

\begin{abstract}
A nonlocal version of the massive Thirring model (MTM) and its solutions are presented.
We start from a 4-component system that can be reduced to the classical MTM and nonlocal MTM.
Bilinear form of the 4-component system and general double Wronskian solutions are derived.
By utilizing reduction technique we obtain solutions
of the nonlocal MTM. Relations between the nonlocal MTM and the nonlocal Fokas-Lenells
equation is discussed.
Some solutions of the nonlocal MTM, such as solitons, double-pole solution,
algebraic solitons and high order algebraic solitons are analyzed and illustrated.

\vskip 6pt
\noindent
\textbf{Key Words:}\quad nonlocal massive Thirring model, solution, bilinear form, double Wronskian

\end{abstract}


\section{Introduction}

The massive Thrring model (MTM) bears the name of Walter E. Thirring \cite{T-1958}
as he provided it as a solvable model in the relativistic Quantum Field Theory (QFT).
Thirring solved the massless case \cite{T-1958},
but at that time the massive case was difficult (see \cite{W-1967}).
It is Mikhailov  \cite{M-1976} who first revealed the integrability of the two-component MTM
\begin{equation}\label{MTM}
(-i\partial_{\mu}\gamma^{\mu}+m)\psi
+g\gamma^{\mu}\psi (\overline{\psi}\gamma_{\mu}\psi)=0.
\end{equation}
Around the same time, this model was also connected to the quantum and classical sine-Gordon equations
\cite{C-1975,O-1976}.
Here in the MTM model \eqref{MTM}, $m$ stands for mass, $g$ is a parameter,
$i$ is the imaginary unit,
$\psi=(w,z)^T$ is a two-dimensional spinor field,
$\gamma_0=\bigl(\begin{smallmatrix}0 & 1 \\ 1 & 0\end{smallmatrix}\bigr)$,
$\gamma_1=\bigl(\begin{smallmatrix}0 & -1 \\ 1 & 0\end{smallmatrix}\bigr)$,
$\overline{\psi}=(w^*,z^*)\gamma^{0}$ where $*$ denotes complex conjugate,
$\partial_\mu=\partial_{x_\mu}$ for $\mu=1,2$,
$\gamma^{\mu}=(\gamma_{\mu})^{-1}$ and the Einstein summation convention is employed.
The model is used to describe interaction of two states of a fermion.
In terms of light-cone coordinates, the MTM \eqref{MTM} is written as (with $m=2, g=1$)
\begin{subequations} \label{cMTM}
\begin{align}
&  iw_{x}+z+w|z|^2=0,\label{KNa}\\
&  iz_{t}+w+z|w|^2=0,\label{KNb}
\end{align}
\end{subequations}
which can be resolved by the following equation \cite{KN-1977,GIK-1980}
\begin{equation}\label{FL3}
q_t+u-2i|u|^2 q=0,~~ u_x=q
\end{equation}
via
\begin{equation}\label{tran-M-MTM}
z = q^* e^{-i\beta},~~ w=-i u^*e^{-i\beta},~~
 \beta=\int^{x}_{-\infty}|q|^2\mathrm{d}x,
 ~~q=u_x.
\end{equation}
The later equation \eqref{FL3} in terms of $u$, i.e.
\begin{equation}\label{FL}
u_{xt}+u-2i|u|^2 u_x=0,
\end{equation}
is a reduction of the so-called Mikhailov model (see \cite{GIK-1980})
\begin{subequations} \label{pKN-1}
\begin{align}
&  u_{xt}+u-2iuvu_x=0,\label{KNa}\\
&  v_{xt}+v+2iuvv_x=0,\label{KNb}
\end{align}
\end{subequations}
which is the first member in the negative (potential) Kaup-Newell (KN) hierarchy.
Details of the above relations were also collected in Appendix A of \cite{LWZ-2022}.
As a solvable QFT model and also one of early integrable models,
the MTM has received a lot of attention.
One can refer to \cite{KM-1977,IK-1978,D-1978,D-1979,BT-1979,KMI-1979,PG-1979,P-1981,NCQV-1983,NCQ-1983,
DHS-1984,B-1985,BG-1987,V-1991}
for the  investigations in the early days of the this equation.
More recent results about the model can be found from
\cite{GWCH-CNSNS-2017,Pelinovsky-2019,Xia-PRSE-2020,LWZ-2022,CF-2022,CYF-SAPM-2023,
HHP-PRE-2024,HLQ-2025}.

It is interesting that  equation \eqref{FL} was also derived (with some transformations, see \cite{Lenells-SAPM-2009})
as a novel generalization of the nonlinear Schr\"odinger  equation
using bi-Hamiltonian structures of the later \cite{Fokas-PD-1995,Lenells-F-Non-2009},
as well as derived from modeling  propagation of nonlinear  pulses in monomode optical fibers \cite{Lenells-SAPM-2009}.
Therefore equation \eqref{FL} is also known as the Fokas-Lenells (FL) equation.
This equation has also been well studied from many aspects, such as dressing method \cite{Lenells-JNS-2010},
Darboux transformation \cite{Wang-XL-AML-2020},
bilinear approach\cite{LWZ-2022,Vek-Non-2011,Matsuno-JPA-2012a,Matsuno-JPA-2012b},
Riemann-Hilbert problem\cite{CF-JDE-2022}, algebraic geometry solutions \cite{Zhao-F-JNMP-2013},
rogue waves \cite{Xu-HCP-MMAS-2014}, algebraic solitons \cite{W-2021,WW-2022},
nonlocal form\cite{LWZ-2022}, and so on.
Nonlocal integrable systems were  first systematically introduced by Ablowitz and Musslimani in 2013
\cite{AblM-PRL-2013} and have received intensive attention since then
(e.g.\cite{ChenDLZ-SAPM-2018,GP-JMP-2018,Zhou-SAPM-2018,AblFLM-SAPM,YY-SAPM-2018,AblM-JPA-2019,
Lou-SAPM-2019,BioW-SAPM-2019,AblLM-Nonl-2020,RaoCPMH-PD-2020,
LiFW-SAPM-2020,RybS-CMP-2021,AM-PLA-2021}).
The FL equation \eqref{FL} has a nonlocal version
\begin{equation}\label{non-FL}
u_{xt}(x,t)+u(x,t)-2iu(x,t)u(-x,-t)u_x(x,t)=0,
\end{equation}
which is reduced from the Mikhailov model \eqref{pKN-1} by taking $v(x,t)=u(-x,-t)$.
Considering the close relation between the MTM and FL equation,
there should be a  nonlocal counterpart of the MTM \eqref{MTM}.

This paper is devoted to exploring nonlocal MTM.
We will consider a four-component system \cite{BG-1987}
\begin{subequations} \label{31pKN-1}
\begin{align}
&  iq_{1,t}+q_{3}+q_{1}q_{3}q_{4}=0,\label{KN11}\\
&  iq_{2,t}-q_{4}-q_{2}q_{3}q_{4}=0,\label{KN21}\\
&  iq_{3,x}+q_{1}+q_{1}q_{2}q_{3}=0,\label{KN31}\\
&  iq_{4,x}-q_{2}-q_{1}q_{2}q_{4}=0.\label{KN41}
\end{align}
\end{subequations}
It allows a nonlocal reduction, which yields a nonlocal MTM
\begin{subequations} \label{non-cMTM1}
\begin{align}
&  iw_{t}(x,t)+z(x,t)+w(x,t)z(x,t)z(-x,-t)=0,\label{KNa}\\
&  iz_{x}(x,t)+w(x,t)+z(x,t)w(x,t)w(-x,-t)=0.\label{KNb}
\end{align}
\end{subequations}
Solutions of this nonlocal system will be derived by means of the bilinearisation-reduction approach.
This approach is particularly developed for solving nonlocal systems
\cite{ChenDLZ-SAPM-2018,ChenZ-AML-2018,Deng-AMC-2018}.
It is based on bilinear equations and solutions in double Wronskian/Casoratian form.
For those integrable equations that are reduced from coupled systems,
one can first solve the unreduced coupled system to obtain their solutions in double Wronskian/Casoratian form,
then, implement suitable reduction techniques
so that solutions of the reduced equation can be obtained as reductions of those of the unreduced coupled system.
The method has proved effective in solving not only nonlocal integrable equations
but also the classical (local) ones \cite{ChenDLZ-SAPM-2018,ChenZ-AML-2018,Deng-AMC-2018,
ShiY-ND-2019,Liu-ROMP-2020,ZSL-2020,LWZ-ROMP-2022,WW-CNSNS-2022,ZLD-2023,DCCZ-2024}.
One can refer to the recent review \cite{ZDJ-review-2023} for this method.
In addition, considering the relation between the MTM and  FL equation,
we will also explain relations between the nonlocal MTM and nonlocal FL equation.

The paper is organized as follows.
In Sec.\ref{sec-2} we present integrability of the nonlocal MTM   \eqref{non-cMTM1}
and explore the connection with the nonlocal FL equation.
Sec.\ref{sec-3} derives solutions using the bilinearisation-reduction approach.
Dynamics of some obtained solutions are analyzed in Sec.\ref{sec-4}.
Finally, conclusions are presented in Sec.\ref{sec-5}.
There are two appendices.
Appendix \ref{app-A} consists of a proof of Theorem \ref{Theo-3-1},
and Appendix \ref{app-B} discusses an alternative bilinear formulation
of \eqref{31pKN-1}.

\section{Nonlocal massive Thirring model}\label{sec-2}

In this section we present integrability of the nonlocal MTM \eqref{non-cMTM}
and also explore the connection with the nonlocal FL equation.

We start from the four-component system \eqref{31pKN-1},
which is integrable with the following Lax pair (cf.\cite{BG-1987,BGK-1993})
\begin{equation}
\Phi_{x}=M\Phi, ~~
~~\Phi_{t}=N\Phi,
\end{equation}
where $\Phi=(\phi_1, \phi_2)^T$,
\begin{subequations} \label{4pKN-1}
\begin{align}
&  M=\frac{i}{2}\begin{pmatrix}
-\lambda^{2}+q_{1}q_{2} & 2\lambda q_{1}\\
2\lambda q_{2} & \lambda^{2}-q_{1}q_{2}
\end{pmatrix},\label{La}\\
&  N=\frac{i}{2}\begin{pmatrix}
-\lambda^{-2}+q_{3}q_{4} & 2\lambda^{-1}q_{3}\\
2\lambda^{-1}q_{4} & \lambda^{-2}-q_{3}q_{4}
\end{pmatrix},\label{Lb}
\end{align}
\end{subequations}
and $\lambda$ is the spectral parameter.
The compatibility of $\Phi_{xt}=\Phi_{tx}$ gives rise to the zero curvature  equation $ M_t-N_x+[M,N]=0$
and the system \eqref{31pKN-1} follows.
Note that the system \eqref{31pKN-1} allows a conserved relation
\begin{equation}\label{con-q}
(q_1 q_2)_t=-(q_3 q_4)_x,
\end{equation}
which will be used later.
Let us call \eqref{31pKN-1} the unreduced MTM.
In fact, by imposing reduction
\begin{equation}
\label{trans a}
 q_1=\delta  {q_2}^*=z,\\
~~ q_3=\delta {q_4}^*=w, ~~~(\delta =\pm 1),
\end{equation}
one immediately gets the MTM \eqref{MTM}, i.e.
\begin{subequations} \label{MTM-q}
\begin{align}
&  iq_{1,t}+q_{3}+ {\delta} q_{1}|q_{3}|^2=0,\label{KNa-q}\\
&  iq_{3,x}+q_{1}+ {\delta} q_{3}|q_{1}|^2=0.\label{KNb-q}
\end{align}
\end{subequations}
If we consider a nonlocal
(reversed space and time) reduction
\begin{equation}
\label{trans b}
q_1(-x,-t)=\delta q_2(x,t), ~~ q_3(-x,-t)=\delta q_4(x,t),
\end{equation}
the four-component system \eqref{31pKN-1} reduces to
\begin{subequations} \label{non-cMTM}
\begin{align}
&  iq_{1,t}+q_{3}+\delta q_{1}q_{3}q_{3}(-x,-t)=0,\label{non-cMTMa}\\
&  iq_{3,x}+q_{1}+\delta q_{3}q_{1}q_{1}(-x,-t)=0,\label{non-cMTMb}
\end{align}
\end{subequations}
i.e. the nonlocal MTM \eqref{non-cMTM1} with $q_1=w, q_3=z$.

Next, we explore the relations between the unreduced MTM \eqref{31pKN-1} and the Mikhailov model
\eqref{pKN-1} and also their reductions.
Consider \eqref{pKN-1} and we rewrite it into the form
\begin{subequations} \label{2pKN-1}
\begin{align}
&  q_{t}+u-2iuvq=0,\label{KNc}\\
&  r_{t}+v+2iuvr=0,\label{KNd}
\end{align}
\end{subequations}
where
\begin{equation}
\label{trans 1}
q=u_{x},~~ r=v_{x}.
\end{equation}
Note that  \eqref{pKN-1} has a conservation law (cf.\eqref{con-q})
\begin{equation}\label{conser}
 (qr)_t=-(uv)_x,
\end{equation}
if $uv \to 0$ as $x\to \pm \infty$.
Introduce transformation
\begin{equation}
\label{trans 3}
q=q_1\, e^{-i\beta},
~~r=q_2\, e^{i\beta},
~~u=-i q_3\,e^{-i\beta},
~~v=i q_4\, e^{i\beta},
\end{equation}
where\footnote{The definition of $\partial^{-1}_x$ indicates $\partial^{-1}_{-x}=-\partial^{-1}_{x}$.}
\begin{equation}\label{beta}
\beta=\partial^{-1}_x qr,~~
\partial^{-1}_x=\frac{1}{2}\left(\int_{-\infty}^{x}-\int^{+\infty}_{x}\right) \cdot \,\mathrm{d}x.
\end{equation}
Substituting the above  $q, u, v$ into \eqref{KNc} yields
\[
  q_{t}+u-2iuvq=
  e^{-i\beta} (q_{1,t}-i \beta_t q_1 -i q_3 -2i q_1 q_3 q_4).
\]
Noticing from \eqref{conser} that $\beta_t=-uv=-q_3 q_4$, we have
\[
  q_{t}+u-2iuvq=
 -i e^{-i\beta} (i q_{1,t}+ q_3 + q_1 q_3 q_4),
\]
which gives rise to equation \eqref{KN11} as a consequence of \eqref{KNc}.
In a similar way, one can get \eqref{KN21} from \eqref{KNd}.
Next, from $i q_3=-u\,e^{i\beta}$ and making use of \eqref{trans 1} and \eqref{trans 3} we have
\[
iq_{3,x} =-u_x e^{i\beta} - i u q r e^{i\beta}
             =-q e^{i\beta} - i u q r e^{i\beta}
            =-q_1 -q_1 q_2 q_2,
\]
i.e. equation \eqref{KN31}. The last equation \eqref{KN41} follows from $iq_4=v e^{i\beta}$.
Thus, we arrive at the following.

\begin{proposition}\label{prop-2-1}
If $u$ and $v$ are solutions of the Mikhailov model \eqref{pKN-1} and $uv$ tends to $0$ as $x \to \pm \infty$,
then $\{q_i\}$ defined below,
\begin{equation}
\label{trans 33}
q_1=q\, e^{i\beta}=u_x \, e^{i\beta},
~~q_2=r\, e^{-i\beta} =v_x\, e^{-i\beta},
~~q_3=i u\,e^{i\beta},
~~q_4=-iv\, e^{-i\beta},
\end{equation}
satisfy the unreduced MTM \eqref{31pKN-1}. Here $\beta$ is defined in \eqref{beta}.
\end{proposition}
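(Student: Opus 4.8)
The plan is to verify directly that the substitution \eqref{trans 33} turns each of the four equations in \eqref{31pKN-1} into an identity, using only the two Mikhailov equations \eqref{pKN-1} (equivalently \eqref{2pKN-1} together with \eqref{trans 1}) and the decay hypothesis $uv\to 0$ as $x\to\pm\infty$. Note first that \eqref{trans 33} is exactly the inversion of \eqref{trans 3}: writing $q=u_x$, $r=v_x$ we have $q_1=q\,e^{i\beta}$, $q_2=r\,e^{-i\beta}$, $q_3=iu\,e^{i\beta}$, $q_4=-iv\,e^{-i\beta}$. So the whole computation has in fact already been carried out in the paragraph preceding the statement; the proof is essentially a matter of assembling those four local computations and pointing out that the decay condition is what legitimizes the conservation law \eqref{conser}, hence the identity $\beta_t=-uv=-q_3q_4$ that is used at the key step.

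Concretely I would proceed as follows. First, record that $\beta$ in \eqref{beta} is well defined under the decay assumption and that, differentiating the conservation law \eqref{conser} (which holds precisely because $uv\to0$ at $\pm\infty$, so the relevant boundary terms vanish), one gets $\beta_t=\partial_x^{-1}(qr)_t=-uv=-q_3q_4$, while $\beta_x=qr=q_1q_2$. Second, substitute $q=q_1e^{-i\beta}$, $u=-iq_3e^{-i\beta}$, $v=iq_4e^{i\beta}$ into \eqref{KNc}: the product rule gives $q_t=e^{-i\beta}(q_{1,t}-i\beta_tq_1)$, and collecting terms yields $q_t+u-2iuvq=-ie^{-i\beta}(iq_{1,t}+q_3+q_1q_3q_4)$ after inserting $\beta_t=-q_3q_4$; since the left side vanishes and $e^{-i\beta}\neq0$, equation \eqref{KN11} follows. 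Third, by the conjugate-type computation (now using $\beta_t=-uv$ once more, with opposite sign bookkeeping from the $e^{i\beta}$ factors), \eqref{KNd} gives \eqref{KN21}. Fourth, for the $x$-equations use $iq_3=-ue^{i\beta}$ and $\beta_x=qr$: then $iq_{3,x}=-u_xe^{i\beta}-iuqre^{i\beta}=-qe^{i\beta}-iuqre^{i\beta}$, and rewriting $q e^{i\beta}=q_1$ and $iuqr e^{i\beta}=q_1q_2q_3$ (using $r e^{-i\beta}=q_2$ and $iu e^{i\beta}=q_3\cdot e^{2i\beta}$—one must be a little careful with the exponential bookkeeping here) produces \eqref{KN31}; similarly $iq_4=ve^{-i\beta}$ yields \eqref{KN41}.

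There is no deep obstacle: this is a reduction/change-of-variables verification. The one place that requires genuine care rather than routine symbol-pushing is the exponential bookkeeping in the $x$-equations—tracking that the combination $uqr e^{i\beta}$ really reassembles as $q_1q_2q_3$ with the correct single power of $e^{i\beta}$ on each factor—and, relatedly, making sure the decay hypothesis $uv\to0$ is invoked at exactly the point where \eqref{conser} is used so that $\beta_t=-q_3q_4$ is licit (this is the only role the hypothesis plays). Since the forward direction of this correspondence is already spelled out in the text immediately above the proposition, the cleanest writeup is simply to observe that \eqref{trans 33} inverts \eqref{trans 3} and that the four displayed identities in the preceding paragraph, read in the direction ``Mikhailov $\Rightarrow$ unreduced MTM,'' constitute the proof; I would then just fill in the one missing line, namely the derivation of \eqref{KN21} from \eqref{KNd}, which is entirely parallel to the \eqref{KN11} case.
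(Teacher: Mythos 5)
Your proposal is correct and takes essentially the same route as the paper: the paper's proof of this proposition is exactly the computation in the paragraph preceding it (substitute \eqref{trans 3} into \eqref{2pKN-1}, invoke the conservation law \eqref{conser} --- valid because $uv\to0$ at $x\to\pm\infty$ --- to get $\beta_t=-uv=-q_3q_4$, and use $\beta_x=qr=q_1q_2$ for the two $x$-equations). The only blemish is your parenthetical ``$iu\,e^{i\beta}=q_3\cdot e^{2i\beta}$,'' which should read $iu\,e^{i\beta}=q_3$; your final identity $iuqr\,e^{i\beta}=q_1q_2q_3$ is nonetheless correct, so this is a harmless slip rather than a gap.
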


It can be examined that the relation \eqref{trans 33} is consistent with the reduction \eqref{trans a}
and $u=v^*$, which lead the unreduced MTM \eqref{31pKN-1} to the MTM \eqref{MTM-q} and lead
the Mikhailov model \eqref{pKN-1} to the FL equation \eqref{FL}, respectively.
Thus, once $u$ solves the FL equation \eqref{FL} and $|u|\to 0$ as $x\to \pm \infty$,
then
\begin{equation}\label{qu}
q_1= u_x \, e^{i\beta},
~~q_3=i u\,e^{i\beta}, ~~ (\beta=\partial^{-1}_x |q|^2)
\end{equation}
satisfy the MTM \eqref{MTM-q}.
The connection \eqref{qu} has been known since 1970s, cf.\cite{GIK-1980,KN-1977}.

Next, we examine the connection between the nonlocal MTM   and the nonlocal
FL equation and the consistency of the involved reductions.
For the Mikhailov model \eqref{pKN-1}, the nonlocal reduction 
\begin{equation}\label{red-n}
v(x,t)=u(-x,-t)  
\end{equation}
yields the nonlocal FL equation
\begin{equation}\label{non-FL-shift}
u_{xt}(x,t)+u(x,t)-2iu(x,t)u(-x,-t)u_x(x,t)=0.
\end{equation}
Again, introducing $q=u_x$ and rewriting it into
\begin{equation}\label{non-FL-shift-q}
q_{t}(x,t)+u(x,t)-2iu(x,t)u(-x,-t)q(x,t)=0,
\end{equation}
it implies that
\begin{equation}
-q_{t}(-x,-t)+u(-x,-t)-2iu(-x,-t)u(x,t)q(-x,-t)=0,
\end{equation}
which then, together with \eqref{non-FL-shift}, gives rise to a conservation law
\begin{equation}\label{con-qq}
(q(x,t) q(-x,-t))_t=(u(x,t)u(-x,-t))_x,
\end{equation}
provided $u(x,t)u(-x,-t) \to 0$ as $x\to \pm \infty$.
To connect the nonlocal MTM \eqref{non-cMTM} and the nonlocal FL equation \eqref{non-FL-shift},
we consider transformations
\begin{equation}\label{qu-non}
q=q_1 \, e^{-i \bar{\beta}},~~~ u=-i q_3 \, e^{-i \bar{\beta}},
\end{equation}
where
\begin{equation}\label{beta-b}
\bar\beta=\partial_x^{-1} q(x,t)q(-x,-t)
\end{equation}
and $\partial_x^{-1}$ is defined as in \eqref{beta}.
Note that with this setting it follows that
$\bar\beta(x,t)=-\bar\beta(-x,-t)$ and
\[ u(-x,-t)=-i q_3(-x,-t) \, e^{-i \bar{\beta}(x,t)}.\]
Then, substituting \eqref{qu-non} into equation \eqref{non-FL-shift-q}
and making use of the conservation law \eqref{con-qq} yield
\begin{align*}
& q_{t}(x,t)+u(x,t)-2iu(x,t)u(-x,-t)q(x,t)\\
=& e^{-i \bar{\beta}(x,t)}\Big[q_{1,t}(x,t)-iq_1(x,t) u(x,t)u(-x,-t)- i q_3(x,t)\\
 & ~~~~~~~~~~~ -2 i q_1(x,t)q_3(x,t) q_3(-x,-t)\Big]\\
=& -i e^{-i \bar{\beta}(x,t)}\Big[iq_{1,t}(x,t)+ q_3(x,t)+ q_1(x,t) q_3(x,t)q_3(-x,-t)\Bigr],
\end{align*}
which gives rise to \eqref{non-cMTMa}. In addition, from $u=-i q_3 \, e^{-i \bar{\beta}}$ we have
\[u_x=e^{-i \bar{\beta}}(-i q_{3,x} - q_3 q(x,t)q(-x,-t)).\]
After replacing $u_x$ with $q$ and replacing $q$ and $u$ using \eqref{qu-non},
we immediately get equation \eqref{non-cMTMb}.

One can check that the relations \eqref{trans 3} and \eqref{qu-non} and the reductions
\eqref{trans b} and \eqref{red-n} are consistent, as in Fig.\ref{F1}.

\begin{figure}[H]
\centering
\begin{tikzpicture}
\draw[<-] (0,2.5) -- (3,2.5);
\node[above] at (1.5,2.5) {  \eqref{trans 3}};
\node[right] at (3,2.5) {\eqref{pKN-1}};
\node[left] at (0,2.5) {\eqref{31pKN-1}};
\draw[->] (-0.5,2) -- (-0.5,0.5);
\draw[->] (3.5,2) -- (3.5,0.5);
\draw[<-] (0,0) -- (3,0);
\node[above] at (1.5,0) {\eqref{qu-non}};
\node[right] at (3,0) {\eqref{non-FL-shift}};
\node[left] at (0,0) {\eqref{non-cMTM}};
\node[right] at (3.5,1.25) {\eqref{red-n}};
\node[right] at (-0.5,1.25) {\eqref{trans b}};
\end{tikzpicture}
\caption{\label{F1} Consistency of transformations and reductions.}
\end{figure}
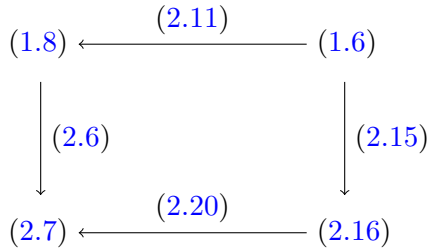

Such a consistency leads us to the following.
\begin{proposition}\label{prop-2-2}
If $u$ is a solution of the nonlocal FL equation  \eqref{non-FL-shift} and
$u(x,t)u(-x,-t) \to 0$ as $x\to \pm \infty$,
then
\begin{equation}\label{qu-non-p}
q_1=q \, e^{i \bar{\beta}}=u_x \, e^{i \bar{\beta}}, ~~ q_3= iu \, e^{i \bar{\beta}},
\end{equation}
where $\bar\beta$ is defined in \eqref{beta-b},
satisfy the nonlocal MTM \eqref{non-cMTM}.
\end{proposition}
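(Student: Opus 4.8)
The plan is essentially to read the proposition off the calculation carried out in the paragraph preceding its statement, the only points needing care being the well-definedness and the two derivative identities for the potential $\bar\beta$. First I would take $u$ solving the nonlocal FL equation \eqref{non-FL-shift}, set $q=u_x$ so that the first-order form \eqref{non-FL-shift-q} holds, and note that the hypothesis $u(x,t)u(-x,-t)\to 0$ as $x\to\pm\infty$ makes the integral defining $\bar\beta=\partial_x^{-1}\big(q(x,t)q(-x,-t)\big)$ in \eqref{beta-b} convergent. I would then record two facts. \emph{(i)} The antisymmetry $\bar\beta(-x,-t)=-\bar\beta(x,t)$: the integrand $q(x,t)q(-x,-t)$ is even under $(x,t)\mapsto(-x,-t)$, while the symmetric (principal-value) operator $\partial^{-1}_x$ of \eqref{beta} satisfies $\partial^{-1}_{-x}=-\partial^{-1}_x$, and the two effects combine to flip the sign of $\bar\beta$. \emph{(ii)} The time-derivative identity $\bar\beta_t=u(x,t)u(-x,-t)$: differentiating under the integral sign and inserting the conservation law \eqref{con-qq}, the $t$-derivative of the integrand becomes an exact $x$-derivative, and the decay assumption kills the boundary terms. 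The spatial identity $\bar\beta_x=q(x,t)q(-x,-t)$ is immediate from the definition.

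With (i) and (ii) in hand I would introduce $q_1=q\,e^{i\bar\beta}=u_x\,e^{i\bar\beta}$ and $q_3=iu\,e^{i\bar\beta}$ as in \eqref{qu-non-p}, which is just \eqref{qu-non} solved for $q_1,q_3$; by (i) the nonlocal term $u(-x,-t)$ rewrites through $q_3(-x,-t)$ and $e^{i\bar\beta(x,t)}$. Substituting these into \eqref{non-FL-shift-q}, expanding $q_t=\big(q_{1,t}-i\bar\beta_t q_1\big)e^{-i\bar\beta}$ and replacing $\bar\beta_t$ by (ii), is precisely the calculation displayed just before the statement: the left-hand side factors as $-i\,e^{-i\bar\beta(x,t)}\big[iq_{1,t}+q_3+q_1q_3q_3(-x,-t)\big]$, and since it vanishes and $e^{-i\bar\beta}\neq0$ we obtain \eqref{non-cMTMa}. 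For the second equation I would differentiate $u=-iq_3e^{-i\bar\beta}$ in $x$, use $\bar\beta_x=q(x,t)q(-x,-t)$ and $u_x=q$ to get $q=e^{-i\bar\beta}\big(-iq_{3,x}-q_3\,q(x,t)q(-x,-t)\big)$, then re-express $q$ and $q(x,t)q(-x,-t)$ through $q_1$; cancelling the exponential yields \eqref{non-cMTMb}. Relabelling $q_1=w$, $q_3=z$ gives the nonlocal MTM in the form \eqref{non-cMTM1}, and the diagram in Fig.~\ref{F1}---the compatibility of the reduction \eqref{trans b} with \eqref{red-n} under \eqref{trans 3} and \eqref{qu-non}---serves as an independent check that the output lands in the reduced system rather than only in the unreduced one \eqref{31pKN-1}.

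The step I expect to be the main obstacle is (ii): one must show that $\bar\beta_t$ equals $u(x,t)u(-x,-t)$ \emph{exactly}, not merely up to an additive function of $t$, and this is exactly where both the conservation law \eqref{con-qq} and the vanishing of $u(x,t)u(-x,-t)$ at $x=\pm\infty$ are indispensable; the symmetric normalisation of $\partial^{-1}_x$ is likewise what makes the clean antisymmetry in (i) available. Once these are secured, the rest is the bookkeeping already carried out before the statement, so the argument is short.
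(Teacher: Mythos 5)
Your proposal follows exactly the paper's route (the computation displayed just before the statement), and your two auxiliary facts are the right ones to isolate: \emph{(i)} $\bar\beta(-x,-t)=-\bar\beta(x,t)$ and \emph{(ii)} $\bar\beta_t=u(x,t)u(-x,-t)$ are both correct (note the \emph{plus} sign in (ii), forced by \eqref{con-qq}, in contrast to $\beta_t=-uv$ in the local case). The gap is that the factorization you then assert does not actually close under (i) and (ii), and since you defer to the paper's display rather than carrying out the bookkeeping, the failure goes unnoticed. Concretely: from (i), $u(-x,-t)=-iq_3(-x,-t)\,e^{+i\bar\beta(x,t)}$, hence $u(x,t)u(-x,-t)=-q_3(x,t)q_3(-x,-t)$; the term $-i\bar\beta_t q_1$ coming from $q_t$ therefore contributes $+i\,q_1q_3q_3(-x,-t)$ inside the bracket, while the cubic term $-2iu(x,t)u(-x,-t)q(x,t)$ contributes $+2i\,q_1q_3q_3(-x,-t)$. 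The left-hand side of \eqref{non-FL-shift-q} thus factors as $-i\,e^{-i\bar\beta}\bigl[iq_{1,t}+q_3-3q_1q_3q_3(-x,-t)\bigr]$, not $-i\,e^{-i\bar\beta}\bigl[iq_{1,t}+q_3+q_1q_3q_3(-x,-t)\bigr]$; no choice $\delta=\pm1$ rescues \eqref{non-cMTMa}, whereas your derivation of \eqref{non-cMTMb} closes with $\delta=+1$. The two halves of \eqref{non-cMTM} would need different $\delta$'s, so the map \eqref{qu-non-p} as written is not a solution map. (The same sign slip is present in the paper's own display, where $u(-x,-t)=-iq_3(-x,-t)e^{-i\bar\beta(x,t)}$ contradicts (i); you have faithfully reproduced an argument that does not verify.)

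The repair is a sign in the potential. Under the reduction \eqref{red-n} one has $r=v_x=\partial_x[u(-x,-t)]=-q(-x,-t)$, so the $\beta=\partial_x^{-1}(qr)$ of \eqref{trans 3}--\eqref{beta} restricts to $-\bar\beta$ with $\bar\beta$ as in \eqref{beta-b}, not to $+\bar\beta$. Taking $q_1=u_x\,e^{-i\bar\beta}$ and $q_3=iu\,e^{-i\bar\beta}$ (equivalently, redefining $\bar\beta$ with an overall minus sign), both equations of \eqref{non-cMTM} are satisfied, with $\delta=-1$; this is also what the consistency diagram in Fig.~\ref{F1} actually produces once $r=-q(-x,-t)$ is inserted. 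Your instinct that step (ii) is where things could go wrong was exactly right --- the positive sign there, combined with the unmodified exponent, is precisely what breaks the first equation --- but a blind proof must push the substitution through to the end rather than assert the factored form.
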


Thus, to get solutions of the nonlocal MTM \eqref{non-cMTM},
we can either solve the unreduced MTM \eqref{31pKN-1} and then implement reductions,
or, making use of the connection \eqref{qu-non-p}, get   $q_1$ and $q_2$
from the solutions of the nonlocal FL equation.
We will focus on the first way in the next section.

\begin{remark}\label{Rem-2.1}
The transformation \eqref{qu} between the MTM and the FL equation
relies on the conservation law $(|q|^2)_t=(|u|^2)_x$
and the significance of $\beta=\partial^{-1}_x |q|^2$,
both of which require that $|u|\to 0$ as $x\to \pm \infty$.
For those solutions of the FL equation whose envelopes do not decrease
(e.g. $|u|$ with a periodic background, see Section 4.2.1 of \cite{LWZ-2022}
and Fig.\ref{F-2}(a), Fig.\ref{F-3}, Fig.\ref{F-4}(c,d) and Fig.\ref{F-5}(b) in this paper),
the transformation cannot be used to to generate solutions for the MTM.
Same remarks we have for the nonlocal MTM and nonlocal FL equation.
\end{remark}

\section{Solutions to the nonlocal  massive Thirring model}\label{sec-3}

In this section, we derive solutions for the nonlocal MTM \eqref{non-cMTM}
using the bilinearisation-reduction approach. We will bilinearize the unreduced MTM \eqref{31pKN-1},
present its double Wronskian solutions, and then implement reductions to get solutions for
the reduced equation \eqref{non-cMTM}.

\subsection{Bilinearisation-reduction approach}\label{sec-3-1}

\subsubsection{Double Wronskians }\label{sec-3-1-1}

We employ the notation $|\widehat{M-1};\widehat{N-1}|$ introduced in Ref.\cite{Nimmo-NLS-1983} to denote a $(M+N)\times(M+N)$ double Wronskian:
\begin{equation}\label{W}
|\widehat{M-1}; \widehat{N-1}|=|\W\phi^{(M-1)}; \W\psi^{(N-1)}|=|\phi,\partial_x \phi,\ldots,\partial_x^{M-1} \phi;
\psi,\partial_x \psi,\ldots,\partial_x^{N-1} \psi|,
\end{equation}
where $\phi$ and $ \psi $ are $(M+N)$-th order column vectors in the form
\begin{equation}
\phi=(\phi_1,\phi_2,\ldots,\phi_{M+N})^T,\ \psi=(\psi_1,\psi_2,\ldots,\psi_{M+N})^T,
\end{equation}
and the elements $\phi_j$ and $\psi_j$ are $C^{\infty}$ functions of $(x,t)$.
We also employ the short hands
\[W^{[M,N]}_{}(\phi; \psi)=|\W{M-1}; \W{N-1}|=|0,1,\cdots,M-1; 0,1,\cdots,N-1|
\]
and
\begin{equation}
\begin{array}{l}
|\t N; \W {M-1}|=|1,2,\cdots,N; 0,1,\cdots,M-1|,\\
|\W N; \t{M-1}|=|0,1,\cdots,N; 1,2,\cdots,M-1|,\\
|\b N; \W M|=|2,3,\cdots,N; 0,1,\cdots,M|,\\
|\t N; \t M|=|1,2,\cdots,N; 1,2,\cdots,M|.
\end{array}
\end{equation}

\subsubsection{Solutions of the unreduced equation \eqref{31pKN-1}}\label{sec-3-1-2}

By the following transformation
\begin{equation} \label{trans 5}
q_1=\frac{\t h}{f},\\
~~q_2=\frac{\b h}{s},\\
~~q_3=i\frac{g}{s},\\
~~q_4=-i\frac{h}{f},
\end{equation}
the unreduced MTM \eqref{31pKN-1} can be bilinearized as (cf.\cite{Pas-Spr-2023})
\begin{subequations}\label{4MTM-d}
\begin{align}
&B_1= D_{t}\t h\cdot s+gf=0, \label{23a}\\
&B_2= D_{t}\b h\cdot f+hs=0, \label{23b}\\
&B_3= D_{t}f\cdot s+igh=0, \label{23c}\\
&B_4= D_{x}g\cdot f-\t hs=0, \label{23d}\\
&B_5= D_{x}h\cdot s-\b hf=0, \label{23e}\\
&B_6= D_{x}f\cdot s-i\t h\b h=0, \label{23f}
\end{align}
\end{subequations}
where $D$ is  Hirota's bilinear operator $D$  defined as \cite{Hirota-1974}
\begin{equation*}
 D_x^m D_y^n f(x,y)\cdot g(x,y)=(\partial_x- \partial_{x'})^m(\partial_y- \partial_{y'})^n f(x,y)g(x',y')|_{x'=x,y'=y}.
\end{equation*}
In fact, by direct calculation one can find that
\begin{align*}
&\eqref{KN11}=\frac{i}{sf}B_1-\frac{i\t h}{s f^2}B_2=0,  \\
&\eqref{KN21}=\frac{i}{sf}B_3-\frac{i\b h}{s^2 f}B_2=0,  \\
&\eqref{KN31}=\frac{-1}{sf}B_4-\frac{g}{s^2 f }B_6=0,  \\
&\eqref{KN41}=\frac{1}{sf}B_5-\frac{i\b h}{s f^2}B_6=0,
\end{align*}
which confirms the bilinear forms in \eqref{4MTM-d}.
Note that two-soliton solutions of \eqref{31pKN-1} were obtained via the above bilinear form in \cite{Pas-Spr-2023}.
In the following, we present a compact form of $N$-soliton solutions in terms of double Wronskian.

\begin{theorem}\label{Theo-3-1}
The bilinear equations \eqref{4MTM-d} admit double Wronskian solutions
\begin{subequations}\label{wronskian-0}
\begin{align}
& f=|\widetilde{N}; \widehat{M-1}|,
\quad g=\frac{1}{2(2i)^{N-1}}|A||\widehat{N}; \widetilde{M-1}|,
\quad h=-i \,(2i)^{N-1}|A|^{-1}|\overline{N}; \widehat{M}|,\\
& s=|\widetilde{N}; \widetilde{M}|,
\quad \t h=-\frac{1}{(2i)^{N-1}}|A||\widehat{N}; \widehat{M-2}|,
\quad \b h=-\,(2i)^{N}|A|^{-1}|\b{N}; \widetilde{M+1}|,
\end{align}
\end{subequations}
where the elementary column vectors $\phi$ and $\psi$ satisfy
\begin{subequations}\label{wron-cond-x}
\begin{align}
& \phi_x=\frac{i}{2}A^{2}\phi,\quad \phi_t=-\frac{1}{4}\partial^{-1}_x \phi,\label{wron-cond-x-a}\\
& \psi_x=-\frac{i}{2}A^{2}\psi,\quad \psi_t=-\frac{1}{4}\partial^{-1}_x \psi,\label{wron-cond-x-b}
\end{align}
\end{subequations}
Here $A$ is an arbitrary invertible constant matrix in $\mathbb{C}_{(N+M)\times(N+M)}$.
Explicit forms of $\phi$ and $\psi$  satisfying \eqref{wron-cond-x} are
\begin{subequations}\label{phi-psi}
\begin{align}
& \phi=\exp\Bigl(\frac{i}{2}A^{2}x+\frac{i}{2}A^{-2}t\Bigr) C ,\label{phi}\\
& \psi=\exp\Bigl(-\frac{i}{2}A^{2}x-\frac{i}{2}A^{-2}t\Bigr) B,\label{psi}
\end{align}
\end{subequations}
where $B$ and $C$ are $(N+M)$-th order constant column vectors.
In addition, $A$ and any matrix to which is similar lead to same $\{q_j\}$ via the transformation \eqref{trans 5}.
\end{theorem}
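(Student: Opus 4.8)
The plan is to verify Theorem~\ref{Theo-3-1} by the Wronskian technique: to reduce each of the six bilinear equations in \eqref{4MTM-d} to a Pl\"ucker (quadratic determinantal) identity among double Wronskians. Two points come first and are easy. That \eqref{phi-psi} solves \eqref{wron-cond-x} is immediate for the $x$-equations by differentiating the matrix exponentials; for the $t$-equations one uses $\phi_x=\tfrac{i}{2}A^{2}\phi$ to get $\partial^{-1}_x\phi=-2iA^{-2}\phi$ (well defined since $A$ is invertible, the branch being fixed by the footnote convention on $\partial_x^{-1}$), whence $\phi_t=\tfrac{i}{2}A^{-2}\phi=-\tfrac14\partial^{-1}_x\phi$, and likewise for $\psi$. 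The similarity-invariance is also quick: replacing $A$ by $P^{-1}AP$ and accordingly $C$ by $P^{-1}C$, $B$ by $P^{-1}B$ turns $\phi$ into $P^{-1}\phi$ and $\psi$ into $P^{-1}\psi$, because $\exp\!\bigl(\tfrac i2(P^{-1}AP)^{2}x+\tfrac i2(P^{-1}AP)^{-2}t\bigr)=P^{-1}\exp\!\bigl(\tfrac i2A^{2}x+\tfrac i2A^{-2}t\bigr)P$; hence every $(N{+}M)\times(N{+}M)$ double Wronskian acquires the single factor $|P|^{-1}$ while $|A|$ is unchanged, so each of $f,g,h,s,\t h,\b h$ scales by $|P|^{-1}$, which cancels in the ratios \eqref{trans 5}.

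For the main verification the structural fact is that, by \eqref{wron-cond-x}, differentiating a column of a double Wronskian in $x$ raises its derivative index, while, using $A^{-2}\partial_x^{j}\phi=\tfrac i2\partial_x^{j-1}\phi$ together with the analogous relation for $\psi$, differentiating in $t$ lowers the index. Since a column that after shifting coincides with one already present kills the determinant, $\partial_x$ (resp.\ $\partial_t$) applied to any of our double Wronskians is a sum of at most two double Wronskians, obtained by replacing the top (resp.\ bottom) column of each of the two blocks by its next higher (resp.\ lower) $x$-derivative. I would first record these shift formulas for $f,g,h,s,\t h,\b h$ and their $x$- and $t$-derivatives in the index notation $|a_1,\dots,a_p;b_1,\dots,b_r|$, where the determinant has $\phi$-columns $\partial_x^{a_i}\phi$ and $\psi$-columns $\partial_x^{b_j}\psi$.

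Substituting these into $B_1,\dots,B_6$ and collecting terms then produces, once the constant prefactors are accounted for — the powers of $2i$, the $\tfrac12$, and the $|A|^{\pm1}$ in \eqref{wronskian-0} are exactly what makes these constants cancel — an alternating sum of the shape
\[
|\cdots a,b\cdots|\,|\cdots c,d\cdots|-|\cdots a,c\cdots|\,|\cdots b,d\cdots|+|\cdots a,d\cdots|\,|\cdots b,c\cdots|,
\]
which vanishes identically. This is the classical Pl\"ucker relation, provable by a Laplace expansion along $N{+}M$ rows of a suitably bordered $2(N{+}M)\times 2(N{+}M)$ determinant that contains a repeated block of columns. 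I expect this matching step to be the main obstacle: one must track precisely which columns get shifted in each $B_i$, carry the signs incurred by reordering columns into canonical position, and identify the correct auxiliary determinant; everything else is bookkeeping. As a cross-check, and an alternative route that sidesteps the combinatorics, note that each $B_i$, written through the $\tau$-functions, is polynomial in the entries of $A$, $B$, $C$, so it suffices to establish it for diagonalizable $A$, and then, by the similarity invariance just proved, for \emph{diagonal} $A$; in that case every double Wronskian factors into elementary exponential products and the six identities collapse to the familiar $N$-soliton computation.
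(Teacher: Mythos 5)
Your overall strategy --- compute the $x$- and $t$-derivatives of $f,g,h,s,\widetilde h,\overline h$ from the dispersion relations \eqref{wron-cond-x} and reduce each $B_i$ to a three-term Pl\"ucker identity --- is the same one the paper uses in Appendix \ref{app-A}, and your treatment of the explicit solutions \eqref{phi-psi} and of the similarity invariance is correct. The gap is in the central claim that substituting the derivative formulas into $B_i$ and collecting terms ``produces an alternating sum of the shape'' of a Pl\"ucker relation. It does not, not directly. Take $B_1$: both $\widetilde h_t$ and $s_t$ are sums of \emph{two} shifted Wronskians, so $D_t\widetilde h\cdot s+gf$ is a sum of \emph{five} products,
\begin{align*}
&\tfrac12\,|\widetilde{N};\widetilde{M}|\,\bigl(|-1,\widetilde{N};\widehat{M-2}|+|\widehat{N};-1,\widetilde{M-2}|\bigr)
-\tfrac12\,|\widehat{N};\widehat{M-2}|\,\bigl(|0,\overline{N};\widetilde{M}|+|\widetilde{N};0,\overline{M}|\bigr)\\
&\qquad+|\widehat{N};\widetilde{M-1}|\,|\widetilde{N};\widehat{M-1}|
\end{align*}
up to an overall prefactor, and no regrouping of these five terms is a Pl\"ucker identity as it stands. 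What is needed is an additional identity of trace type (Lemma \ref{lemma 2}, applied with the operator $-2i\,A^{-2}$, which lowers each derivative index by one): expanding $-2i\,\mathrm{Tr}(A^{-2})\cdot|\widetilde{N};\widetilde{M}|\,|\widehat{N};\widehat{M-2}|$ in two ways gives
\[
|\widetilde{N};\widetilde{M}|\bigl(|-1,\widetilde{N};\widehat{M-2}|-|\widehat{N};-1,\widetilde{M-2}|\bigr)
=|\widehat{N};\widehat{M-2}|\bigl(|0,\overline{N};\widetilde{M}|-|\widetilde{N};0,\overline{M}|\bigr),
\]
which collapses the four ``half'' terms to two; only after that, and after pulling $A^{-2}$ through all the columns of one block of the surviving factors (this is where the compensating powers of $|A|$ and $2i$ in \eqref{wronskian-0} actually get used), does one reach a genuine three-term Pl\"ucker sum to which Lemma \ref{lemma 1} applies. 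Your plan contains no counterpart of this lemma, and the obstacle you flag (signs and the bordered determinant) is not where the real difficulty lies.

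Your fallback --- reduce to diagonal $A$ via similarity invariance and density of diagonalizable matrices, then verify directly --- is legitimate in principle, since each $B_i$ rescales by $|P|^{-2}$ under the similarity and depends continuously on $A$ over the invertible matrices. But ``the familiar $N$-soliton computation'' for this particular six-equation bilinear system is itself a nontrivial inductive identity that is not in the literature beyond $N=2$ (cf.\ \cite{Pas-Spr-2023}); as written, this route replaces one unproved combinatorial identity by another.
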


The proof can be found in Appendix \ref{app-A}.

\subsubsection{Reductions}\label{sec-3-1-3}

We now derive solutions for the nonlocal MTM \eqref{non-cMTM}
by implementing reductions on the solutions we have obtained in Sec.\ref{sec-3-1-2}.
Such a reduction technique was developed in \cite{ChenDLZ-SAPM-2018,ChenZ-AML-2018,Deng-AMC-2018}
and has proved effective in deriving solutions for the nonlocal integrable equations
(see, e.g. \cite{ShiY-ND-2019,ZSL-2020,LWZ-ROMP-2022,WW-CNSNS-2022,ZLD-2023}
and a review \cite{ZDJ-review-2023}).

Let us present reductions of double Wronskians for getting solutions of the nonlocal MTM \eqref{non-cMTM}.

\begin{theorem}\label{Theo-3-3}

For the double Wronskians given in \eqref{wronskian-0} with
column vectors $\phi$ and $\psi$ defined in \eqref{phi-psi},
we impose a constraint $M=N$ and constraints on $\{A, B, C\}$ such that
\begin{subequations}\label{ASn}
\begin{align}
& A^2=\delta S^2,~~ \delta=\pm 1, \label{ASn1}\\
& B=SC,
\end{align}
\end{subequations}
where $S$ is a constant matrix in $\mathbb{C}_{2N\times 2N}$.
These settings yield a relation
\begin{equation}\label{MTM-psiSphic-n}
 \psi(x,t)=S\phi(-x,-t),
\end{equation}
and consequently the double Wronskians in \eqref{wronskian-0} satisfy
\begin{subequations}\label{fgh-n}
\begin{align}
& f(-x,-t)=  (-2i)^{N}\delta^{N}|S|^{-1} s(x,t),\\
& g(-x,-t)=  (-1)^{1-N}(2i)^{-N}\delta^{N+1}|S| h(x,t), \\
& \t h(-x,-t) = (-2i)^{N}\delta^{N+1}|S|^{-1}\b h(x,t). \label{fgh-n3}
\end{align}
\end{subequations}
Then, through the transformation \eqref{trans 5} we have
\begin{align}\label{q1-4n}
q_1(-x,-t)=\delta q_2(x,t), ~~ q_3(-x,-t)=\delta q_4(x,t), ~~~~ \delta =\pm 1,
\end{align}
which  agree with the nonlocal reduction \eqref{trans b}.

\end{theorem}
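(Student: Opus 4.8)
The plan is to proceed in three stages: (i) derive the pointwise vector identity \eqref{MTM-psiSphic-n}; (ii) transport it through each of the six double Wronskians in \eqref{wronskian-0} to obtain \eqref{fgh-n}; (iii) read off \eqref{q1-4n} from the transformation \eqref{trans 5}. For stage~(i) I would use that, since $\delta=\pm1$, the constraint \eqref{ASn1} gives $S^{2}=\delta A^{2}$ and $S^{-2}=\delta A^{-2}$, so $S$ commutes with $A^{2}$ and $A^{-2}$, hence with the matrix exponentials in \eqref{phi-psi} (by the similarity-invariance in Theorem~\ref{Theo-3-1} one may even take $A$ in a normal form in which such an $S$ is manifest). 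Then, using \eqref{phi} and the second constraint $B=SC$,
\begin{align*}
S\phi(-x,-t)&=S\,\exp\!\Bigl(-\tfrac{i}{2}A^{2}x-\tfrac{i}{2}A^{-2}t\Bigr)C\\
&=\exp\!\Bigl(-\tfrac{i}{2}A^{2}x-\tfrac{i}{2}A^{-2}t\Bigr)SC=\exp\!\Bigl(-\tfrac{i}{2}A^{2}x-\tfrac{i}{2}A^{-2}t\Bigr)B=\psi(x,t),
\end{align*}
which is \eqref{MTM-psiSphic-n}; equivalently $\phi(-x,-t)=S^{-1}\psi(x,t)$ and, replacing $(x,t)$ by $(-x,-t)$, $\psi(-x,-t)=S\phi(x,t)$.

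For stage~(ii), fix $M=N$ and treat each determinant in \eqref{wronskian-0} in turn. Substituting $(x,t)\mapsto(-x,-t)$ column by column, a $k$-th $x$-derivative column picks up a sign $(-1)^{k}$, while stage~(i) converts $(\partial_x^{k}\phi)(-x,-t)=(-1)^{k}S^{-1}\partial_x^{k}\psi(x,t)$ and $(\partial_x^{k}\psi)(-x,-t)=(-1)^{k}S\partial_x^{k}\phi(x,t)$. After extracting all the $(-1)^{k}$, the former $\phi$-block of columns is $S^{-1}$ times $\psi$-derivatives and the former $\psi$-block is $S$ times $\phi$-derivatives; factoring $S^{-1}$ out of the \emph{whole} matrix contributes a scalar $|S|^{-1}$ and leaves a factor $S^{2}$ on the $\phi$-block. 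Now $S^{2}=\delta A^{2}$ together with $\phi_x=\tfrac{i}{2}A^{2}\phi$ (from \eqref{wron-cond-x-a}) gives $S^{2}\partial_x^{k}\phi=-2i\delta\,\partial_x^{k+1}\phi$, which raises the $\phi$-derivative indices by one and contributes a power of $(-2i\delta)$; a block swap of the $N$-column and $M$-column blocks (sign $(-1)^{NM}$) then brings the determinant into exactly the shape of $s$, of $h$, and of $\b h$ appearing in \eqref{wronskian-0}. For $\t h$ one further step is needed: multiplying the post-swap matrix on the left by $A^{2}$ and using $\phi_x=\tfrac i2A^{2}\phi$, $\psi_x=-\tfrac i2A^{2}\psi$ (from \eqref{wron-cond-x}) turns $|1,\dots,M-1;0,\dots,N|$ into $|2,\dots,M;1,\dots,N+1|=|\b N;\t{M+1}|$, i.e.\ the Wronskian inside $\b h$. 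Collecting all scalar prefactors with $M=N$, $\delta^{2}=1$, $(-1)^{N^{2}}=(-1)^{N}$, and — the one algebraic fact that makes the $|A|$'s disappear — $|A|^{2}=|A^{2}|=|\delta S^{2}|=|S|^{2}$, one matches the constants in \eqref{fgh-n} exactly.

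For stage~(iii), I would substitute \eqref{fgh-n} into \eqref{trans 5}:
\[
q_1(-x,-t)=\frac{\t h(-x,-t)}{f(-x,-t)}=\frac{(-2i)^{N}\delta^{N+1}|S|^{-1}\b h(x,t)}{(-2i)^{N}\delta^{N}|S|^{-1}s(x,t)}=\delta\,\frac{\b h(x,t)}{s(x,t)}=\delta q_2(x,t),
\]
and, reading the first relation of \eqref{fgh-n} backwards to get $s(-x,-t)=(-2i)^{-N}\delta^{N}|S|\,f(x,t)$,
\[
q_3(-x,-t)=i\,\frac{g(-x,-t)}{s(-x,-t)}=\delta\Bigl(-i\,\frac{h(x,t)}{f(x,t)}\Bigr)=\delta q_4(x,t)
\]
after cancelling the powers of $i$ and of $2i$; these are \eqref{q1-4n}, which coincide with the nonlocal reduction \eqref{trans b}. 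The hard part is stage~(ii): keeping the four sources of scalar factors (the $(-1)^{k}$ from $x\mapsto-x$, the block-swap sign, the powers of $-2i\delta$ from the index shifts, and the $|S|^{\pm1}$/$|A|^{\pm1}$ determinant factors) simultaneously under control, and spotting that $\t h$ needs the extra left-multiplication by $A^{2}$ to land precisely on the double Wronskian inside $\b h$. The constraint $M=N$ is exactly what forces every block-swap sign and every derivative-index range to align, so it must be invoked at each of those points.
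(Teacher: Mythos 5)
Your proposal is correct and follows essentially the same route as the paper: establish $\psi(x,t)=S\phi(-x,-t)$ from $B=SC$ and the commutativity of $S$ with $A^{2}$ (a consequence of $A^{2}=\delta S^{2}$), then track the signs from $x\mapsto-x$, the determinant factors $|S|^{\pm1}$ together with $|A|^{2}=|S|^{2}$, the derivative-index shifts coming from $S^{2}\partial_x^{k}\phi=-2i\delta\,\partial_x^{k+1}\phi$, and the block-swap signs, and finally read off \eqref{q1-4n} from \eqref{trans 5}. The only cosmetic difference is that the paper first rewrites each double Wronskian entirely in terms of $\phi$ before substituting $(x,t)\mapsto(-x,-t)$, whereas you substitute directly using both $\phi(-x,-t)=S^{-1}\psi(x,t)$ and $\psi(-x,-t)=S\phi(x,t)$; the bookkeeping is the same either way.
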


\begin{proof}

First, from the definition \eqref{phi-psi} and constraint \eqref{ASn}, we have
\begin{align*}
\psi(x,t)&= \exp{\Bigl(-\frac{i}{2}A^2x-\frac{i}{2}A^{-2} t\Bigr)}B \\
&=  \exp{\Bigl(-\frac{i}{2}(SA^2S^{-1})x-\frac{i}{2}(SA^{-2}S^{-1}) t\Bigr)}S C \\
&=  S\exp{\Bigl(-\frac{i}{2}A^2x-\frac{i}{2}A^{-2} t\Bigr)}C \\
&=  S\exp{\Bigl(\frac{i}{2}A^2(-x)+\frac{i}{2}A^{-2}(-t)\Bigr)}C \\
&= S \phi(-x,-t),
\end{align*}
which is \eqref{MTM-psiSphic-n}.
Then, we can express the double Wronskians in \eqref{wronskian-0} (with $N=M$)
in terms of $\phi$ as the following:
\begin{subequations}
\begin{align}
 &f(x,t)= |\widetilde{N}; \widehat{N-1}|
 =\Big(\frac{i}{2}\Big)^{N}|A^{2}\W{\phi^{[N-1]}_{}}(x,t)_{[x]}; S \W{\phi^{[N-1]}_{}}(-x,-t)_{[x]}|,\\
 &s(x,t)= |\widetilde{N}; \widetilde{N}|
  =(-1)^N \Big(\frac{i}{2}\Big)^{2N}|A|^{2}|\W{\phi^{[N-1]}_{}}(x,t)_{[x]};
  S \W{{\phi^{[N-1]}_{}}}(-x, -t)_{[x]}|,\\
 & \t h(x,t)=-\frac{1}{(2i)^{N-1}}|A||\widehat{N}; \widehat{N-2}|
 =-\frac{1}{(2i)^{N-1}}|A| |\W{\phi^{[N]}_{}}(x,t)_{[x]}; S \W{{\phi^{[N-2]}_{}}}(-x,-t)_{[x]}|,\\
& \b h(x,t)=-\,(2i)^{N}|A|^{-1}|\b{N}; \widetilde{N+1}|
 =\Big(\frac{i}{2}\Big)^{2N-1} |A||A^{2}\W{\phi^{[N-2]}_{}}(x,t)_{[x]};
S \W{{\phi^{[N]}_{}}}(-x,-t)_{[x]}|,\\
& g(x,t)=\frac{1}{2(2i)^{N-1}}|A||\widehat{N}; \widetilde{N-1}|
 =\frac{1}{2(2i)^{2N-2}}|A||\W{\phi^{[N]}_{}}(x,t)_{[x]};
 A^{2}S \W{{\phi^{[N-2]}_{}}}(-x,-t)_{[x]}|,\\
& h(x,t)=-i \,(2i)^{N-1}|A|^{-1}|\overline{N}; \widehat{N}|
 =i (-1)^{N}  \,\Big(\frac{i}{2}\Big)^{N-1}|A|^{-1}|A^{4}\W{\phi^{[N-2]}_{}}(x,t)_{[x]}; S\W{\phi^{[N]}_{}}(-x,-t)_{[x]}|,
\end{align}
\end{subequations}
where for convenience we have introduced short hand
\begin{align*}
  &\widehat{\phi^{[N]}}(a(x),b(t))_{[c(x)]} = \Bigl(\phi(a(x),b(t)),\partial_{c(x)}\varphi(a(x),b(t)),\partial_{c(x)}^2\varphi(a(x),b(t)),
  \cdots,\partial_{c(x)}^{N}\varphi(a(x),b(t))\Bigr)
\end{align*}
where $a(x)$ and $b(t)$ are functions of $x$ and $t$ respectively.
We find that
\begin{align*}
\t h(-x,-t)&
=-\frac{1}{(2i)^{N-1}}|A||\W{\phi^{[N]}_{}}(-x,-t)_{[-x]}; S \W{{\phi^{[N-2]}_{}}}(x,t)_{[-x]}|\\
& =\frac{1}{(2i)^{N-1}}|A||\W{\phi^{[N]}_{}}(-x,-t)_{[x]}; S \W{{\phi^{[N-2]}_{}}}(x,t)_{[x]}|\\
& =\frac{1}{(2i)^{N-1}}|A||S||S^{-1}\W{\phi^{[N]}_{}}(-x,-t)_{[x]}; \W{{\phi^{[N-2]}_{}}}(x,t)_{[x]}|\\
& = \frac{1}{(2i)^{N-1}}|A||S||\delta A^{-2}S\W{\phi^{[N]}_{}}(-x,-t)_{[x]}; \W{{\phi^{[N-2]}_{}}}(x,t)_{[x]}|\\
& =\frac{1}{(2i)^{N-1}}|S||A|^{-1}\delta^{N+1}|S\W{\phi^{[N]}_{}}(-x,-t)_{[x]}; A^2\W{{\phi^{[N-2]}_{}}}(x,t)_{[x]}|\\
& =\frac{1}{(2i)^{N-1}}|S||A|^{-1}\delta^{N+1}(-1)^{N+1}|A^2\W{{\phi^{[N-2]}_{}}}(x,t)_{[x]}; S\W{\phi^{[N]}_{}}(-x,-t)_{[x]}|\\
& =(-2i)^{N}\delta^{N+1}|S|^{-1}\b h(x,t),
\end{align*}
which is  relation  \eqref{fgh-n3}.
We can derive the first two relations in \eqref{fgh-n} in a similar way,
or one can also refer to Sec.3.2.2 of \cite{LWZ-2022} for a similar procedure.
Then, the reduction relations in \eqref{q1-4n} follow from \eqref{trans 5} and \eqref{fgh-n},
and we complete the proof.

\end{proof}

\begin{corollary}

Solutions of the nonlocal MTM \eqref{non-cMTM} are given by
\begin{equation} \label{trans 7}
q_1=\frac{\t h}{f},\\
~~q_3=i\frac{g}{s}=- i \delta \frac{h(-x,-t)}{f(-x,-t)},
\end{equation} 
where
\begin{subequations}\label{th-g-f-s-non}
\begin{align}
 & f(x,t)
 =(-2i)^{-N}|A^{2}\W{\phi^{[N-1]}_{}}(x,t)_{[x]}; S \W{\phi^{[N-1]}_{}}(-x,-t)_{[x]}|,\\ 
 & \t h(x,t)
 =-(2i)^{1-N}|A||\W{\phi^{[N]}_{}}(x,t)_{[x]}; S \W{{\phi^{[N-2]}_{}}}(-x,-t)_{[x]}|,\\
 & h(x,t)
 =-i (2i)^{1-N}  \,|A|^{-1}|A^{4}\W{\phi^{[N-2]}_{}}(x,t)_{[x]}; S\W{\phi^{[N]}_{}}(-x,-t)_{[x]}|,\\
 &  g(x,t)=(-1)^{N-1} 2^{1-2N}\,  |A||\W{\phi^{[N]}_{}}(x,t)_{[x]};
 A^{2}S \W{{\phi^{[N-2]}_{}}}(-x,-t)_{[x]}|,\\
  &s(x,t)=2^{-2N}\,|A|^{2}|\W{\phi^{[N-1]}_{}}(x,t)_{[x]};
  S \W{{\phi^{[N-1]}_{}}}(-x, -t)_{[x]}|,
 \end{align}
\end{subequations}
$\phi$ is defined by \eqref{phi}, and $A$ and $S$ obey the constraint \eqref{ASn1}.

\end{corollary}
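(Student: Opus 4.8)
The plan is to chain together the structural results already established: Theorem \ref{Theo-3-3} provides the reduction relations $q_1(-x,-t)=\delta q_2(x,t)$ and $q_3(-x,-t)=\delta q_4(x,t)$ under the constraints $M=N$, $A^2=\delta S^2$, $B=SC$, so that the bilinear solution $\{q_j\}$ of the unreduced MTM \eqref{31pKN-1} supplied by Theorem \ref{Theo-3-1} automatically solves the nonlocal MTM \eqref{non-cMTM}. Hence the corollary is mostly a matter of (i) reading off $q_1$ and $q_3$ from the transformation \eqref{trans 5}, and (ii) rewriting the double Wronskians $f,\t h, h, g, s$ in the scaled/$\phi$-only form claimed in \eqref{th-g-f-s-non}, which is exactly the form already displayed inside the proof of Theorem \ref{Theo-3-3}.

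First I would invoke Theorem \ref{Theo-3-1} to assert that, with $\phi,\psi$ as in \eqref{phi-psi} and $A$ invertible, the functions in \eqref{wronskian-0} solve the bilinear system \eqref{4MTM-d}, and therefore $q_1=\t h/f$, $q_2=\b h/s$, $q_3=ig/s$, $q_4=-ih/f$ solve \eqref{31pKN-1}. Then I would impose $M=N$ together with \eqref{ASn1} and $B=SC$ and apply Theorem \ref{Theo-3-3}: this gives the relation $\psi(x,t)=S\phi(-x,-t)$ and the symmetry relations \eqref{fgh-n}, which in turn yield \eqref{q1-4n}, i.e. $q_1(-x,-t)=\delta q_2(x,t)$ and $q_3(-x,-t)=\delta q_4(x,t)$. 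By Proposition \ref{prop-2-1}/the discussion around \eqref{trans b} (or simply by direct substitution of \eqref{q1-4n} into \eqref{31pKN-1}), these reduced relations force $q_1,q_3$ to satisfy \eqref{non-cMTM} with $q_1=w$, $q_3=z$; this gives the first equality in \eqref{trans 7}.

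Next I would establish the alternative expression $q_3=ig/s=-i\delta\, h(-x,-t)/f(-x,-t)$. Using the first two relations of \eqref{fgh-n}, one has $h(x,t)=(-1)^{N-1}(2i)^{N}\delta^{N+1}|S|^{-1} g(-x,-t)$ and $s(x,t)=(-2i)^{N}\delta^{N}|S|^{-1} f(-x,-t)$; substituting $x\mapsto -x, t\mapsto -t$ in the first and dividing gives $g(x,t)/s(x,t) = -\delta\, h(-x,-t)/f(-x,-t)$ up to the scalar factors, which cancel because $\delta^{2}=1$ and the powers of $(-2i)$ and $|S|$ match. Finally I would record the rescaled $\phi$-only Wronskian formulas \eqref{th-g-f-s-non}: these are obtained from the expressions for $f,s,\t h,\b h,g,h$ in terms of $\phi$ derived line-by-line in the proof of Theorem \ref{Theo-3-3} (using $\psi=S\phi(-x,-t)$ and column operations that extract powers of $\tfrac{i}{2}$, $A$, $A^{2}$), after absorbing the overall nonzero constants — which drop out of the ratios $\t h/f$ and $g/s$ anyway, so one is free to normalise. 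Since only these ratios enter $q_1,q_3$, replacing $A$ by $S$ via $A^2=\delta S^2$ is legitimate, and one notes that $A^4=S^4$, consistent with the $A^4$ appearing in the formula for $h$.

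The only genuinely nontrivial point — and the place I would be most careful — is bookkeeping the constant prefactors and the powers of $i,\delta,|A|,|S|$ when converting the $\t N,\b N,\t M$ Wronskians of \eqref{wronskian-0} into the $\W{\phi^{[\cdot]}}(\pm x,\pm t)_{[x]}$ form, so that the stated normalisations in \eqref{th-g-f-s-non} are internally consistent and the two expressions for $q_3$ in \eqref{trans 7} literally coincide (not merely up to a constant). This is exactly the type of computation carried out explicitly for $\t h(-x,-t)$ in the proof of Theorem \ref{Theo-3-3}; the remaining cases $f,s,h,g$ follow by the same sequence of steps — pull the differential-operator relations $\phi_x=\tfrac{i}{2}A^2\phi$ through the columns, use multilinearity of the determinant to factor out $A^2$ (or $A^4$), swap the two Wronskian blocks picking up $(-1)^{(\cdot)}$, and use $|S||S^{-1}|=1$ together with $A^{-2}S=\delta^{-1}S^{-1}A^{2}\cdot\delta\cdots$ to relocate the $S$ and $A^{\pm2}$ factors — so no new idea is needed, only patience. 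Once that ledger balances, the corollary is immediate.
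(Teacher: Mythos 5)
Your proposal is correct and follows essentially the same route the paper intends: the corollary is read off from Theorem \ref{Theo-3-1} plus Theorem \ref{Theo-3-3} (the expressions \eqref{th-g-f-s-non} are exactly the $\phi$-only Wronskians displayed in the proof of Theorem \ref{Theo-3-3} with the constants $(i/2)^N=(-2i)^{-N}$, $(-1)^N(i/2)^{2N}=2^{-2N}$, etc.\ simplified, and I have checked that your prefactor ledger balances, including the second equality $ig/s=-i\delta\,h(-x,-t)/f(-x,-t)$, which also follows in one line from $q_3(x,t)=\delta q_4(-x,-t)$ in \eqref{q1-4n}). No gap.
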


\subsection{Explicit forms of $\phi$ and $\psi$ for the reduced equations}\label{sec-3-2}

Equation \eqref{ASn} is the constraints imposed on $A$ and $S$.
Solving them one can get explicit forms of $\phi$ and then get
explicit solutions for the nonlocal MTM.
Since \eqref{ASn} is the same as those for the nonlocal FL equations
(cf.\cite{LWZ-2022}), in the following we only list out main formulae.

\subsubsection{Case (1)}\label{sec-3-2-1}

Let $S=AT$  and to make \eqref{ASn} valid we assume that
\begin{equation}\label{TA2}
AT=TA,~~ T^2=\delta \mathbf{I}_{2N}^{},
\end{equation}
and assume $A$ and $T$ to be
block matrices
\begin{align}
T=\left( \begin{array}{cc} T_1 & T_2 \\T_3 & T_4 \\ \end{array}\right),~~
A=\left(\begin{array}{cc} K_1 & \mathbf{0}_N \\ \mathbf{0}_N & K_4 \\\end{array} \right),
\label{TA-block}
\end{align}
where $T_i$ and $K_i$ are $N\times N$ matrices.
A special solution of \eqref{TA2} can be given as
\begin{equation}\label{TA3}
 T_{1}=-T_{4}=\sqrt{\delta}\,\mathbf{I}_{N}, ~T_{2}=T_{3}=\mathbf{0}_{N},~
 K_{1}=\mathbf{K}_N\in \mathbb{C}_{N\times N},~ K_{4}=\mathbf{H}_{N}\in \mathbb{C}_{N\times N}.
\end{equation}
Explicit expression of $\phi$ can be given by
\begin{equation}\label{phi-pm}
\phi=\left( \begin{array}{c} \phi^+   \\ \phi^{-}\end{array}\right),
\end{equation}
where $\phi^{\pm}=(\phi^{\pm}_1, \phi^{\pm}_2, \cdots, \phi^{\pm}_N)^T$ take the forms
\begin{equation}\label{phi-pmm}
\phi^+=\exp\Big[\frac{i}{2}(K^{2}_1 x+K^{-2}_1 t)\Big]C^+,~~
\phi^-=\exp\Big[\frac{i}{2}(K^{2}_4 x+K^{-2}_4 t)\Big]C^-,
\end{equation}
and $C^{\pm}=(c^{\pm}_1, c^{\pm}_2, \cdots, c^{\pm}_N)^T$.

 When  $\mathbf{K}_{N}$ and $\mathbf{H}_{N}$ are complex matrices,
note that $\mathbf{K}_{N}$ and $\mathbf{H}_{N}$ are independent,
which means they are not necessary to be all diagonal or all of Jordan forms.
For $\phi^+$, when
\begin{equation}\label{KN-diag}
\mathbf{K}_{N}=\mathrm{D}[k_j]_{j=1}^{N}\doteq\mathrm{Diag}(k_1, k_2, \cdots, k_{N}),~~ k_j\in \mathbb{C},
\end{equation}
there is
\begin{align}\label{phi-m-KD}
\phi^+  =  (c^+_{1}\mathrm{e}^{\eta(k_{1})}, c^+_{2}\mathrm{e}^{\eta(k_{2})},\cdots,  c^+_{N}\mathrm{e}^{\eta(k_{N})} )^{T},~~ c^+_j\in \mathbb{C};
\end{align}
when
\begin{equation}\label{KN-jordan}
\mathbf{K}_{N}=
\mathbf{J}_{N}[k]\doteq \left(
  \begin{array}{cccc}
    k & 0 & \cdots & 0 \\
    1& k & \cdots & 0 \\
    \vdots & \ddots & \ddots & \vdots \\
    0 & \cdots& 1 & k \\
  \end{array}
\right)_{N\times N},~~ k\in\mathbb{C},
\end{equation}
$\phi^+$ takes the form
\begin{align}\label{phi-m-KJ}
\phi^+ =\Bigl(c^+ \mathrm{e}^{\eta(k)}, \partial_k(c^+ \mathrm{e}^{\eta(k)}),
\frac{1}{2!}\partial_k^{2}(c^+ \mathrm{e}^{\eta(k)}),
\cdots,  \frac{1}{(N-1)!}\partial_k^{N-1}(c^+ \mathrm{e}^{\eta(k)}) \Bigr)^{T},
\end{align}
where
\begin{align}
  \eta(k )=\frac{i}{2}(k^{2}x+ k^{-2}t),
  \label{eta}
\end{align}
and $c^+$ can be a function of $k$.
For $\phi^-$,
when $\mathbf{H}_{N}=\mathrm{D}[h_j]_{j=1}^{N},~h_j\in \mathbb{C}$,
$\phi^-$ takes the form of \eqref{phi-m-KD} with replacement
$(k_j, c_j^+)$ by $(h_j, c_j^-)$;
when $\mathbf{H}_{N}=\mathrm{J}_N[h],~h\in \mathbb{C}$,
$\phi^-$ takes the form of \eqref{phi-m-KJ} with replacement
$(k, c^+)$ by $(h, c^-)$.
Corresponding to \eqref{MTM-psiSphic-n}, $\psi$  takes a form
\begin{equation}\label{psi-phi-pmm-non}
\psi=\left( \begin{array}{r}\sqrt{\delta} \mathbf{K}_N^{}\,\phi^+(-x,-t)   \\
-\sqrt{\delta} \mathbf{H}_N\,\phi^{-}(-x,-t)\end{array}\right).
\end{equation}

\subsubsection{Case (2)}\label{sec-3-2-2}

The above solutions in Case (1) are based in the setting of \eqref{TA2} and \eqref{TA-block}.
There are outstanding solutions.
In fact,  equation \eqref{ASn} always has a solution
\begin{equation}
A=\sqrt{\delta}\, \t{\mathbf{I}} S
\end{equation}
for arbitrary  $A\in \mathbb{C}_{2N\times 2N}$,
where $\t{\mathbf{I}}$ is a certain matrix such that $\t{\mathbf{I}} S=S \t{\mathbf{I}}$ and
$\t{\mathbf{I}}^2$ is an identity matrix.
In practice, we can take
\begin{subequations}
\begin{align}
& A= \mathrm{Diag}(\mathrm{J}_{N_1}[k_1], \mathrm{J}_{N_2}[k_2], \cdots, \mathrm{J}_{N_s}[k_s]),\\
& \t{\mathbf{I}}=\mathrm{Diag}(\mathbf{I}_{N_1}, \mathbf{I}_{N_2}, \cdots, \mathbf{I}_{N_s}),
\end{align}
\end{subequations}
where $\sum_{j=1}^s N_j=2N$. The vector $\phi$ and $\psi$ in this case are given by
\begin{subequations}\label{phi-real}
\begin{equation}
\phi=(\phi_{N_1}[k_1], \phi_{N_2}[k_2],\cdots, \phi_{N_s}[k_s])^T,
~~ \psi=S\phi^*,
\end{equation}
where
\begin{align}
\phi_{N_j}[k] =\Bigl(c_j \mathrm{e}^{\eta(k)}, \partial_k(c_j \mathrm{e}^{\eta(k)}),
\frac{1}{2!}\partial_k^{2}(c_j \mathrm{e}^{\eta(k)}),
\cdots,  \frac{1}{(N_j-1)!}\partial_k^{N_j-1}(c_j \mathrm{e}^{\eta(k)}) \Bigr),
\end{align}
\end{subequations}
where $c_j$ can be a function of $k$.
The solutions resulting from this case can be understood as a type of partial-limit solutions
and can provide either algebraic solitons or high order ones  or their combinations \cite{W-2021,WW-2022}.

In addition, $\psi$  is taken as
\begin{equation}\label{psi-phi-pmm-non-2}
\psi(x,t)=S \phi(-x,-t).
\end{equation}

\section{Dynamics of solutions}\label{sec-4}

In this section, we discuss and  illustrate solutions of the nonlocal MTM \eqref{non-cMTM} with $\delta=1$.
In the following, let us look at the solutions case by case.

\subsection{Solutions from Case (1)}\label{sec-4-1}

\subsubsection{1-soliton solution}\label{sec-4-1-1}

When $\mathbf{K}_{N}$ takes the diagonal form as in \eqref{KN-diag}  and $\mathbf{H}_{N}=\mathrm{D}[h_j]_{j=1}^{N}$ with $h_j\in \mathbb{C}$,
we get solitons of the nonlocal MTM \eqref{non-cMTM}.
The simplest solution is the one-soliton solution (corresponding to $N=1$):
\begin{equation} \label{q1q3}
q_1=\frac{\t h}{f},\\
~~q_3=i\frac{g}{s},
\end{equation}
where
\begin{equation}
\t h=-{|\mathbf{K}_{1}||\mathbf{H}_{1}|}|\phi, \partial_x \phi|,
\quad f=|\partial_x \phi; \psi|,
\quad g={\frac{1}{2}|\mathbf{K}_{1}||\mathbf{H}_{1}|}|\phi, \partial_x \phi|,
\quad s=|\partial_x \phi; \partial_x \psi|,
\end{equation}
composed by
\begin{align*}
& \phi=(c_1 e^{\eta(k_1)}, d_1 e^{\eta(h_1)})^T, \\
& \psi=(c_1k_1 e^{-\eta(k_1)}, -d_1h_1 e^{-\eta(h_1)})^T,
\end{align*}
where $\eta(k)$ is given in \eqref{eta}, $k_1, h_1, c_1, d_1\in \mathbb{C}$.

Explicit formulae for $q_1$ and $q_3$ are
\begin{subequations} \label{1ss-q1q3}
\begin{align}
q_1 & = \frac{h_{1}^{2}-k_{1}^{2}}
{k_{1}\, \mathrm{e}^{-i(h_{1}^{2}x+\frac{t}{h_{1}^{2}})} +h_{1}\, \mathrm{e}^{-i(k_{1}^{2}x +\frac{t}{k_{1}^{2}})}},\\
q_3 & =\frac{h_{1}^{2}-k_{1}^{2}}
{k_{1}h_{1} \left[k_{1}\, \mathrm{e}^{-i(k_{1}^{2}x+\frac{t}{k_{1}^{2}})} +h_{1}\, \mathrm{e}^{-i(h_{1}^{2}x +\frac{t}{h_{1}^{2}})} \right]}.
\end{align}
\end{subequations}
The envelope of $q_1$ reads
\begin{equation}\label{1-sol-4-MTM-q1}
|q_1|^2= \frac{(a_{1}^{2}-b_{1}^{2}-m_{1}^{2}+s_{1}^{2})^{2}+4(a_{1}b_{1}-m_{1}s_{1})^{2}}
{2|k_1||h_1| \, \mathrm{e}^{2W_{1}}\left[\cosh \left(2W_{2}+\ln \frac{|h_1| }{|k_1| }\right)
+ \sin(W_{3}+\omega_{1})\right]},
\end{equation}
and for $q_3$ we have
\begin{equation}\label{1-sol-4-MTM-q3}
|q_3|^2=  \frac{(a_{1}^{2}-b_{1}^{2}-m_{1}^{2}+s_{1}^{2})^{2}+4(a_{1}b_{1}-m_{1}s_{1})^{2}}
{2|k_1|^3|h_1|^3 \, \mathrm{e}^{2W_{1}}\left[\cosh \left(2W_{2}+\ln \frac{|h_1| }{|k_1| }\right)
+ \sin(\omega_{1}-W_{3})\right]},
\end{equation}
where we have taken $k_{1}=a_{1}+ib_{1}$, $h_1=m_1+is_1$, $a_1,b_1,m_1,s_1\in \mathbb{R}$, and where
\begin{subequations}
\begin{align}
&W_{1}=(a_{1}b_{1}+m_{1}s_{1})x
-\left(\frac{a_{1}b_{1}}{|k_1|^{4} }+\frac{m_{1}s_{1}}{|h_1|^{4}}\right) t,\\
&W_{2}=(a_{1}b_{1}-m_{1}s_{1})x
-\left(\frac{a_{1}b_{1}}{|k_1|^{4} }-\frac{m_{1}s_{1}}{|h_1|^{4}}\right) t, \\
&W_{3}=(a_{1}^{2}-b_{1}^{2}-m_{1}^{2}+s_{1}^{2})x
+\left(\frac{a_{1}^{2}-b_{1}^{2}}{|k_1|^{4}}
-\frac{m_{1}^{2}-s_{1}^{2}}{|h_1|^{4}}\right) t,\\
& \omega_{1}=\arctan \frac{a_{1}m_{1}+b_{1}s_{1}}{a_{1}s_{1}-b_{1}m_{1}}.
\end{align}
\end{subequations}

It can be found that
$W_1\equiv 0$ if
\begin{subequations}
\begin{align}
 (a_1,b_1)=(\pm m_1, \mp s_1),~ \mathrm{or} ~ (a_1,b_1)=(\pm s_1, \mp m_1),\label{W1=0-a}
\end{align}
or
\begin{align}
 a_1b_1=m_1s_1=0,~\mathrm{but}~ |k_1||h_1|\neq 0; \label{W1=0-b}
\end{align}
\end{subequations}
$W_2\equiv 0$ if \eqref{W1=0-b} holds, or
\begin{equation}
 (a_1,b_1)=( m_1,  s_1),~ \mathrm{or} ~ (a_1,b_1)=( s_1,  m_1); \label{W2=0-a}
 \end{equation}
$W_3\equiv 0$ if
\begin{equation}
 (a_1^2,b_1^2)=( m_1^2,  s_1^2). \label{W3=0-a}
 \end{equation}

With these in hand, we may obtain desired solutions by arranging real parameters $a,b,m,s$.
For example, if we take $a_1=s_1=0$ but $|k_1|\neq |h_1| \neq 0$ so that \eqref{W1=0-b} holds,
we get $W_{1}=W_{2}\equiv 0$ and
\begin{equation}\label{1ss-p-non}
|q_1|^2=\frac{ (b_1^2+m_1^2)^{2}}{
b_1^2+m_1^2- 2|b_1m_1| \sin \left((b_1^2+m_1^2)(x+\frac{t}{b_1^2 m_1^2})\right)},
\end{equation}
which is a nonsingular periodic wave (by virtue of $|b_1|\neq |m_1|$, i.e. $|k_1|\neq |h_1|$).
This wave is depicted in Fig.\ref{F-2}(a).

When $W_{1}\equiv W_{3}\equiv 0$ but $W_{2}\neq 0$, which can hold by taking, e.g. $(a_1,b_1)=(m_1,-s_1)$,
we get 1-soliton solution (1SS), which yields
\begin{equation}\label{1ss-s-non}
|q_1|^2=\frac{8a_1^2b_1^2}{(a_1^2+b_1^2)
(\cosh 4 a_1b_1 W'_2+ \sin \omega'_1)},
\end{equation}
where
$W'_2= x-\frac{t}{(a_1^2+b_1^2)^{2}}$ and $\omega'_1=\arctan\frac{b_1^2-a_1^2}{2a_1b_1}$.
It is depicted in Fig.\ref{F-2}(b).

\captionsetup[figure]{labelfont={bf},name={Fig.},labelsep=period}
\begin{figure}[ht]
\centering
\subfigure[ ]{
\begin{minipage}[t]{0.44\linewidth}
\centering
\includegraphics[width=2.1in]{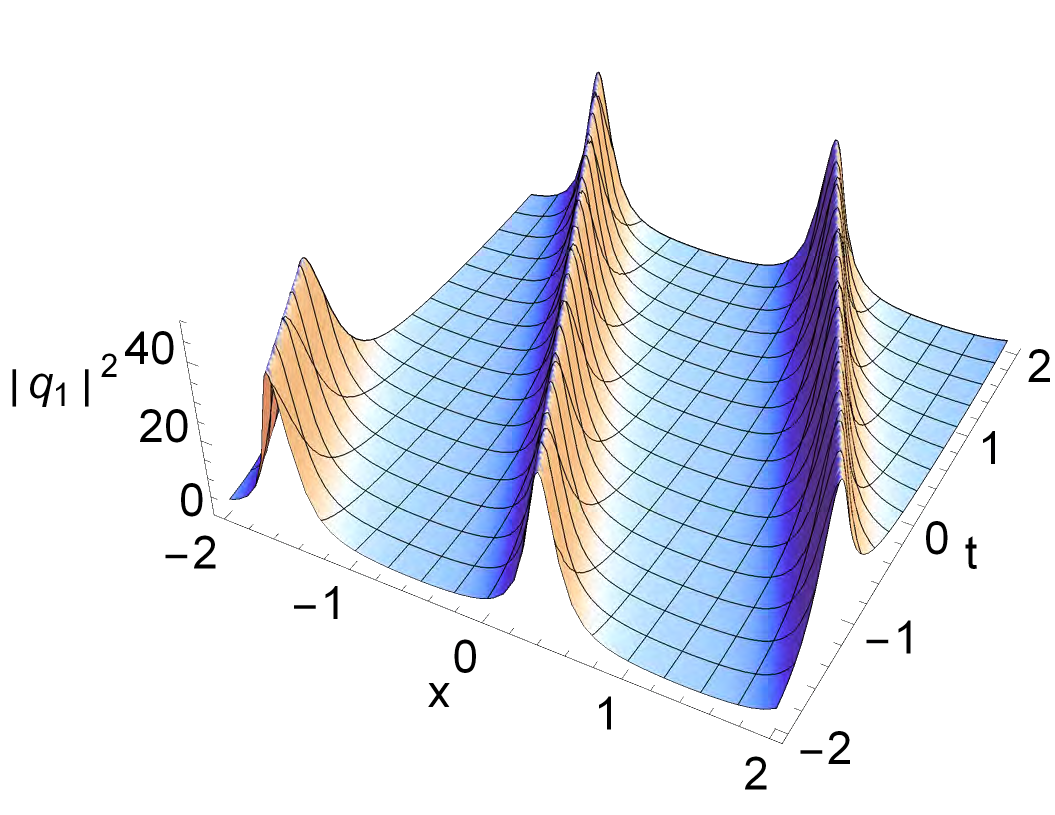}
\end{minipage}%
}%
\subfigure[ ]{
\begin{minipage}[t]{0.44\linewidth}
\centering
\includegraphics[width=2.1in]{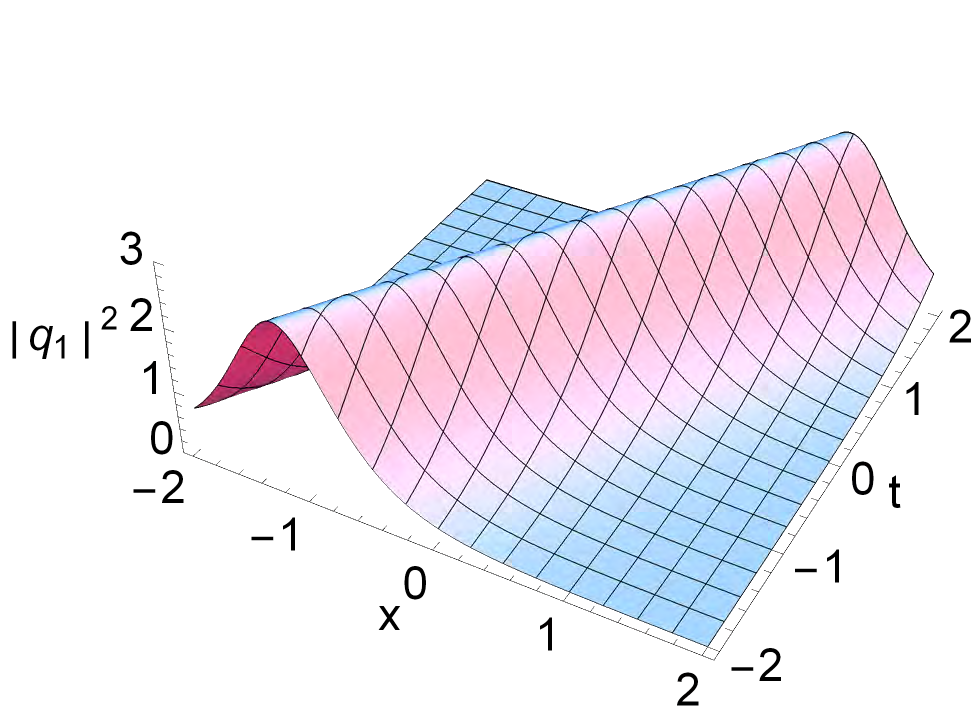}
\end{minipage}%
}
\caption{Shape and motion of 1SS of the nonlocal  MTM  \eqref{non-cMTM}.
(a) Periodic wave given by \eqref{1ss-p-non} with $k_1=i$ and $h_1=1.5$.~
(b) Soliton given by \eqref{1ss-s-non} with $k_1=0.8+0.8i$ and $h_1=0.8-0.8i$.  }
\label{F-2}
\end{figure}

There exist kink-type waves but always with singularities.
Considering the case that there is only one number being zero among $(a_1,b_1,m_1,s_1)$,
e.g. only $m_1=0$, i.e.
\begin{equation}\label{abms}
(a_1,b_1,m_1,s_1)=(a_1,b_1,0,s_1),~ \mathrm{and} ~ a_1b_1s_1\neq 0,
\end{equation}
we have
\begin{subequations}\label{W1-omega}
\begin{align}
& W_1=W_2=a_1b_1\left(x-\frac{t}{|k_1|^4}\right),\\
& W_3=(a_1^2-b_1^2+s_1^2)x+\left(\frac{a_1^2-b_1^2}{|k_1|^4}+\frac{1}{s_1^2}\right)t,~~
\omega_1=\arctan\frac{b_1}{a_1}.
\end{align}
\end{subequations}
It is easy to check that the slopes of lines $W_1=0$ and $W_3=0$ can never be same
in case choosing \eqref{abms}.
In this case, \eqref{1-sol-4-MTM-q1} turns out to be
\begin{equation}\label{1ss-k-non}
|q_1|^2=
\frac{(a_{1}^{2}-b_{1}^{2}+s_{1}^{2})^{2}+4(a_{1}b_{1})^{2}}
{ |h_1|^2 y^2+ |k_1|^2+2 y |k_1||h_1|\sin z},
\end{equation}
where
\[y=e^{2W1},~ z=W_3+\omega_1,\]
and $W_1, ~\omega_1$ take the forms in \eqref{W1-omega}.
This is a kink-type wave for any given $t$: when $x\to \pm\infty$,
$|q_1|^2$ goes to zero on one side and
$\frac{(a_{1}^{2}-b_{1}^{2}+s_{1}^{2})^{2}+4(a_{1}b_{1})^{2}}{|k_1|^{2}}$
on the other side, or the other way around, depending on sgn$[a_1b_1]$.
However, there are infinitely many poles appearing at the intersections
\[\left\{\begin{array}{l}
W_1=\frac{1}{2}\ln \frac{|k_1|}{|h_1|},\\
W_3+\omega_1=2j\pi-\frac{\pi}{2},~~ j\in \mathbb{Z},
\end{array}\right.
\]
and all poles are located at the line $W_1=\frac{1}{2}\ln \frac{|k_1|}{|h_1|}$.
Such a solution is illustrated in Fig.\ref{F-3}.

\captionsetup[figure]{labelfont={bf},name={Fig.},labelsep=period}
\begin{figure}[h]
\centering
\begin{minipage}[t]{0.45\linewidth}
\centering
\includegraphics[width=2.3in]{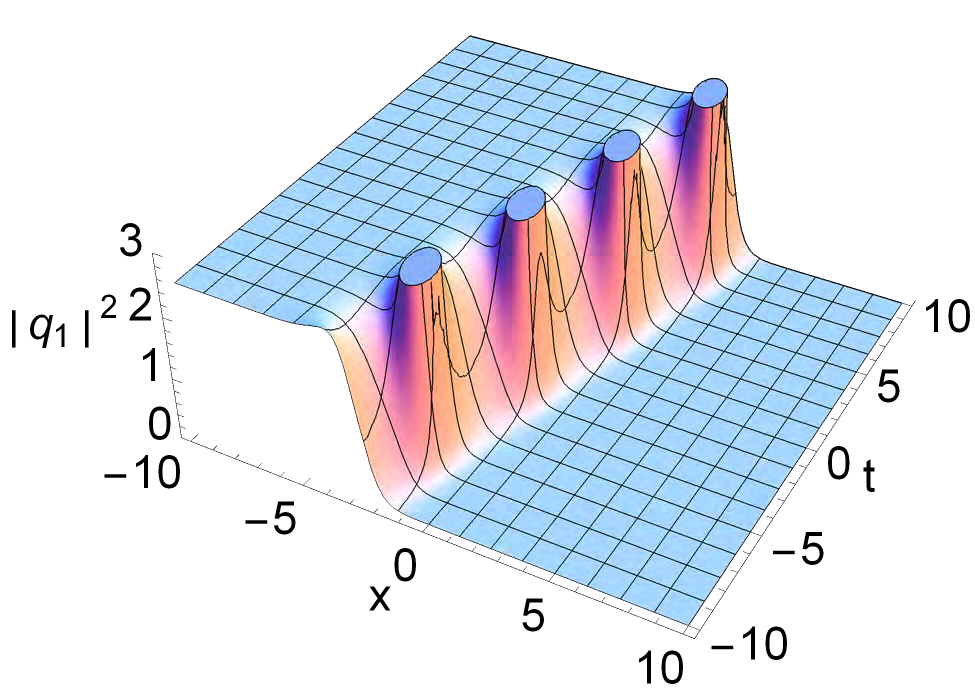}
\end{minipage}%
\caption{ Shape and motion of a kink-type wave of the nonlocal MTM   \eqref{non-cMTM},
given by \eqref{1ss-k-non}  with $k_1=1+i$ and $h_1=i$.}
\label{F-3}
\end{figure}

For $q_3$ we can have similar analysis and we skip its illustrations.

\subsubsection{Two-soliton solution and double-pole solution}\label{sec-4-1-2}

Two-soliton solution (2SS) is obtained with $N=2$. In this case, we have
\begin{equation} \label{2ss-q1q3}
q_{1 (2\mathrm{SS})}=\frac{\t h}{f},\\
~~q_{3 (2\mathrm{SS})}=i\frac{g}{s},
\end{equation}
where
\begin{subequations}
\begin{align}
& \t h=-\frac{1}{2i}{|\mathbf{K}_{2}||\mathbf{H}_{2}|}|\phi, \partial_x\phi, \partial^2_x\phi; \psi|,
~~f=|\partial_x\phi, \partial^2_x\phi; \psi, \partial_x\psi|, \\
& g=\frac{1}{4i}{|\mathbf{K}_{2}||\mathbf{H}_{2}|}|\phi, \partial_x \phi, \partial^2_x\phi; \partial_x\psi|,
~~s=|\partial_x \phi, \partial^2_x\phi; \partial_x \psi, \partial^2_x\psi|.
\end{align}
\end{subequations}
A particular case is of both $\mathbf{K}_N$ and  $\mathbf{H}_N$ being diagonal, in which we have
\begin{subequations}\label{a}
\begin{align}
& \phi=(c_1e^{\eta(k_1)}, c_2e^{\eta(k_2)}, d_1e^{\eta(h_1)}, d_2e^{\eta(h_2)})^T,\\
& \psi=(k_1c_1e^{-\eta(k_1)}, k_2c_2e^{-\eta(k_2)}, -h_1d_1e^{-\eta(h_1)}, -h_2 d_2e^{-\eta(h_2)})^T.
\end{align}
\end{subequations}
This case leads to usual 2SS.
In addition,
when both $\mathbf{K}_N$ and $\mathbf{H}_N$ are Jordan blocks, we have
\begin{subequations}\label{b}
\begin{align}
& \phi=\left(c_1 e^{\eta(k_1)}, c_1 \partial_{k_1} e^{\eta(k_1)},
d_1 e^{\eta(h_1)}, d_1\partial_{h_1} e^{\eta(h_1)}\right)^T,\\
& \psi\!=\!\left(k_1c_1e^{-\eta(k_1)}, c_1 e^{-\eta(k_1)}\!+\!c_1 k_1 \partial_{k_1} e^{-\eta(k_1)},
 -h_1d_1e^{-\eta(h_1)},- d_1e^{-\eta(h_1)}\!-\!h_1 d_1\partial_{h_1} e^{-\eta(h_1)}\right)^T;
\end{align}
\end{subequations}
and when  $\mathbf{K}_N$ is diagonal and $\mathbf{H}_N$ is a Jordan block, we have
\begin{subequations}\label{c}
\begin{align}
& \phi=\left(c_1 e^{\eta(k_1)}, c_2 e^{\eta(k_2)},
d_1 e^{\eta(h_1)}, d_1\partial_{h_1} e^{\eta(h_1)}\right)^T,\\
& \psi=\left(k_1c_1e^{-\eta(k_1)}, k_2 c_2 e^{-\eta(k_2)},
 -h_1d_1e^{-\eta(h_1)},- d_1e^{-\eta(h_1)}-h_1 d_1\partial_{h_1} e^{-\eta(h_1)}\right)^T.
\end{align}
\end{subequations}
Here $\eta$ is defined by \eqref{eta}, $k_j, h_j, c_j, d_j\in \mathbb{C}$.
This case   \eqref{b} gives rise to double-pole solutions,
while case \eqref{c} leads to mixed solutions.

Two-soliton interactions are illustrated in Fig.\ref{F-4} and  Fig.\ref{F-5},
from which we can see that  the interactions of 2SS
are more complicated in nonlocal case (see also the nonlocal Gross-Pitaevskii equation
\cite{Liu-ROMP-2020}).

\captionsetup[figure]{labelfont={bf},name={Fig.},labelsep=period}
\begin{figure}[h]
\centering
\subfigure[ ]{
\begin{minipage}[t]{0.42\linewidth}
\centering
\includegraphics[width=2.1in]{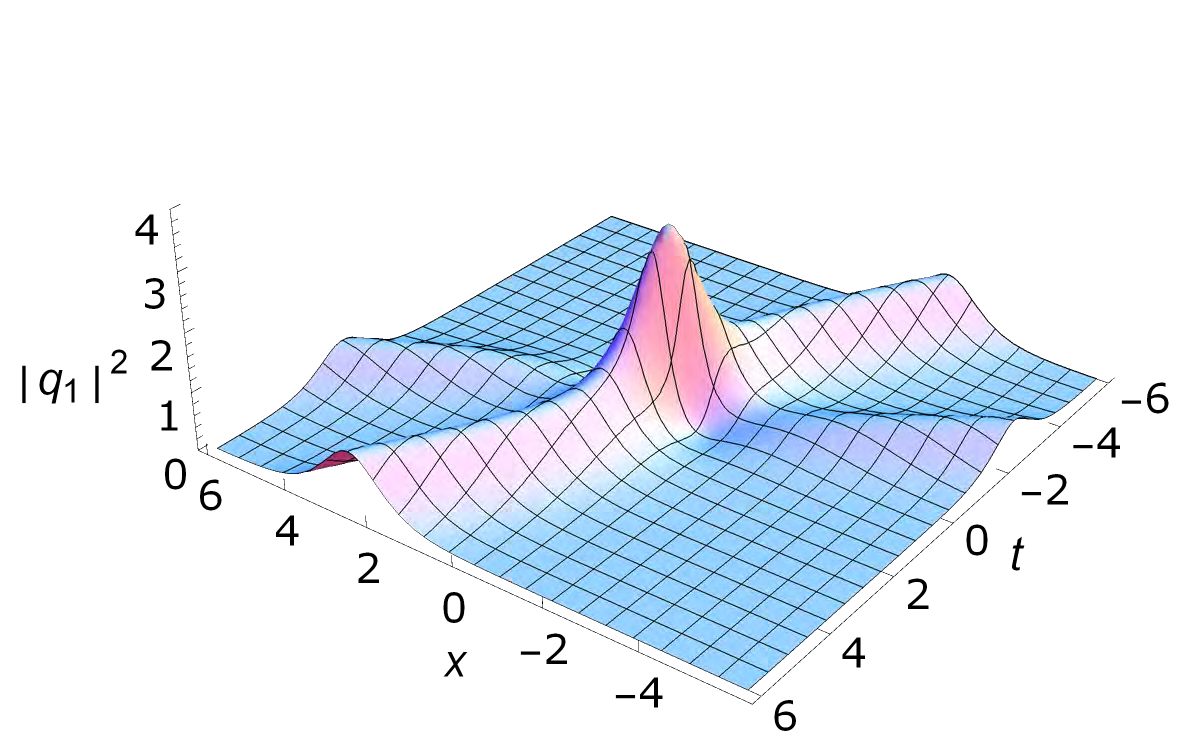}
\end{minipage}%
}%
\subfigure[ ]{
\begin{minipage}[t]{0.42\linewidth}
\centering
\includegraphics[width=2.1in]{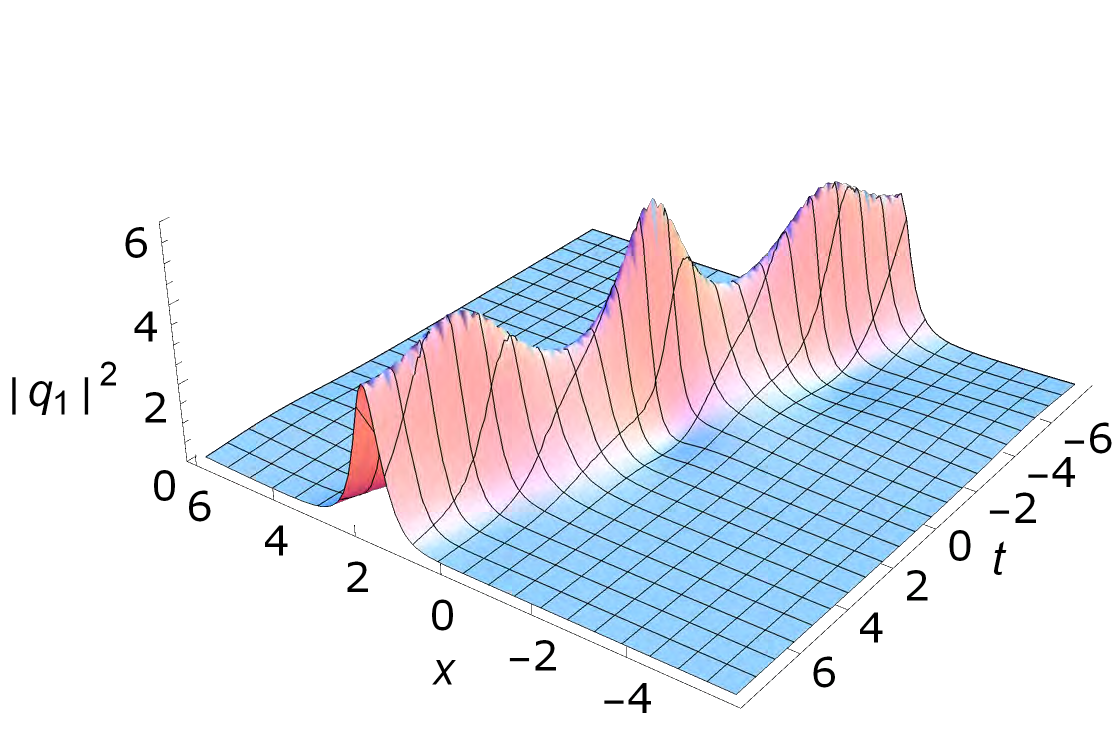}
\end{minipage}%
}\\
\subfigure[ ]{
\begin{minipage}[t]{0.42\linewidth}
\centering
\includegraphics[width=2.1in]{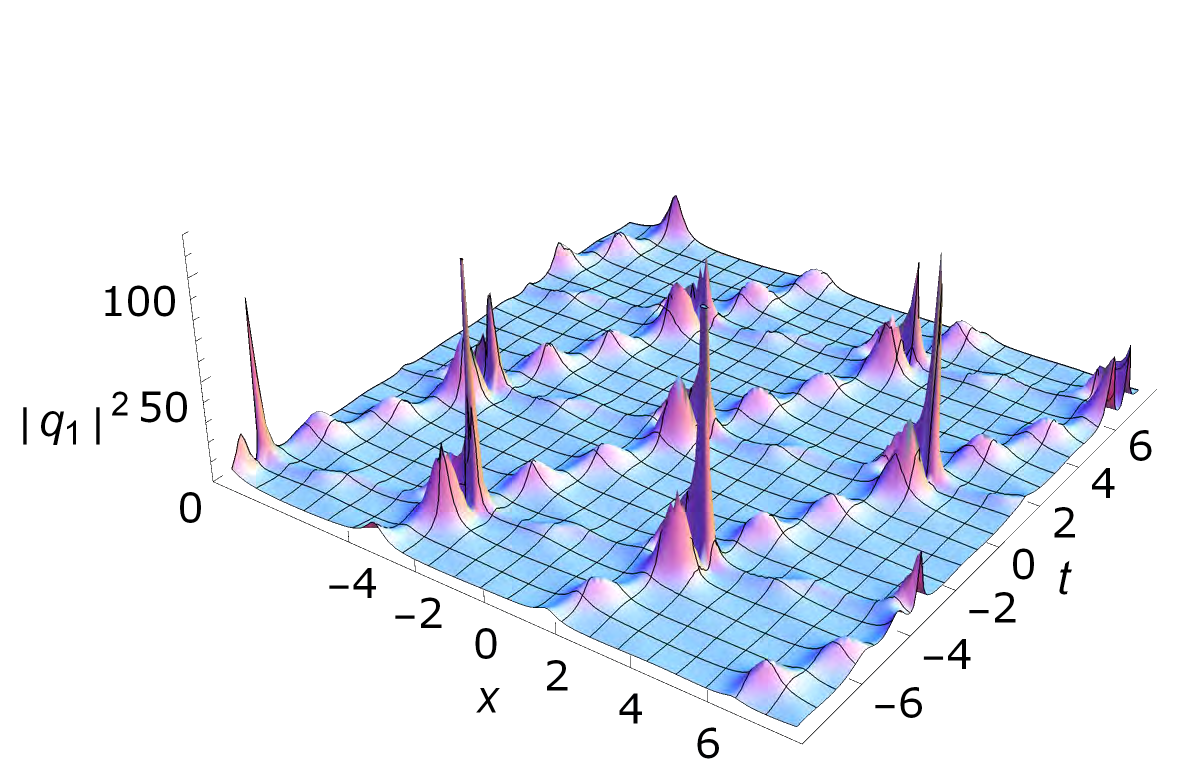}
\end{minipage}%
}%
\subfigure[ ]{
\begin{minipage}[t]{0.42\linewidth}
\centering
\includegraphics[width=2.1in]{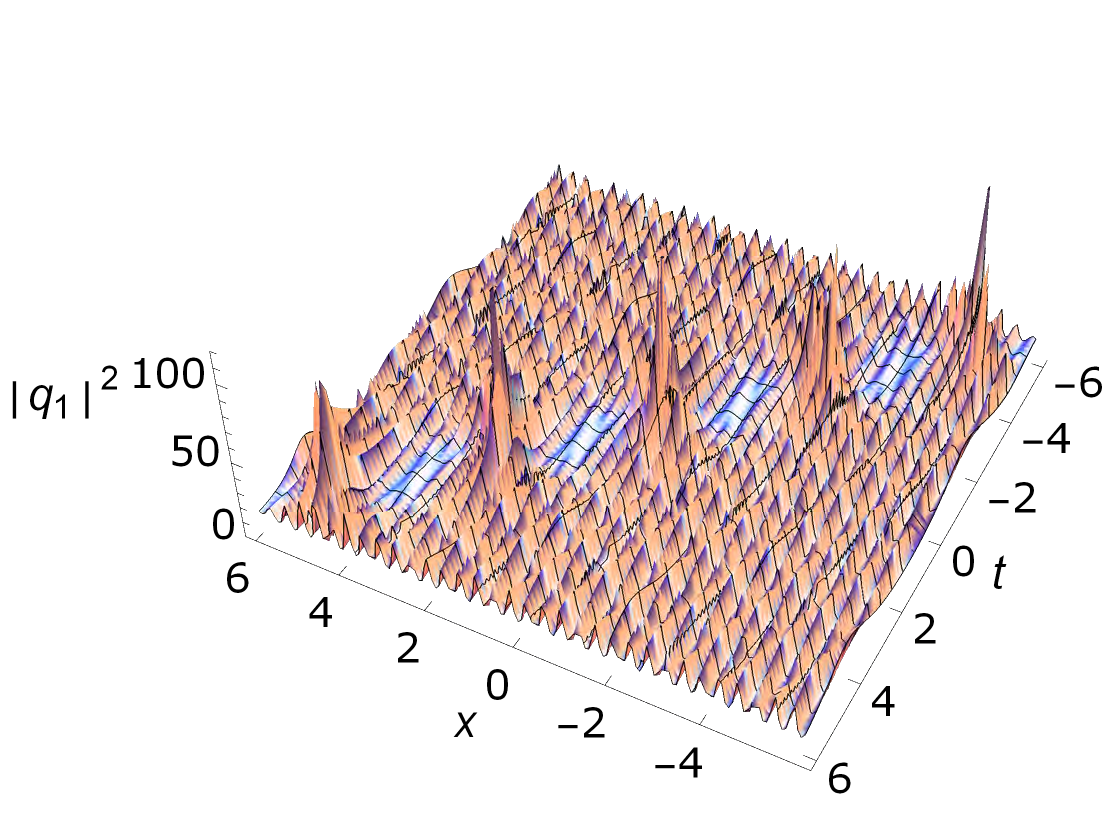}
\end{minipage}%
}
\caption{Shape and motion of 2SS of the nonlocal MTM   \eqref{non-cMTM}.~
Envelope $|q_1|^2$ of \eqref{2ss-q1q3} with \eqref{a} for
(a) $k_1=1+0.5i$, $h_1=1-0.5i$,  $k_2=0.8+0.4i$,
$h_2=0.8-0.4i$ and $c_1=c_2=d_1=d_2=1$; ~
(b)   $k_1=1+i$, $h_1=1-i$,  $k_2=1+0.2i$,
$h_2=1-0.2i$ and $c_1=c_2=d_1=d_2=1$; ~
(c)   $k_1=0.6i$, $h_1=-1.5$,  $k_2=0.8i$,
$h_2=-1$ and $c_1=c_2=d_1=d_2=1$;
~
(d)   $k_1=0.8+0.6i$, $h_1=0.8-0.6i$,  $k_2=i$,
$h_2=-4$ and $c_1=c_2=d_1=d_2=1$.
}
\label{F-4}
\end{figure}

\captionsetup[figure]{labelfont={bf},name={Fig.},labelsep=period}
\begin{figure}[h]
\centering
\subfigure[ ]{
\begin{minipage}[t]{0.40\linewidth}
\centering
\includegraphics[width=2.1in]{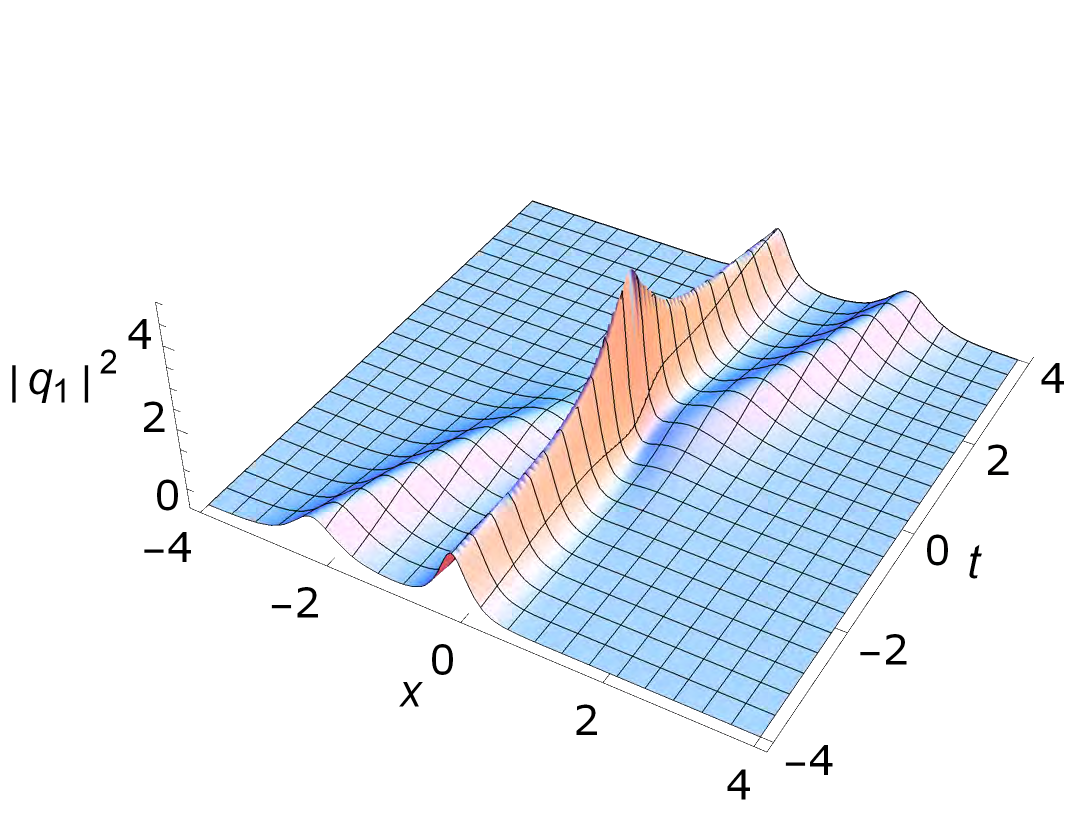}
\end{minipage}%
}%
\subfigure[ ]{
\begin{minipage}[t]{0.40\linewidth}
\centering
\includegraphics[width=2.1in]{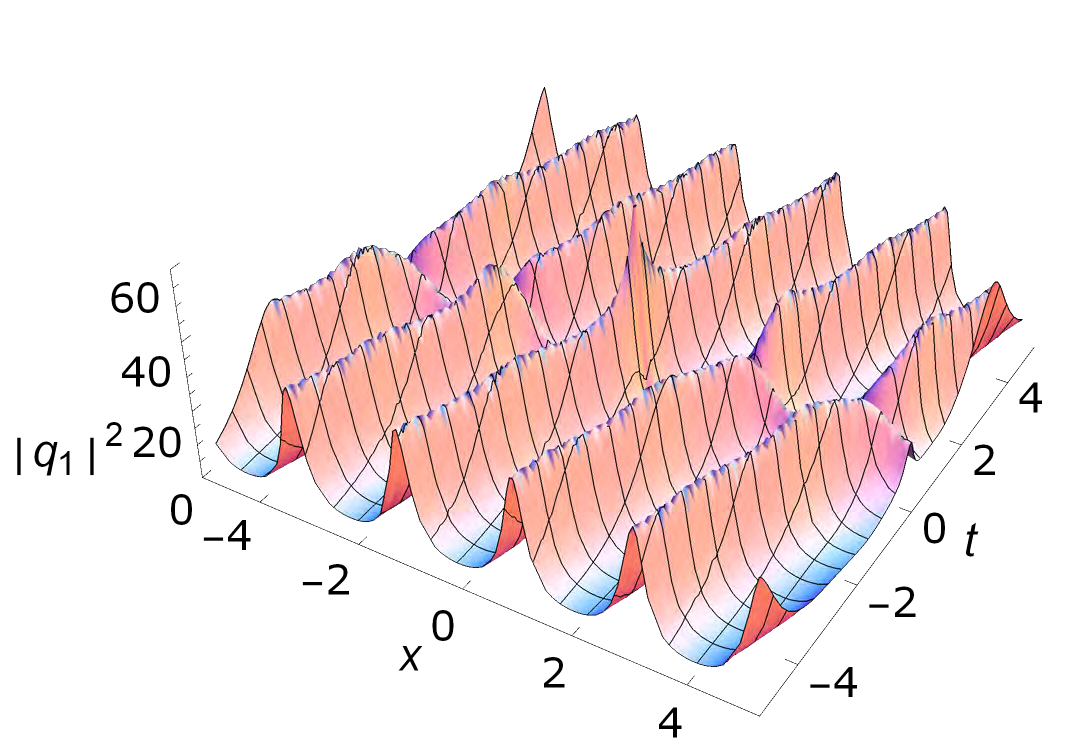}
\end{minipage}%
}%
\caption{Shape and motion of double-pole and mixed solutions  of the nonlocal MTM  \eqref{non-cMTM}.~
(a) Envelope $|q_1|^2$ of \eqref{2ss-q1q3} with \eqref{b}
in which $k_1=1+i$, $h_1=1-i$, $k_2=1$, $h_2=1$ and $c_1=d_1=1$. ~
(b) Envelope $|q_1|^2$ of \eqref{2ss-q1q3} with \eqref{c}
in which $k_1=1$, $k_2=2$, $h_1=-0.5$, $h_2=1$
and $c_1=c_2=d_1 =1$. }
\label{F-5}
\end{figure}

\subsection{Solutions from Case (2)}\label{sec-4-2}
\subsubsection{Algebraic soliton solutions}\label{sec-4-2-1}

Based on \textbf{Case (2)}, to generate algebraic soliton solutions, we consider
\begin{subequations}\label{AT-alg}
        \begin{equation}\label{AT-diag-jordon}
         A= \mathrm{Diag}(A_1,A_2,\cdots,A_N),~~
         T= \mathrm{Diag}(T_1,T_2,\cdots,T_N),~~
         S=AT,
         \end{equation}
        where
        \begin{equation}\label{Aj-Tj-alg}
          A_j=\left(
            \begin{array}{cc}
              k_j&0\\
              1& k_j
            \end{array}
            \right),~~
             T_j=\left(
              \begin{array}{cc}
                \sqrt{\delta}&0 \\
               0& \sqrt{\delta}
              \end{array}
              \right),~~k_j\in\mathbb{R},~~j=1,\cdots,N,
        \end{equation}
        \end{subequations}
and the corresponding vector $\phi$ and $\psi$  are given by
          \begin{align}\label{phi-alg}
            \phi  =\Bigl(\mathrm{e}^{\eta (k_1)},\partial_{ k_1}\mathrm{e}^{\eta(k_1)},\mathrm{e}^{\eta (k_2)},\partial_{ k_2}\mathrm{e}^{\eta(k_2)},
               \cdots,\mathrm{e}^{\eta (k_N)},\partial_{k_N}\mathrm{e}^{\eta(k_N)}\Bigr)^{T},
          ~~ \psi=S\phi(-x,-t),
           \end{align}
where
\begin{equation}
\eta(k_j )=\frac{i}{2}(k_j^{2}x+ k_j^{-2}t),~ k_j\in\mathbb{R}, ~~\delta=\pm 1,
\end{equation}
and we have taken
         $C=(1,1,\cdots,1)^T$ for convenience.

The simplest case is of $N=1$, where
\begin{equation}\label{A-T-alg-1ss}
  A=\left(
    \begin{array}{cc}
      k_1&0\\
      1& k_1
    \end{array}
    \right),~~
     T=\left(
      \begin{array}{cc}
        \sqrt{\delta}&0 \\
       0& \sqrt{\delta}
      \end{array}
      \right),
\end{equation}
and
\begin{align}
  \phi  =\Bigl(\mathrm{e}^{\eta (k_1)},\partial_{ k_1}\mathrm{e}^{\eta(k_1)}\Bigr)^{T},
~~ \psi=S\phi(-x,-t)=AT\phi(-x,-t).
\end{align}
Then the resulting  solutions (via \eqref{trans 7},\eqref{th-g-f-s-non}) of the nonlocal MTM \eqref{non-cMTM} are
\begin{subequations}
  \begin{align}
    &q_1=-\frac{2 i k_1^3 e^{i( k_1^{2}x+ k_1^{-2}t)}}{\sqrt{\delta} (2 k_1^4 x-i k_1^2-2 t)},\\
 & q_3=-\frac{2i k_1 e^{i( k_1^{2}x+ k_1^{-2}t)}}{\sqrt{\delta} (2 k_1^4 x+i k_1^2-2 t)},
  \end{align}
\end{subequations}
and the corresponding envelopes read
\begin{subequations}
  \begin{align}
    &|q_1|^2=\frac{4 k_1^6}{(2k_1^4x-2t)^2+k_1^4},\label{q1-alg-1ss}\\
 & |q_3|^2=\frac{4 k_1^2}{(2k_1^4x-2t)^2+k_1^4}.\label{q3-alg-1ss}
  \end{align}
\end{subequations}
These are nonsingular algebraic solitons as they look like solitons but with algebraic decaying (for given $x$ or $t$).
Both the algebraic solitons generated by $|q_1|^2$ and $|q_3|^2$ travel along $x=k_1^{-4}t$, with amplitudes $4k_1^2$ and $4k_1^{-2}$, respectively,
see Fig.\ref{F-6}.
\captionsetup[figure]{labelfont={bf},name={Fig.},labelsep=period}
\begin{figure}[ht]
\centering
\subfigure[ ]{
\begin{minipage}[t]{0.45\linewidth}
\centering
\includegraphics[width=2.5in]{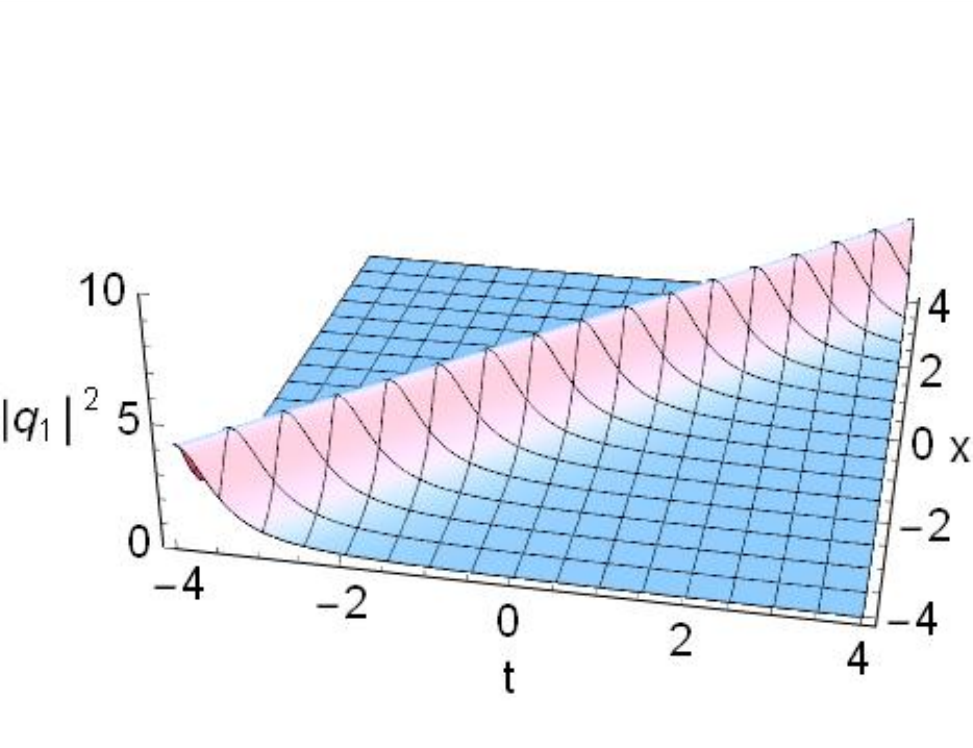}
\end{minipage}%
}%
\subfigure[ ]{
\begin{minipage}[t]{0.45\linewidth}
\centering
\includegraphics[width=2.5in]{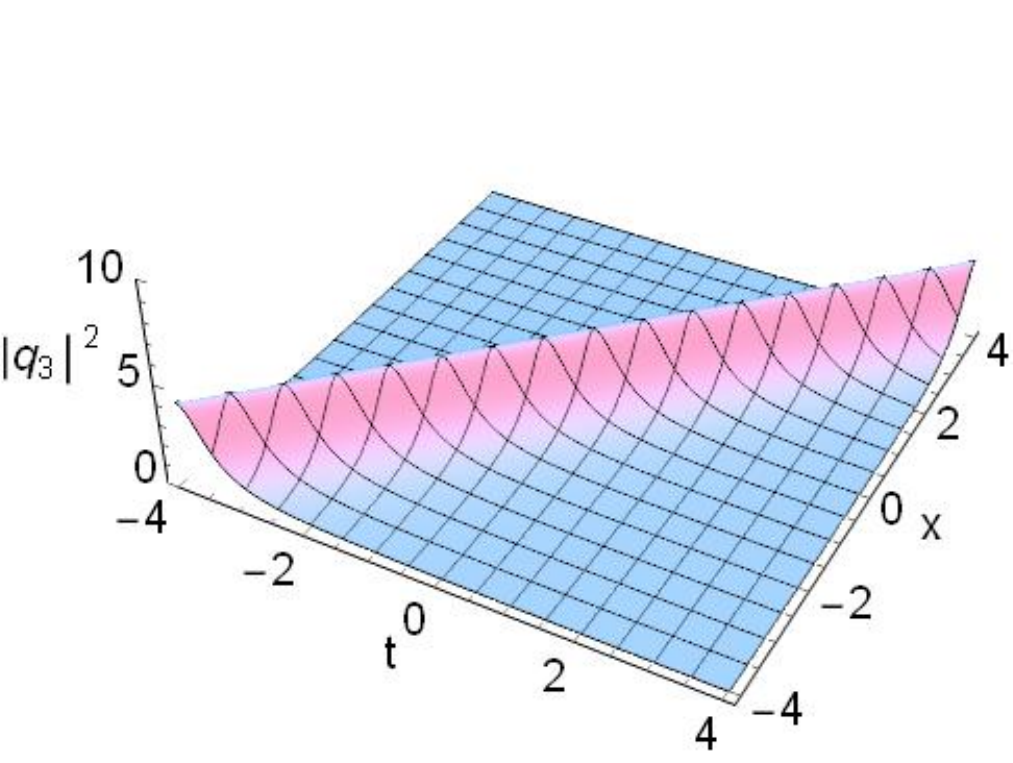}
\end{minipage}%
}
\subfigure[ ]{
\begin{minipage}[t]{0.45\linewidth}
\centering
\includegraphics[width=2.0in]{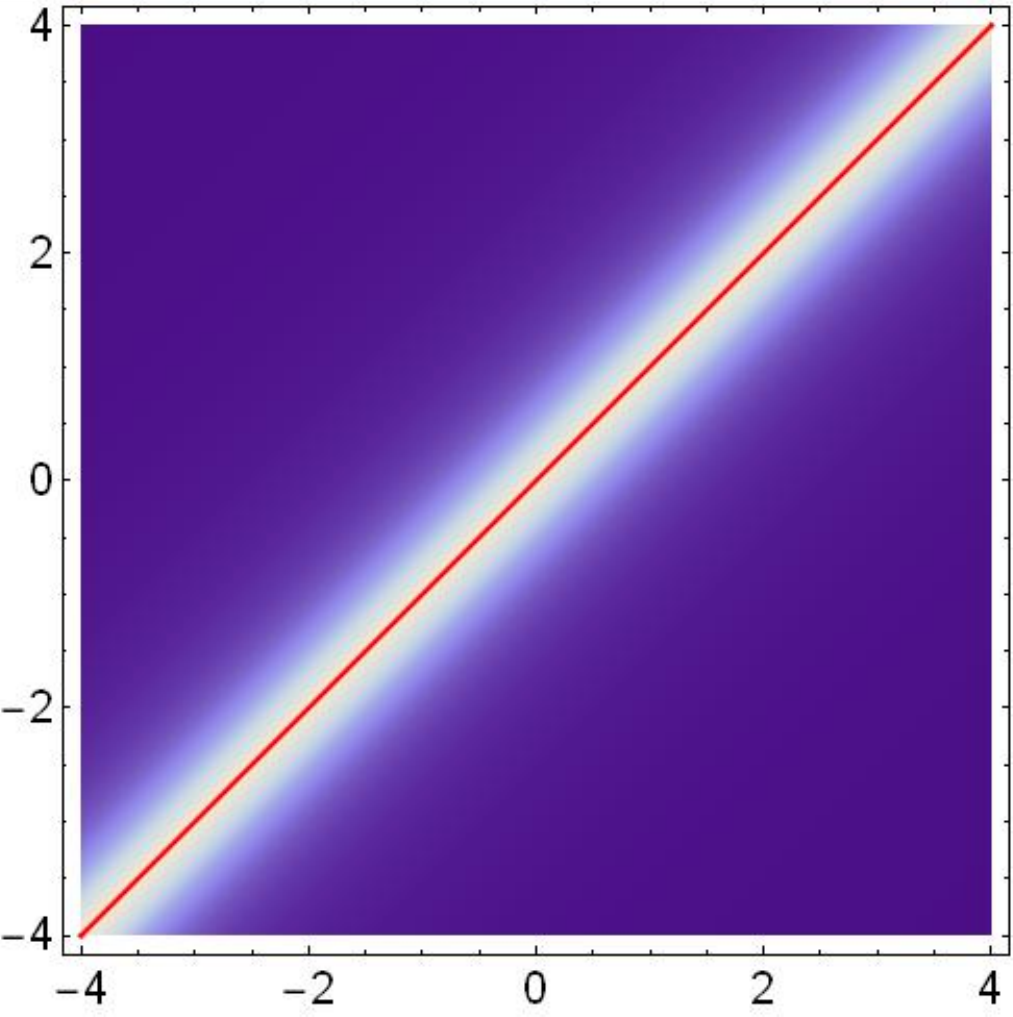}
\end{minipage}%
}
\subfigure[ ]{
\begin{minipage}[t]{0.45\linewidth}
\centering
\includegraphics[width=2.0in]{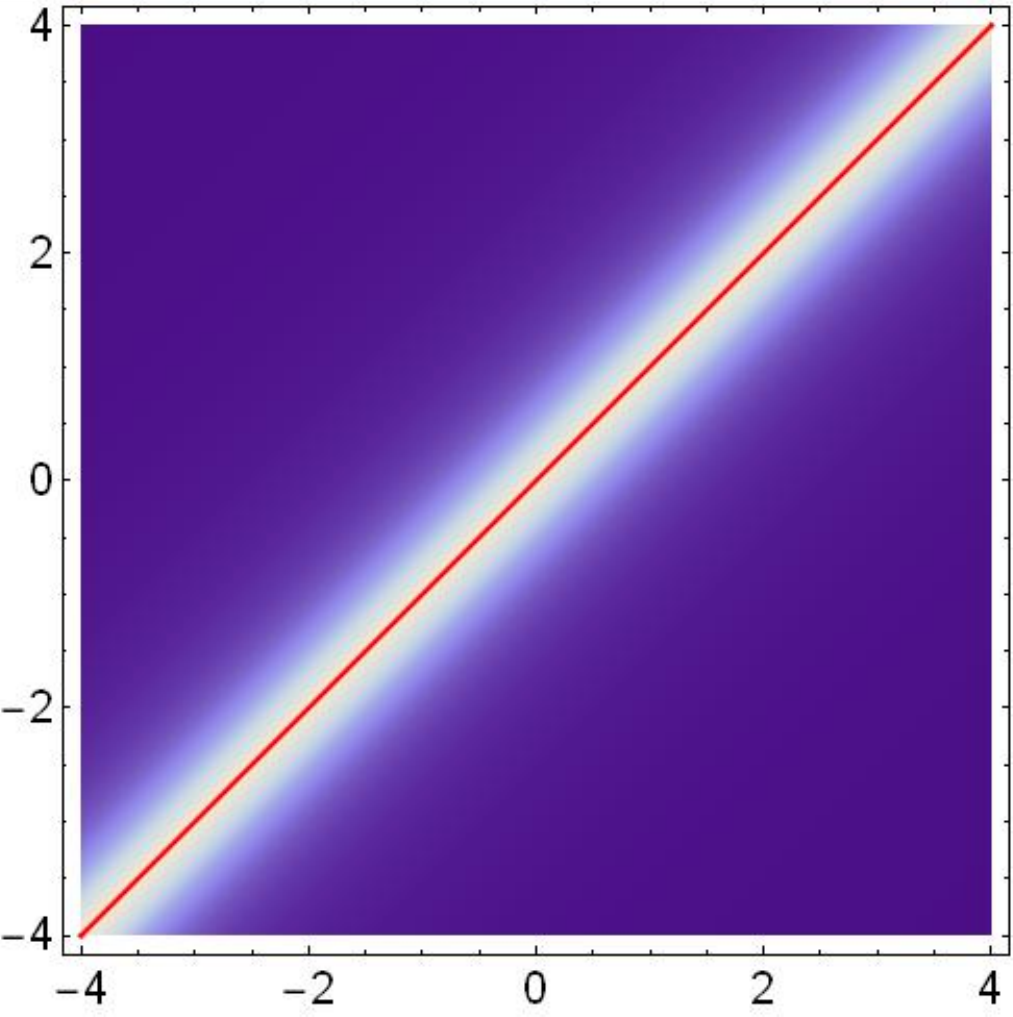}
\end{minipage}%
}
\caption{Shape and motion of algebraic 1SS of the nonlocal  MTM   \eqref{non-cMTM}.
(a) Envelope $|q_1|^2$ given in \eqref{q1-alg-1ss} with $\delta=1,~k_1=1$.
(b) Envelope $|q_3|^2$ given in \eqref{q3-alg-1ss} with $\delta=1,~k_1=1$.
(c) Density plot of (a) where the red line is $x=k_1^{-4}t$.
(d) Density plot of (b) where the red line is $x=k_1^{-4}t$.
 }
\label{F-6}
\end{figure}

Next, we consider the case of $N=2$, where we have
\begin{equation}
  A=\left(
   \begin{array}{cccc}
    k_1 &  0 & 0 & 0 \\
     1 & k_1&0 & 0 \\
     0 &0 & k_2&  0 \\
     0 & 0 &1 & k_2
   \end{array}
 \right),~~
 T= \left(
  \begin{array}{cccc}
   \sqrt{\delta}& 0 & 0 &  0 \\
    0&  \sqrt{\delta} & 0 &  0 \\
   0& 0&  \sqrt{\delta}&  0 \\
   0& 0& 0 &  \sqrt{\delta} \\
  \end{array}
\right),
 \end{equation}
and
\begin{align}
  \phi  =\Bigl(\mathrm{e}^{\eta (k_1)},\partial_{ k_1}\mathrm{e}^{\eta(k_1)},\mathrm{e}^{\eta (k_2)},\partial_{ k_2}\mathrm{e}^{\eta(k_2)},\Bigr)^{T},
~~ \psi=S\phi(-x,-t)=AT\phi(-x,-t).
 \end{align}
We skip the expressions of $q_1$ and $q_3$, and instead,  present the explicit formulae of envelops
\begin{subequations}
  \begin{align}
  |q_1|^2&=\frac{\mathcal{A}_1}{\mathcal{B}_1},\label{q1-alg-2ss}\\
 |q_3|^2&=\frac{\mathcal{A}_2}{\mathcal{B}_2},\label{q3-alg-2ss}
\end{align}
  where
  \begin{align}
 \mathcal{A}_1=&4(k_1^2-k_2^2)^2\Big((M_1\cos{\theta_1}-M_2\sin{\theta_1}
 +M_3\cos{\theta_2}-M_4\sin{\theta_2})^2\nonumber\\
 &+(M_1\sin{\theta_1}+M_2\cos{\theta_1}+M_3\sin{\theta_2}+M_4\cos{\theta_2})^2\Big),\\
 \mathcal{B}_1=&(M_5-4k_1^5k_2^3\sin{(\theta_2-\theta_1)}+4k_1^3k_2^5\sin{(\theta_2-\theta_1)})^2\nonumber\\
 &+(M_6-4k_1^5k_2^3\cos{(\theta_2-\theta_1)}-4k_1^3k_2^5\cos{(\theta_2-\theta_1)})^2,\\
 \mathcal{A}_2=&4 (k_1^2 - k_2^2)^2  \Big((N_1 \cos{\theta_1} + N_2 \sin{\theta_1} + N_3 \cos{\theta_1 }
 + N_4 \sin{\theta_2})^2 \nonumber\\
 &+(-N_1 \sin{\theta_1} + N_2 \cos{\theta_1}  - N_3 \sin{\theta_2}+ N_4 \cos{\theta_2})^2\Big),\\
 \mathcal{B}_2=&(N_5 + 4 k_1^5 k_2^3 \sin{(\theta_1-\theta_2)} - 4 k_1^3 k_2^5  \sin{(\theta_1-\theta_2)})^2\nonumber\\
 &+(N_6 + 4 k_1^5 k_2^3 \cos{(\theta_1-\theta_2)} + 4 k_1^3 k_2^5 \cos{(\theta_1-\theta_2)})^2
\end{align}
\end{subequations}
with
\begin{align*}
  &M_1=2k_2^3(k_1^2-k_2^2)(k_1^4x-t),~~~M_2=-k_1^2k_2^3(3k_1^2+k_2^2),~~~
  M_3=2k_1^3(k_1^2-k_2^2)(k_2^4x-t),\\
 &M_4=k_1^3k_2^2(k_1^2+3k_2^2),~~~M_5=2 (k_1^2 - k_2^2)^2 (k_1^2 + k_2^2) (k_1^2k_2^2 x-t),~~~\\
 &M_6=k_1^6 k_2^2 + 6 k_1^4 k_2^4 + k_1^2 k_2^6 - 4 (k_1^2 - k_2^2)^2 t^2
 + 4 (k_1^2 - k_2^2)^2 (k_1^4 + k_2^4) t x - 4 k_1^4 k_2^4 (k_1^2 - k_2^2)^2 x^2,\\
 &N_1=2 k_1 (k_1^2 - k_2^2) ( k_2^4 x-t),~~~N_2=3 k_1^3 k_2^2 + k_1 k_2^4,~~~
 N_3=2 k_2 (k_1^2 - k_2^2) ( k_1^4 x-t ),\\
 &N_4=-k_1^4 k_2 - 3 k_1^2 k_2^3,~~~N_5=2 (k_1^2 - k_2^2)^2 (k_1^2 + k_2^2) (k_1^2 k_2^2 x-t ),\\
 &N_6=-k_1^6 k_2^2 - 6 k_1^4 k_2^4 - k_1^2 k_2^6 + 4 (k_1^2 - k_2^2)^2 t^2
 - 4 (k_1^2 - k_2^2)^2 (k_1^4 + k_2^4) t x + 4 k_1^4 k_2^4 (k_1^2 - k_2^2)^2 x^2,\\
 &\theta_1=-\frac{t + k_1^4 x}{k_1^2},~~~
 \theta_2=-\frac{t + k_2^4 x}{k_2^2}.
\end{align*}
To analyze the resulting algebraic 2SSs generated by $|q_1|^2$ and $|q_3|^2$,
we rewrite the solutions in coordinates $(X_1,t)$ and $(X_2,t)$, respectively, where
\begin{equation}
  X_1=x-\frac{t}{k_1^4},~~~~ X_2=x-\frac{t}{k_2^4}.
\end{equation}
For convenience, we denote the solitons with the two coordinates by $k_1$-soliton
and $k_2$-soliton correspondingly.
Then, fixing $X_1$ and $X_2$ respectively, and letting $|t|\to \infty$, we get the following results.
\begin{proposition}
  In the coordinate $(X_1,t)$, $|q_1|^2$ and $|q_3|^2$ asymptotically go to
  \begin{subequations}
    \begin{align}
   & |q_1|^2\sim \frac{4k_1^2}{1+ 4 k_1^4 X_1^2},~~~~|q_3|^2\sim\frac{4k_1^{-2}}{1 + 4 k_1^4 X_1^2},
   ~~~ (|t| \to \infty),
  \end{align}
  and in the coordinate $(X_2,t)$, $|q_1|^2$ and $|q_3|^2$ asymptotically go to
  \begin{align}
    |q_1|^2\sim \frac{4k_2^2}{1+ 4 k_2^4 X_2^2},~~~~|q_3|^2\sim\frac{4 k_2^{-2}}{1 + 4 k_2^4 X_2^2},
    ~~~ (|t| \to \infty).
  \end{align}
\end{subequations}
\end{proposition}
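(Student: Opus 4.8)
The plan is to compute the $|t|\to\infty$ asymptotics of the explicit envelopes $|q_1|^2=\mathcal{A}_1/\mathcal{B}_1$ and $|q_3|^2=\mathcal{A}_2/\mathcal{B}_2$ by substituting $x=X_1+t/k_1^4$ (respectively $x=X_2+t/k_2^4$) and extracting the leading-order behaviour in $t$. The key observation is that the substitution $x=X_1+t/k_1^4$ kills the growing part of everything indexed by $k_1$: indeed $k_1^4 x - t = k_1^4 X_1$ becomes $t$-independent, while $k_2^4 x - t = k_2^4 X_1 + (k_2^4/k_1^4 - 1)t$ grows linearly in $t$. Likewise $\theta_1 = -(t+k_1^4 x)/k_1^2 = -(2t/k_1^2 + k_1^2 X_1)$ depends on $t$ only through a pure phase, whereas $\theta_2$ also has a phase of the same type. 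So I would first record, for each of $M_1,\dots,M_6$ and $N_1,\dots,N_6$, its order in $t$ after the substitution: $M_1$ becomes $O(1)$, $M_3$ becomes $O(t)$, $M_5$ becomes $O(t)$, and $M_6$ — which is the potentially dangerous one, being quadratic in $x$ — I expect to collapse to $O(t)$ after the cancellation $k_1^4 k_2^4(k_1^2-k_2^2)^2 x^2$ against the cross term, leaving a linear residue; similarly for $N_6$.

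The second step is to identify which terms dominate numerator and denominator. In $\mathcal{A}_1$ the surviving $O(t)$ pieces are $M_3\cos\theta_2$ and $M_3\sin\theta_2$, giving $\mathcal{A}_1 \sim 4(k_1^2-k_2^2)^2 M_3^2 \sim 4(k_1^2-k_2^2)^2 \cdot 4k_1^6(k_1^2-k_2^2)^2 t^2$ (using $k_2^4 x - t \to (k_2^4/k_1^4-1)t$, so $M_3 \sim 2k_1^3(k_1^2-k_2^2)(k_2^4-k_1^4)t/k_1^4$; I'd keep careful track of this factor). In $\mathcal{B}_1$ the $O(t^2)$ contributions come from $M_5$, from $M_6$'s linear-in-$t$ residue, and from the $k_1^5k_2^3, k_1^3k_2^5$ sine/cosine terms which are only $O(1)$ — so actually $\mathcal{B}_1 \sim M_5^2 + (\text{linear part of }M_6)^2$, both $O(t^2)$. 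Taking the ratio, all the $t^2$ factors cancel and what remains is a rational function of $X_1$ of the stated form $4k_1^2/(1+4k_1^4X_1^2)$; the phases $\theta_2$ drop out because $\cos^2+\sin^2=1$, and $\theta_1$ drops out because the $M_1$-terms are subleading. The same bookkeeping with $N_1,\dots,N_6$ yields $|q_3|^2 \sim 4k_1^{-2}/(1+4k_1^4X_1^2)$, the factor $k_1^{-2}$ versus $k_1^2$ tracing back to the $k_1^3$ versus $k_1$ prefactors distinguishing $M_3$ from $N_3$. By symmetry of the formulae under $k_1\leftrightarrow k_2$ (which swaps $\theta_1\leftrightarrow\theta_2$, $M_i\leftrightarrow$ their $k_2$-counterparts), the substitution $x=X_2+t/k_2^4$ gives the second pair of asymptotics, so only one of the two computations needs to be done in detail.

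The main obstacle I anticipate is the $M_6$ (and $N_6$) term: it is the only place where $x^2$ appears, so after $x=X_1+t/k_1^4$ it naively contributes $t^2$, $tX_1$ and $X_1^2$ pieces, and I must verify that the $t^2$ piece cancels exactly against the $4(k_1^2-k_2^2)^2(k_1^4+k_2^4)tx$ cross term's own $t^2$ part — i.e. that $-4k_1^4k_2^4(k_1^2-k_2^2)^2 (t/k_1^4)^2 + 4(k_1^2-k_2^2)^2(k_1^4+k_2^4) t \cdot (t/k_1^4) - 4(k_1^2-k_2^2)^2 t^2$ vanishes, which is the identity $-k_2^4/k_1^4 \cdot k_1^4 + (k_1^4+k_2^4)/k_1^4 - 1 = 0$ after dividing by $4(k_1^2-k_2^2)^2 t^2 /k_1^4 \cdot k_1^4$; a short but essential algebraic check. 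Once that cancellation is confirmed, the rest is mechanical collection of leading terms and the simplification $\cos^2\theta + \sin^2\theta = 1$, and I would present the final limits without displaying the intermediate expansions in full, since they are routine once the cancellation structure is exposed.
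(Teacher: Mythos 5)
Your proposal is correct and follows essentially the same route as the paper, which simply passes to the moving coordinate $x=X_i+t/k_i^4$ and takes $|t|\to\infty$ without displaying the intermediate expansions. The two points you single out as critical do check out: the $t^2$ coefficient of $M_6$ (and $N_6$) cancels exactly as you describe, and its surviving linear-in-$t$ coefficient is proportional to $X_1$ (namely $4(k_1^2-k_2^2)^3(k_1^2+k_2^2)X_1$), which together with the constant-in-$X_1$ linear coefficient of $M_5$ produces precisely the factor $1+4k_1^4X_1^2$ in the denominator of the limit.
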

From this proposition, we can see that, asymptotically,  there are algebraic 1SSs
propagating along the line $x=k_1^{-4}t$ (vertex trajectory) with velocity $k_1^{-4}$ and
along the line $x=k_2^{-4}t$   with velocity $k_2^{-4}$, respectively.
For $k_1$-solitons, the amplitudes for $|q_1|^2$ and  $|q_3|^2$ are
\begin{equation*}
  \text{Amp}|_{(|q_1|^2)}=4 k_1^2,~~~~ \text{Amp}|_{(|q_3|^2)}=4 k_1^{-2},
\end{equation*}
and for $k_2$-solitons, the amplitudes for $|q_1|^2$ and  $|q_3|^2$ are
\begin{equation*}
  \text{Amp}|_{(|q_1|^2)}=4k_2^2,~~~~ \text{Amp}|_{(|q_3|^2)}=4 k_2^{-2}.
\end{equation*}
\captionsetup[figure]{labelfont={bf},name={Fig.},labelsep=period}
\begin{figure}[ht]
\centering
\subfigure[ ]{
\begin{minipage}[t]{0.45\linewidth}
\centering
\includegraphics[width=2.5in]{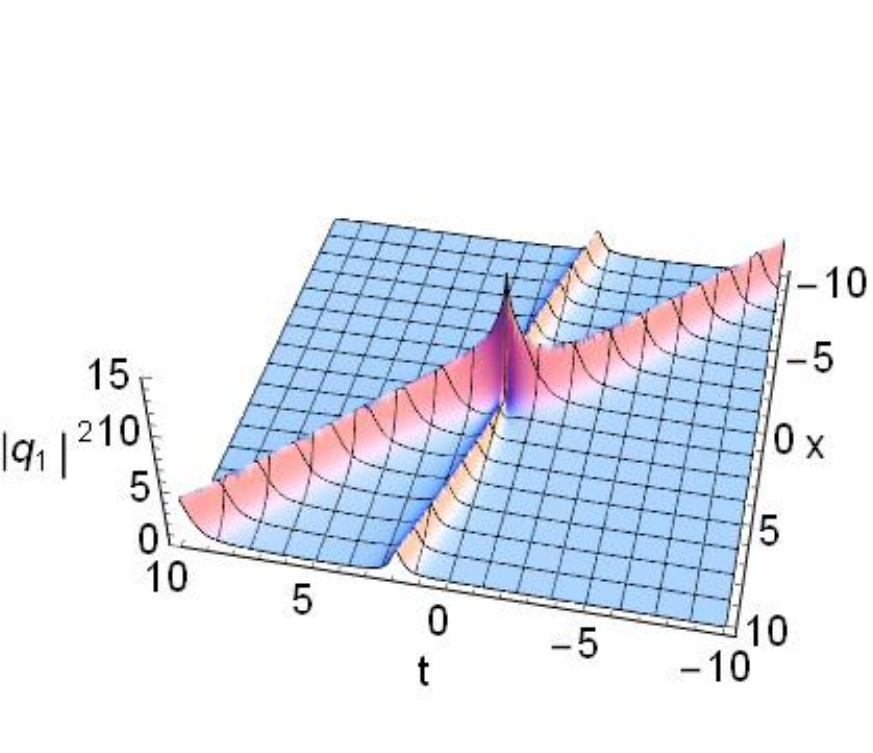}
\end{minipage}%
}%
\subfigure[ ]{
\begin{minipage}[t]{0.45\linewidth}
\centering
\includegraphics[width=2.6in]{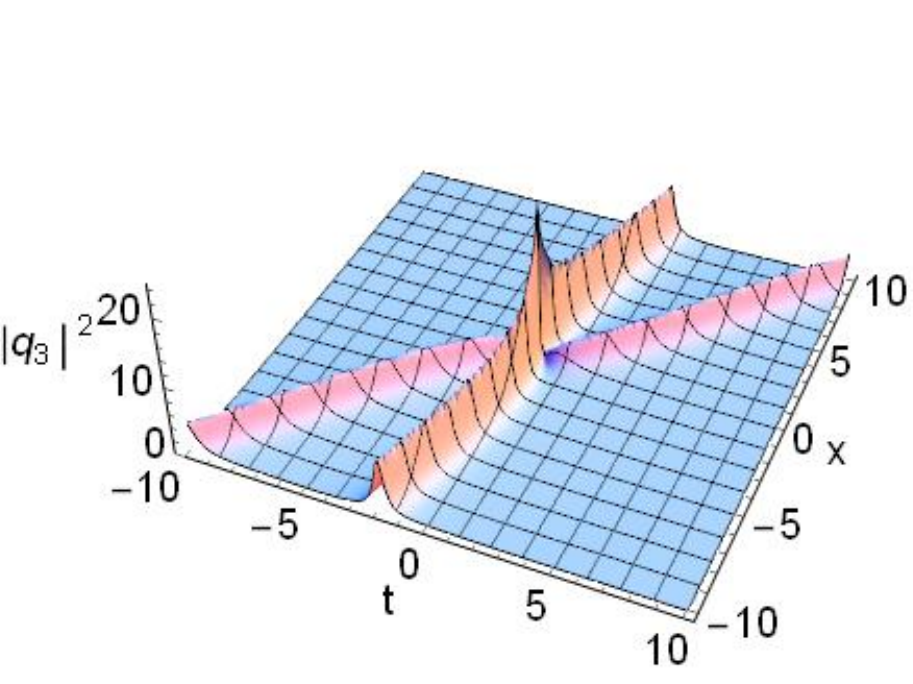}
\end{minipage}%
}
\subfigure[ ]{
\begin{minipage}[t]{0.45\linewidth}
\centering
\includegraphics[width=2.01in]{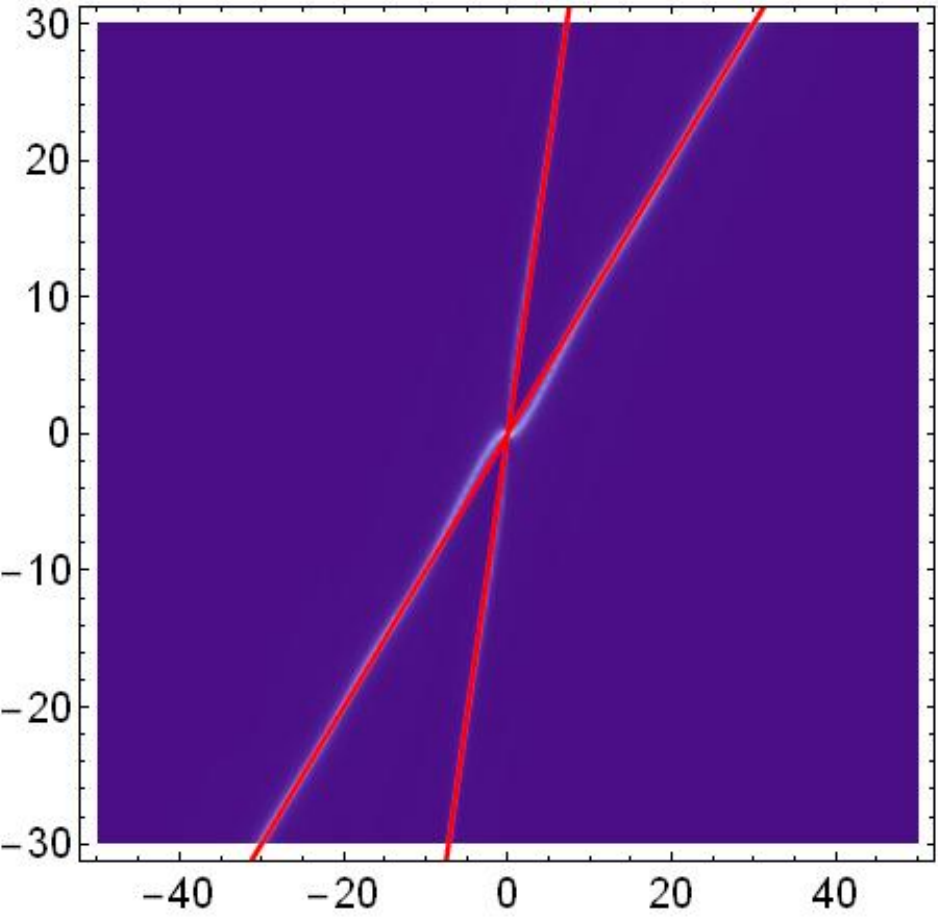}
\end{minipage}%
}
\subfigure[ ]{
\begin{minipage}[t]{0.45\linewidth}
\centering
\includegraphics[width=2.0in]{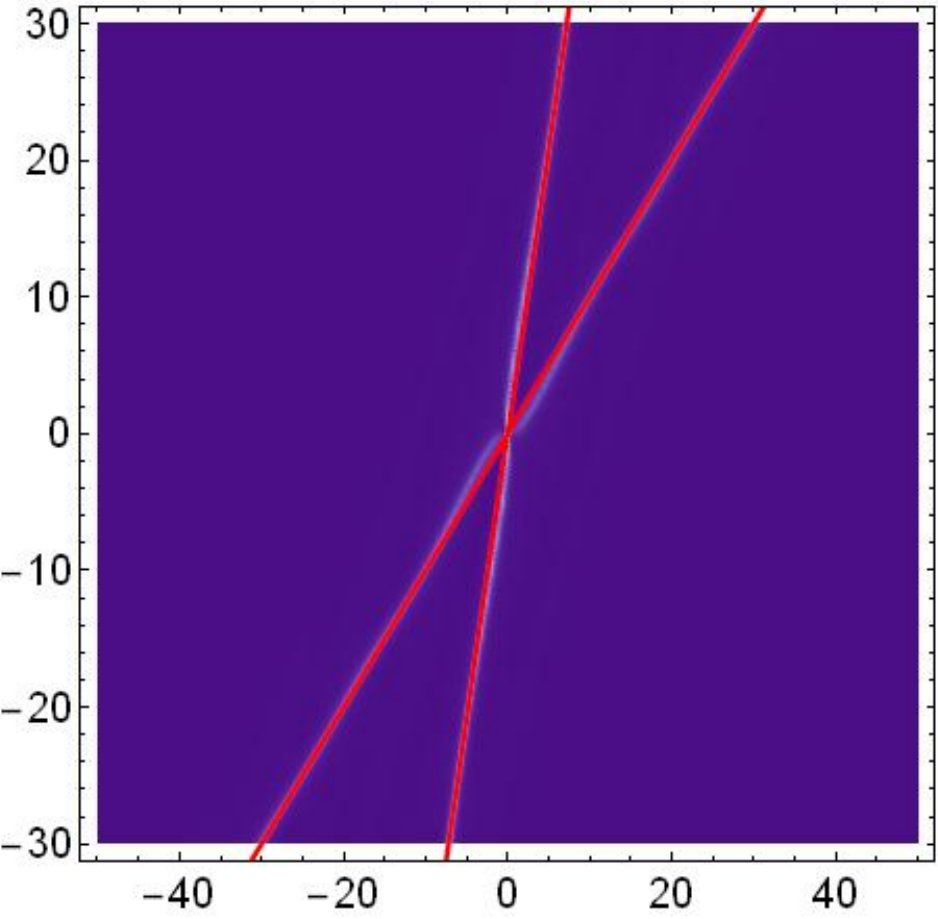}
\end{minipage}%
}
\caption{Shape and motion of algebraic 2SSs of the nonlocal  MTM   \eqref{non-cMTM}.
(a) Envelope $|q_1|^2$ given in \eqref{q1-alg-2ss} with $\delta=1,~k_1=1,~k_2=0.7$.
(b) Envelope of $|q_3|^2$ given in \eqref{q3-alg-2ss} with $\delta=1,~k_1=1,~k_2=0.7$.
(c) Density plot of (a) where the red lines are vertex trajectories $x=k_1^{-4}t$ and  $x=k_2^{-4}t$.
(d) Density plot of (b) where the red lines are vertex trajectories $x=k_1^{-4}t$ and  $x=k_2^{-4}t$.
 }
\label{F-7}
\end{figure}

As shown in Fig.\ref{F-7}, the derived algebraic solitons possess the following features:
near the interaction point there are apparent phase shifts for the two algebraic solitons, 
and these phase shifts gradually disappear when $|t|\gg 0$.
Such  features seem to common in the algebraic  type of soltions, see \cite{W-2021,WW-2022,LWZ-2022}.

\subsubsection{High order algebraic solitons}\label{sec-4-2-2}

In the following we consider
\begin{equation}\label{A-T-rational}
  A=\left(
   \begin{array}{ccccc}
    k &  0 & 0 & \cdots & 0 \\
     1 & k &0 & \cdots & 0 \\
     0 & 1 & k& \cdots & 0 \\
     \vdots & \ddots& \ddots & \ddots & \vdots \\
     0 & 0 & \cdots&1 & k
   \end{array}
 \right)_{ 2N\times 2N},~~
 T= \left(
    \begin{array}{ccccc}
     \sqrt{\delta}& 0 & 0 & \cdots & 0 \\
      0&  \sqrt{\delta} & 0 & \cdots & 0 \\
     0& 0&  \sqrt{\delta}& \cdots & 0 \\
      \vdots & \ddots & \ddots& \ddots & \vdots \\
     0& 0& \cdots& 0 &  \sqrt{\delta} \\
    \end{array}
  \right)_{ 2N\times 2N},
 \end{equation}
and
\begin{equation}
  S=AT.
\end{equation}
The corresponding vector $\phi$ and $\psi$ to compose solutions are
\begin{subequations}
\begin{align}\label{rational-A-T-phi-psi}
 & \phi =\Bigl(\mathrm{e}^{\eta(k)}, \partial_k(\mathrm{e}^{\eta(k)}),
  \frac{1}{2!}\partial_k^{2}( \mathrm{e}^{\eta(k)}),
  \cdots,  \frac{1}{(N-1)!}\partial_k^{N-1}(\mathrm{e}^{\eta(k)}) \Bigr)^{T},\\
  &\psi=S\phi(-x,-t),
  \end{align}
\end{subequations}
where $\eta(k )=\frac{i}{2}(k^{2}x+ k^{-2}t),~ k\in\mathbb{R},~\delta=\pm1$ and we have taken
         $C=(1,1,\cdots,1)^T$ for convenience.

The simplest case is of $N=1$, which yields the algebraic 1-soliton case \eqref{A-T-alg-1ss},
and corresponding envelops are the rational functions.

When $N=2$, we have
         \begin{equation}
          A=\left(
           \begin{array}{cccc}
            k &  0 & 0 & 0 \\
             1 & k &0 & 0 \\
             0 & 1 & k&  0 \\
             0 & 0 &1 & k
           \end{array}
         \right),~~
         T= \left(
          \begin{array}{cccc}
           \sqrt{\delta}& 0 & 0 &  0 \\
            0&  \sqrt{\delta} & 0 &  0 \\
           0& 0&  \sqrt{\delta}&  0 \\
           0& 0& 0 &  \sqrt{\delta} \\
          \end{array}
        \right),
         \end{equation}
         and
         \begin{subequations}
          \begin{align}\label{rational-A-T-phi-psi}
            & \phi =\Bigl(\mathrm{e}^{\eta(k)}, \partial_k(\mathrm{e}^{\eta(k)}),
             \frac{1}{2!}\partial_k^{2}( \mathrm{e}^{\eta(k)}),
            \frac{1}{(3)!}\partial_k^{3}(\mathrm{e}^{\eta(k)}) \Bigr)^{T},\\
             &\psi=S\phi(-x,-t)=AT\phi(-x,-t).
             \end{align}
            \end{subequations}
            \captionsetup[figure]{labelfont={bf},name={Fig.},labelsep=period}
\begin{figure}[ht]
\centering
\subfigure[ ]{
\begin{minipage}[t]{0.45\linewidth}
\centering
\includegraphics[width=2.5in]{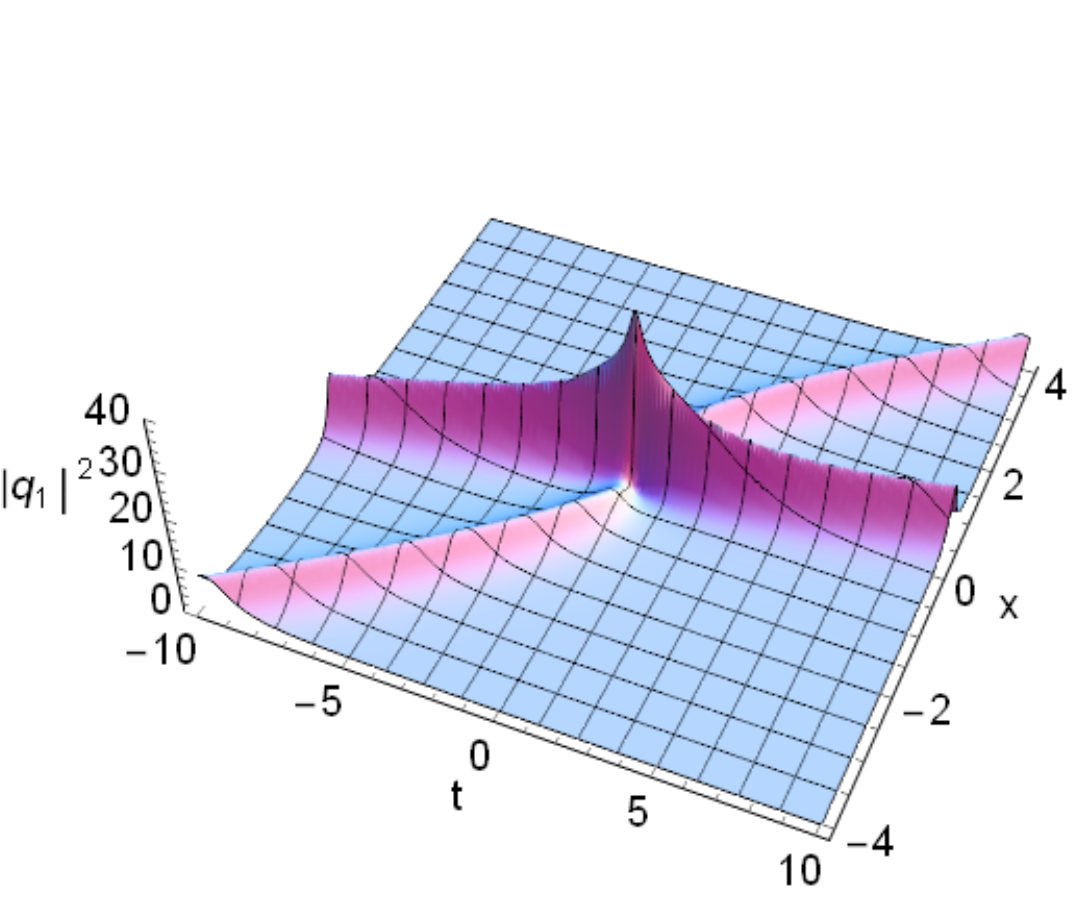}
\end{minipage}%
}%
\subfigure[ ]{
\begin{minipage}[t]{0.45\linewidth}
\centering
\includegraphics[width=2.5in]{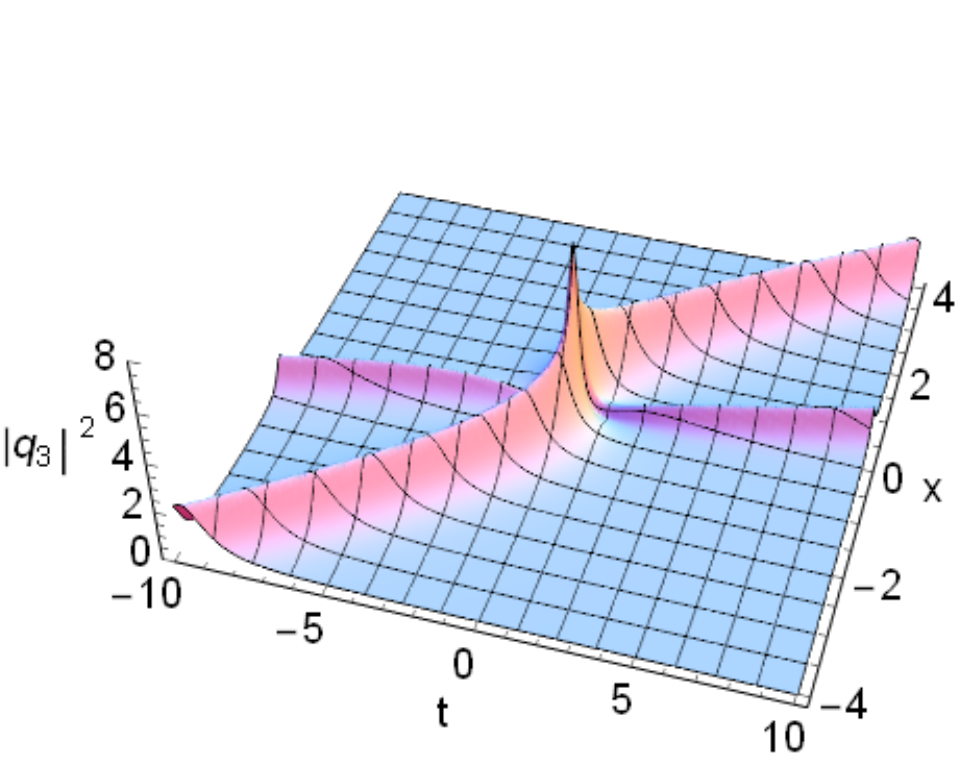}
\end{minipage}%
}
\subfigure[ ]{
\begin{minipage}[t]{0.45\linewidth}
\centering
\includegraphics[width=2.0in]{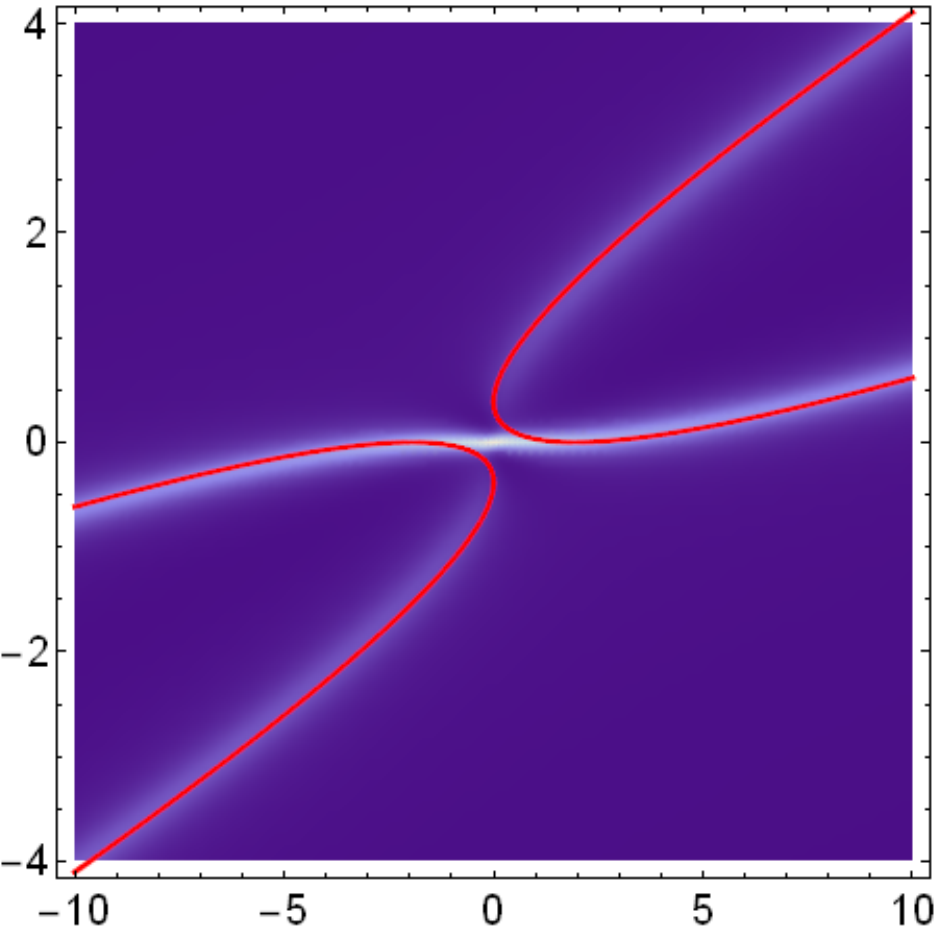}
\end{minipage}%
}
\subfigure[ ]{
\begin{minipage}[t]{0.45\linewidth}
\centering
\includegraphics[width=2.0in]{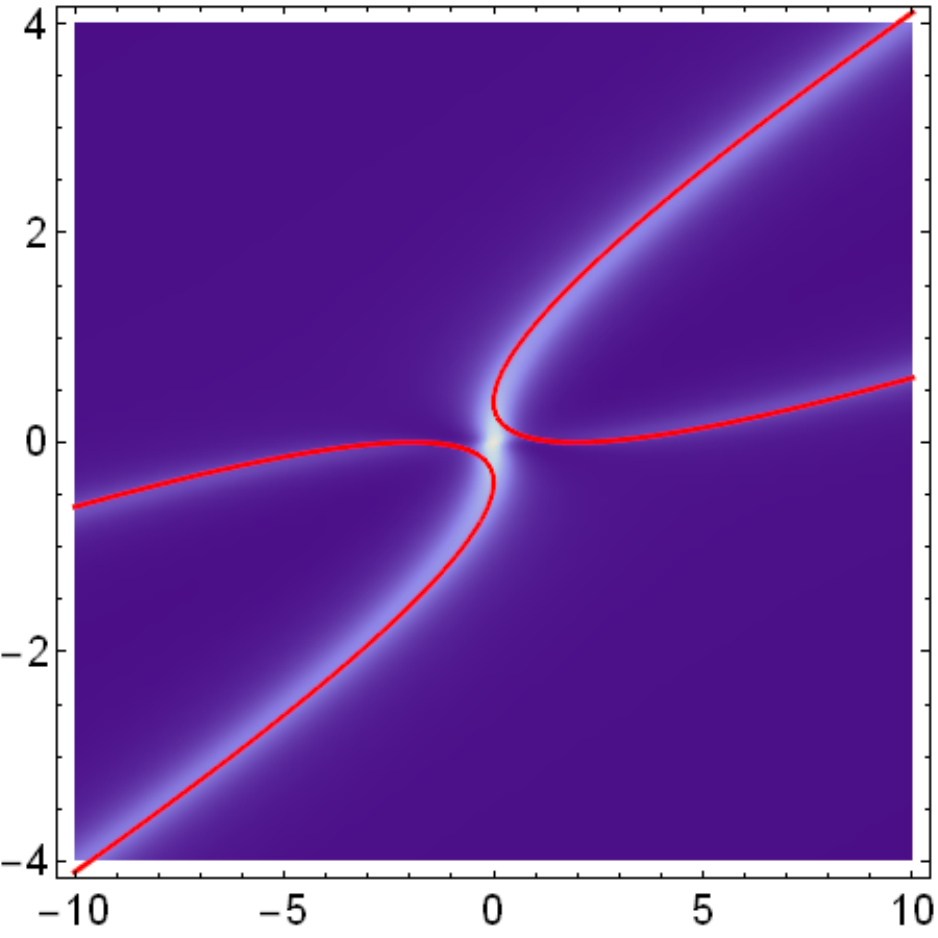}
\end{minipage}%
}
\caption{Shape and motion of the 2nd order algebraic soliton solutions of the nonlocal  MTM   \eqref{non-cMTM}.
(a) Envelope $|q_1|^2$ given in \eqref{q1-rational-2ss}with $\delta=1,~k=1.5$.
(b) Envelope of $|q_3|^2$ given in \eqref{q3-rational-2ss} with $\delta=1,~k=1.5$.
(c) Density plot of (a) where the four red curves are given in \eqref{curve1-2-3-4}.
(d) Density plot of (b) where the four red curves are given in \eqref{curve1-2-3-4}.
 }
\label{F-8}
\end{figure}
     The explicit formulae of $|q_1|^2$ and $|q_2|^2$ read
     \begin{subequations}\label{Abs-q1-q3-rational-2ss}
      \begin{align}
      |q_1|^2&=\frac{\mathcal{A'}_1}{\mathcal{B'}_1},\label{q1-rational-2ss}\\
     |q_3|^2&=\frac{\mathcal{A'}_2}{\mathcal{B'}_2},\label{q3-rational-2ss}
    \end{align}
  \end{subequations}
      where
      \begin{align*}
     \mathcal{A'}_1&=16 k^6 \left(64 t^6 + 64 k^{24} x^6 - 48 k^4 t^4 (-21 + 8 t x) - 48 k^{20} x^4 (-5 + 8 t x)\right.\\
     & \left.+60 k^8 t^2 (9 - 16 t x + 16 t^2 x^2) + 12 k^{16} x^2 (-3 + 48 t x + 80 t^2 x^2) \right.\\
     &\left.+ k^{12} (9 + 72 t x - 864 t^2 x^2 - 1280 t^3 x^3)\right),\\
     \mathcal{B'}_1&= \delta \left(256 t^8 + 256 k^{32} x^8 - 256 k^4 t^6 (-1 + 8 t x) - 256 k^{28} x^6 (-1 + 8 t x) \right.\\
     &\left.+ 32 k^8 t^4 (63 - 240 t x + 224 t^2 x^2) + 32 k^{24} x^4 (63 - 240 t x + 224 t^2 x^2)\right.\\
     &\left. - 16 k^{12} t^2 (-45 - 72 t x - 1776 t^2 x^2 + 896 t^3 x^3) -
   16 k^{20} x^2 (-45 - 72 t x - 1776 t^2 x^2 + 896 t^3 x^3) \right.\\
   &\left.+ k^{16} (9 - 288 t x + 30528 t^2 x^2 - 41984 t^3 x^3 + 17920 t^4 x^4)\right),\\
   \mathcal{A'}_2&=16 k^2 \left(64 t^6 + 64 k^{24} x^6 - 48 k^{20} x^4 (-21 + 8 t x) - 48 k^4 t^4 (-5 + 8 t x) \right.\\
   &\left.+ 60 k^{16} x^2 (9 - 16 t x + 16 t^2 x^2) + 12 k^8 t^2 (-3 + 48 t x + 80 t^2 x^2) \right.\\
   &\left.+ k^{12} (9 + 72 t x - 864 t^2 x^2 - 1280 t^3 x^3)\right),\\
     \mathcal{B'}_2&= \delta\left(256 t^8 + 256 k^{32} x^8 - 256 k^4 t^6 (-1 + 8 t x) \right.\\
    &\left. - 256 k^{28} x^6 (-1 + 8 t x) + 32 k^8 t^4 (63 - 240 t x + 224 t^2 x^2) + 32 k^{24} x^4 (63 - 240 t x + 224 t^2 x^2) \right.\\
    &\left.- 16 k^{12} t^2 (-45 - 72 t x - 1776 t^2 x^2 + 896 t^3 x^3) - 16 k^{20} x^2 (-45 - 72 t x - 1776 t^2 x^2 + 896 t^3 x^3)\right.\\
    &\left.+  k^{16} (9 - 288 t x + 30528 t^2 x^2 - 41984 t^3 x^3 + 17920 t^4 x^4)\right) .
  \end{align*}
 Obviously, they are   rational functions  with algebraic decaying, and we depict the waves in Fig.\ref{F-8}(a,b).
 More details about their dynamic behaves are presented in the following proposition.
 
 \begin{proposition}
 Asymptotically,   $|q_1|^2$ and $|q_3|^2$ given in \eqref{Abs-q1-q3-rational-2ss} 
  propagate along the following four curves (vertex trajectories)
(see the red curves in Fig.\ref{F-8}(c,d)):
\begin{subequations}\label{curve1-2-3-4}
  \begin{eqnarray}
&&x=\frac{t}{k^4} +\sqrt{-\frac{2\sqrt{3}}{k^6}t}-\frac{\sqrt{3}}{2k^2},~~t\leqslant 0,\label{curve-1}\\
&&x=\frac{t}{k^4} -\sqrt{-\frac{2\sqrt{3}}{k^6}t}-\frac{\sqrt{3}}{2k^2},~~t\leqslant 0,\label{curve-2}\\
&&x=\frac{t}{k^4} +\sqrt{\frac{2\sqrt{3}}{k^6}t}+\frac{\sqrt{3}}{2k^2},~~~~t\geqslant  0,\label{curve-3}\\
&&x=\frac{t}{k^4} -\sqrt{\frac{2\sqrt{3}}{k^6}t}+\frac{\sqrt{3}}{2k^2},~~~~t\geqslant 0.\label{curve-4}
\end{eqnarray}
\end{subequations}
Along the four curves, the amplitudes of the waves generate by $|q_1|^2$ and $|q_3|^2$ slowly change and approach to constants $4k^2$ and $4k^{-2}$ respectively when $|t|\gg0$.
More precisely, asymptotic properties are listed in Table.\ref{tab-1}.
Noted that the behavior with slowly varying amplitudes is a typical feature of such type of algebraic solitons \cite{W-2021,WW-2022,LWZ-2022}.
 \end{proposition}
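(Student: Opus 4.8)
The plan is to read off the ridge lines (vertex trajectories) directly from the closed rational formulae \eqref{Abs-q1-q3-rational-2ss}. Since every monomial in $\mathcal{A}'_i,\mathcal{B}'_i$ couples $x$ to $k^4$ and $t$ symmetrically, the envelopes travel with mean velocity $k^{-4}$, so I would first pass to the moving frame $X=x-t/k^4$ and write $|q_i|^2=P_i(X,t)/Q_i(X,t)$ with $P_i,Q_i$ polynomials in $(X,t)$; the two envelopes share the same denominator (one checks $\mathcal{B}'_1=\mathcal{B}'_2$), so it suffices to run the argument for $|q_1|^2$ and note the identical conclusion for $|q_3|^2$. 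A vertex at fixed $t$ is a critical point of $X\mapsto|q_i|^2$, i.e. a zero of the polynomial $V_i(X,t):=(\partial_X P_i)\,Q_i-P_i\,(\partial_X Q_i)$. Inspecting the highest-order part of $V_i$ shows that its non-trivial real branches grow like $X=O(\sqrt{|t|})$ as $|t|\to\infty$, whereas the central root stays bounded and corresponds to a vertex whose height decays and therefore does not enter the asymptotics.

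The second step is a matched expansion. Substituting $X=c\,\sqrt{|t|}+d+O(|t|^{-1/2})$ into $V_i(X,t)=0$ and collecting powers of $|t|$ in descending order, the leading balance gives $c^{2}=2\sqrt{3}/k^{6}$, so $c=\pm\sqrt{2\sqrt{3}/k^{6}}$, the sign of $c$ labelling the two ridges present at any given time. The next order is linear in $d$ and fixes $d=\pm\sqrt{3}/(2k^{2})$, the sign correlated with $\mathrm{sgn}\,t$; this is exactly the mechanism by which \eqref{curve1-2-3-4} splits into its $t\le 0$ part \eqref{curve-1}–\eqref{curve-2} and its $t\ge 0$ part \eqref{curve-3}–\eqref{curve-4}. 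Undoing $x=X+t/k^4$ then reproduces \eqref{curve-1}–\eqref{curve-4} verbatim. A short degree count on $V_i(\cdot,t)$, together with the scaling $X\sim\sqrt{|t|}$, shows that there are precisely two unbounded real branches for each sign of $t$ and that every remaining root either stays bounded or escapes faster than $\sqrt{|t|}$ (and in the latter case lies near a zero of $\mathcal{A}'_i$, hence is a minimum, not a ridge), so the four listed curves are all of the asymptotically relevant maxima.

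The third step is the amplitude. Substituting each of the four trajectories back into $P_i/Q_i$ and letting $|t|\to\infty$, the dominant powers of $|t|$ in numerator and denominator cancel — which is why the $O(\sqrt{|t|})$ and $O(1)$ terms obtained in Step 2 must be kept — and the limit is $4k^2$ for $|q_1|^2$ and $4k^{-2}$ for $|q_3|^2$. Carrying the expansion one further order produces the slowly varying corrections to these constants which, after simplification, are precisely the entries of Table \ref{tab-1}; checking $\partial_X^2(P_i/Q_i)<0$ along each curve confirms that these are genuine ridges (maxima), matching the density plots in Fig.\ref{F-8}(c,d).

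The principal difficulty is computational endurance rather than anything conceptual: $P_i,Q_i$ are of degree six (numerators) and eight (denominators) in $(X,t)$, and obtaining the \emph{constant} term $d$ — and then the amplitude corrections in Table \ref{tab-1} — requires pushing the asymptotic expansion two orders past the naive leading balance while tracking several cancellations, so a computer-algebra system is essentially indispensable here. The only subtler point is to justify cleanly that the four curves in \eqref{curve1-2-3-4} exhaust the non-decaying ridges and are not merely some of them; I would settle this by the degree count described above, i.e. by arguing that $V_i(\cdot,t)$ admits no other real branch of growth $\Theta(\sqrt{|t|})$.
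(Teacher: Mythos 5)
Your approach is sound, and since the paper states this proposition without any proof (it is asserted directly from the explicit rational formulae \eqref{Abs-q1-q3-rational-2ss}, in the spirit of the analogous computations in \cite{W-2021,WW-2022,LWZ-2022}), your critical-point/matched-expansion argument is a legitimate way to actually establish it. I checked the key balances for $k=1$: the joint degree-$8$ part of $\mathcal{B}'_1$ is exactly $256(x-t)^8$, and collecting the degree-$6$ and degree-$4$ contributions in the co-moving variable $X=x-t/k^4$ gives $\mathcal{B}'_1=256\bigl(X^4-12t^2\bigr)^2+O(|t|^{7/2})$ on the scale $X\sim\sqrt{|t|}$, so the leading balance is $c^4=12$, i.e.\ $c^2=2\sqrt{3}$, as you claim; the $O(|t|^{7/2})$ coefficient then vanishes identically once $c^4=12$, and the $O(|t|^{3})$ terms give a limiting profile proportional to $c^2/(c^2d^2-6d+c^2)$, whose maximization yields $d=\sqrt{3}/2$ for $t>0$ (and $-\sqrt{3}/2$ for $t<0$) and the amplitude $4$. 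One caveat: your justification for transferring the result to $|q_3|^2$ --- ``the denominators coincide, so the conclusion is identical'' --- is not sufficient as stated, because the vertex equation $V_i=(\partial_XP_i)Q_i-P_i\partial_XQ_i=0$ and the limiting amplitude both involve the numerator, and $\mathcal{A}'_1\neq\mathcal{A}'_2$. What saves it is the exact symmetry $|q_3|^2(x,t)=k^{-4}\,|q_1|^2$ under the swap $k^2x\leftrightarrow t/k^2$ (which fixes the co-moving line and permutes the four branches, consistent with the swapped increase/decrease entries in Table \ref{tab-1}), or equivalently a direct check that the relevant expansion coefficients of $\mathcal{A}'_2/(16k^2)$ about $x=t/k^4$ agree with those of $\mathcal{A}'_1/(16k^6)$; you should state one of these explicitly rather than rely on the shared denominator alone.
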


 {\small{\begin{table}[h]
  \begin{center}
  \begin{tabular}{cccc}
    \hline
   \small branches
   &   \small asymptotic curve
   & \begin{minipage}{4.0 cm}{\small amplitude changing  \\\small with respect to $t$ for $|q_1|^2$  }\end{minipage}  & \begin{minipage}{4.0 cm}{ \small amplitude changing  \\\small with respect to $t$ for $|q_3|^2$ }\end{minipage}
   \\
   \hline
   \small left-up
  &    \small \eqref{curve-1}
     &  \small increase
     &  \small decrease \\
   \small  left-down
  &     \small \eqref{curve-2}
  &  \small decrease
  &  \small increase \\
  \small  right-up
  &    \small \eqref{curve-3}
     &  \small increase
     &  \small decrease \\
   \small  right-down
  &    \small \eqref{curve-4}
     &  \small decrease
     &  \small increase \\
  \hline
  \end{tabular}
  \caption{Asymptotic properties of $|q_1|^2$ and $|q_3|^2$ given by \eqref{Abs-q1-q3-rational-2ss} as depicted in Fig.\ref{F-8}}.
  \label{tab-1}
  \end{center}
  \end{table}
  }}
  \captionsetup[figure]{labelfont={bf},name={Fig.},labelsep=period}

\section{Conclusion}\label{sec-5}

In this paper we have studied integrability of the  nonlocal MTM \eqref{non-cMTM}
and its various types of solutions.
Bilinearization-reduction approach was employed.
We started from the 4-component system \eqref{31pKN-1},
which acts as the unreduced MTM and can be reduced to the classical MTM and nonlocal MTM.
Note that the Mikhailov model \eqref{pKN-1} is also related to the MTM.
In the paper we also explored the relations between  \eqref{31pKN-1} and \eqref{pKN-1}
and also their consistency in the nonlocal reduction.
In the bilinearization-reduction approach, the first step is to present bilinear form for 
the unreduced MTM \eqref{31pKN-1}  
and obtain solutions in double Wronskian form (see Theorem \ref{Theo-3-1} and its proof in Appendix \ref{app-A}).
Then, we imposed constraints on the Wronskian entry vectors $\phi$ and $\psi$
(see \eqref{ASn} in Theorem \ref{Theo-3-3}).
In principle, by solving \eqref{ASn} one can obtain all solutions of the nonlocal MTM
that are of zero background and formulated in double Wronskians. 
Explicit forms of  $\phi$ and $\psi$ were presented in Sec.\ref{sec-3-2},
where, in particular, in Case (2), we gave $\phi$ and $\psi$ that can generate
algebraic solitons and high order algebraic solitons.
Finally, dynamics of some obtained solutions were analyzed and illustrated.

Considering the classical MTM is an important equation in mathematical physics
while its nonlocal version is rarely studied compared with the classical case,
our research in this paper will bring more insight about the integrability and solutions of the nonlocal MTM.
Moreover, the 4-component unreduced massive Thirring system \eqref{31pKN-1} were first introduced in \cite{BG-1987}
where more reductions in different spaces of equations and solutions have been revealed. 
It should be possible to investigate these reductions of equations as well as solutions
from bilinear perspective. 
In addition, the bilinearization-reduction approach
works as well for the space/time-shifted integrable systems (e.g.\cite{AM-PLA-2021,LWZ-ROMP-2022}) and
and also for the solutions with nonzero backgrounds (e.g.\cite{ZLD-2023,DCCZ-2024}).
The study of these extensions for the nonlocal MTM will be considered elsewhere.

\subsection*{Acknowledgments}

This project is supported by the NSFC grant (Nos. 12271334 and 12201329).

\begin{appendix}

\section{Proof of Theorem \ref{Theo-3-1}}\label{app-A}

To prove the theorem, let us first recall  the following two lemmas that are useful in verifying  solutions
in the Wronskian technique.

\begin{lemma}\label{lemma 1}\cite{FreN-PLA-1983}
$$|\mathbf{M}, \mathbf{a},\mathbf{b}||\mathbf{M},\mathbf{c},\mathbf{d}|
-|\mathbf{M}, \mathbf{a},\mathbf{c}||\mathbf{M},\mathbf{b},\mathbf{d}|
+|\mathbf{M}, \mathbf{a},\mathbf{d}||\mathbf{M},\mathbf{b}, \mathbf{c}|=0,$$
where $\mathbf{M}$ is an arbitrary $N\times (N-2)$ matrix, and $\mathbf{a}$, $\mathbf{b}$,
$\mathbf{c}$ and $\mathbf{d}$ are $N$th-order column vectors.
\end{lemma}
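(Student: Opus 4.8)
The plan is to derive this determinant identity from the single Pl\"ucker relation on $G(2,4)$ by a rank argument on $\mathbf{M}$. First I would dispose of the degenerate case: if the $N\times(N-2)$ matrix $\mathbf{M}$ has rank strictly less than $N-2$, then for every pair of $N$th-order vectors $\mathbf{u},\mathbf{v}$ the $N\times N$ matrix $[\mathbf{M},\mathbf{u},\mathbf{v}]$ has rank at most $N-1$, so $|\mathbf{M},\mathbf{u},\mathbf{v}|=0$; all six determinants appearing in the identity vanish and it holds trivially.

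So assume $\mathbf{M}$ has full column rank $N-2$, i.e.\ its columns $\mathbf{m}_1,\dots,\mathbf{m}_{N-2}$ are linearly independent, and complete them to a basis $\{\mathbf{m}_1,\dots,\mathbf{m}_{N-2},\mathbf{e},\mathbf{f}\}$ of $\mathbb{C}^{N}$. Writing $\mathbf{a}=\mathbf{a}'+\alpha_1\mathbf{e}+\alpha_2\mathbf{f}$ with $\mathbf{a}'\in\operatorname{span}(\mathbf{m}_1,\dots,\mathbf{m}_{N-2})$, and similarly $\mathbf{b},\mathbf{c},\mathbf{d}$ with coefficient pairs $(\beta_1,\beta_2)$, $(\gamma_1,\gamma_2)$, $(\delta_1,\delta_2)$, multilinearity in the last two columns together with the fact that adding to either of those columns a vector from the column span of $\mathbf{M}$ does not change $|\mathbf{M},\cdot,\cdot|$ yields
\[
|\mathbf{M},\mathbf{a},\mathbf{b}|=(\alpha_1\beta_2-\alpha_2\beta_1)\,\Delta,\qquad \Delta:=|\mathbf{M},\mathbf{e},\mathbf{f}|\neq 0,
\]
and the analogous formulas for the other five determinants. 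Substituting these, the left-hand side of the asserted identity equals $\Delta^{2}\bigl(p_{ab}p_{cd}-p_{ac}p_{bd}+p_{ad}p_{bc}\bigr)$, where $p_{uv}:=u_1v_2-u_2v_1$ are the $2\times 2$ minors of the $2\times 4$ matrix whose columns are $(\alpha_1,\alpha_2)^T,(\beta_1,\beta_2)^T,(\gamma_1,\gamma_2)^T,(\delta_1,\delta_2)^T$.

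It then remains to verify the polynomial identity $p_{ab}p_{cd}-p_{ac}p_{bd}+p_{ad}p_{bc}\equiv 0$. This is the classical $G(2,4)$ Pl\"ucker relation; the cleanest check is to Laplace-expand along its first two columns the $4\times 4$ determinant
\[
\begin{vmatrix}
\alpha_1 & \alpha_2 & \alpha_1 & \alpha_2\\
\beta_1 & \beta_2 & \beta_1 & \beta_2\\
\gamma_1 & \gamma_2 & \gamma_1 & \gamma_2\\
\delta_1 & \delta_2 & \delta_1 & \delta_2
\end{vmatrix}=0,
\]
which vanishes because columns $3,4$ repeat columns $1,2$, and whose Laplace expansion produces exactly $2\bigl(p_{ab}p_{cd}-p_{ac}p_{bd}+p_{ad}p_{bc}\bigr)$. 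I do not expect any genuine obstacle here; the only points requiring care are the sign bookkeeping when completing the basis and ordering the factors $p_{uv}$, and confirming that the rank-deficient case is fully covered. Equivalently, one may simply invoke the statement as the relevant instance of Sylvester's (Jacobi's) determinant identity, which is the form in which it is recorded in Ref.~\cite{FreN-PLA-1983}.
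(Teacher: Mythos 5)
Your proof is correct. Note that the paper itself does not prove Lemma A.1 at all: it is quoted verbatim from Ref.~\cite{FreN-PLA-1983} and used as a black box, so there is no in-paper argument to compare against. The classical proof in that reference (and in the Wronskian-technique literature generally) proceeds by Laplace-expanding a suitably constructed $2N\times 2N$ determinant built from two copies of $\mathbf{M}$ and the doubled columns $\mathbf{a},\mathbf{b},\mathbf{c},\mathbf{d}$, observing that the expansion along the first $N$ rows produces exactly the six products in the identity while the determinant itself vanishes. Your route is genuinely different and arguably cleaner: you quotient out the column span of $\mathbf{M}$, reduce each $N\times N$ determinant to a $2\times 2$ minor times the fixed factor $\Delta$, and land on the single $G(2,4)$ Pl\"ucker relation, which you then verify by the repeated-columns Laplace trick. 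All the steps check out --- the rank-deficient case is correctly disposed of (every bordered determinant has rank at most $N-1$), the factorization $|\mathbf{M},\mathbf{a},\mathbf{b}|=(\alpha_1\beta_2-\alpha_2\beta_1)\Delta$ follows from column operations plus bilinearity exactly as you say, and the signs in the Laplace expansion of your $4\times 4$ determinant do sum to $2(p_{ab}p_{cd}-p_{ac}p_{bd}+p_{ad}p_{bc})$. What your approach buys is a self-contained, elementary argument that isolates the combinatorial content (the three-term Pl\"ucker relation) from the linear-algebraic bookkeeping; what the classical approach buys is a template that generalizes directly to the higher Pl\"ucker relations needed for longer Laplace expansions. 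One cosmetic remark: $\Delta\neq 0$ is never actually needed for the conclusion, since $\Delta^2\cdot 0=0$ holds in any case.
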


\begin{lemma}\label{lemma 2}\cite{ZDJ-arxiv,ZhaZSZ-RMP-2014}
Let $\Xi=(a_{js})_{N\times N}$ be a $N\times N$ matrix with
column vectors $\{\alpha_j\}$. $\Gamma=(\gamma_{js})_{N\times N}$ is a  $N\times N$
operator matrix and each $\gamma_{js}$ is a operator. Then we have
\begin{equation*}
\sum^N_{j=1}|\Gamma_{j}\ast \Xi|=\sum^N_{j=1}|(\Gamma^{T})_{j}\ast \Xi^{T}|,
\end{equation*}
where
\begin{equation*}
|A_{j}\ast \Xi|=|\Xi_{1},\ldots,\Xi_{j-1},A\circ \Xi_{j},\Xi_{j+1},\ldots,\Xi_{N}|,
\end{equation*}
and
\begin{equation*}
A_{j}\circ B_{j}=(A_{1,j}B_{1,j},A_{2,j}B_{2,j},\ldots,A_{N,j}B_{N,j}),
\end{equation*}
in which $A_{j}=(A_{1,j},A_{2,j},\ldots,A_{N,j})^{T}$ and $B_{j}=(B_{1,j},B_{2,j},\ldots,B_{N,j})^{T}$
are $N$th-order column vectors.
\end{lemma}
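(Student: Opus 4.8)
The plan is to prove the identity by expanding both determinant-sums with the Leibniz permutation formula and matching them term by term through the inverse permutation. First I would record what each side does entrywise. In $|\Gamma_{j}\ast \Xi|$ only the $j$-th column of $\Xi$ is altered: by the definition of $\circ$, its $i$-th entry $a_{ij}$ is replaced by $\gamma_{ij}(a_{ij})$, while every other column is untouched. Likewise $|(\Gamma^{T})_{j}\ast \Xi^{T}|$ alters only the $j$-th column of $\Xi^{T}$, whose $i$-th entry $a_{ji}$ becomes $\gamma_{ji}(a_{ji})$.

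Next I would apply $|\Xi|=\sum_{\sigma\in S_N}\mathrm{sgn}(\sigma)\prod_{i}a_{i,\sigma(i)}$. In the term indexed by $\sigma$, the factor drawn from column $j$ is $a_{\sigma^{-1}(j),j}$, so after the modification it becomes $\gamma_{\sigma^{-1}(j),j}\bigl(a_{\sigma^{-1}(j),j}\bigr)$. Summing over $j$ and setting $p=\sigma^{-1}(j)$ (a bijection of $\{1,\dots,N\}$ for each fixed $\sigma$) collapses the left side to
\[
\sum_{j=1}^{N}|\Gamma_{j}\ast \Xi|
=\sum_{\sigma}\mathrm{sgn}(\sigma)\sum_{p=1}^{N}\gamma_{p,\sigma(p)}\bigl(a_{p,\sigma(p)}\bigr)\prod_{i\neq p}a_{i,\sigma(i)}.
\]
The point of this form is that it is intrinsic to the pair $(\Gamma,\Xi)$ and privileges neither rows nor columns: it is a single sum over all permutations and all positions $p$ at which the operator acts.

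Running the same computation on the right side, where the underlying matrix is $\Xi^{T}$ with entries $a_{ji}$ and the operators are $\gamma_{ji}$, yields
\[
\sum_{j=1}^{N}|(\Gamma^{T})_{j}\ast \Xi^{T}|
=\sum_{\tau}\mathrm{sgn}(\tau)\sum_{q=1}^{N}\gamma_{\tau(q),q}\bigl(a_{\tau(q),q}\bigr)\prod_{i\neq q}a_{\tau(i),i}.
\]
The final step is the substitution $\sigma=\tau^{-1}$, under which $\mathrm{sgn}(\tau)=\mathrm{sgn}(\sigma)$; relabelling the product index by $m=\tau(i)$ and the distinguished position by $r=\tau(q)$ (so $q=\sigma(r)$) turns this expression into exactly the collapsed left-hand side above, whence equality.

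The main obstacle is purely bookkeeping: tracking which single factor in each permutation product carries an operator, and checking that transposition together with $\sigma\mapsto\tau^{-1}$ reroutes all indices while leaving the sign untouched. The operator-valued entries and the Hadamard product $\circ$ do not disturb the combinatorics, since each operator still acts on precisely one scalar factor per permutation term; once the index dictionary $p\leftrightarrow q\leftrightarrow r$ is fixed, the coincidence of the two sides is immediate.
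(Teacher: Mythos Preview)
Your argument is correct: expanding both sides by the Leibniz formula, reindexing the column label via $p=\sigma^{-1}(j)$ (respectively $q=\tau^{-1}(j)$), and then matching the two sums through $\sigma=\tau^{-1}$ together with $m=\tau(i)$, $r=\tau(q)$ gives the identity. The key observation you make --- that after summing over the column index the resulting expression
\[
\sum_{\sigma}\mathrm{sgn}(\sigma)\sum_{p}\gamma_{p,\sigma(p)}\bigl(a_{p,\sigma(p)}\bigr)\prod_{i\neq p}a_{i,\sigma(i)}
\]
is symmetric in rows and columns --- is exactly what is needed, and the bookkeeping you outline goes through without issue since each $\gamma_{ij}(a_{ij})$ is a scalar that commutes with the remaining factors in the permutation product.

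As for comparison with the paper: this lemma is quoted from the references \cite{ZDJ-arxiv,ZhaZSZ-RMP-2014} and is not proved in the present paper itself, so there is no in-paper argument to compare against. Your direct Leibniz-expansion proof is the standard elementary route and is essentially the one given in those original sources.
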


Now we start the proof.
From  \eqref{wron-cond-x} one can calculate derivatives of $f$, $g$, $h$ and $s$:
\begin{align*}
&f_x= |\t{N-1}, N+1; \W{M-1}|+ |\t{N}; \W{M-2},M|,\nonumber \\
&f_t=-\frac{1}{4}(|0,\b{N}; \W{M-1}|+|\t{N}; -1,\t{M-1}|), \\
&g_x =\frac{1}{2(2i)^{N-1}}|A|(|\W{N-1},N+1; \t{M-1}|+|\W{N}; \t{M-2},M|),\\
&g_t =-\frac{1}{8(2i)^{N-1}}|A|(|-1,\t{N}; \t{M-1}|+|\W{N}; 0,\b{M-1}|),  \\
&h_x=-i \,(2i)^{N-1}|A|^{-1}(|\b{N-1}, N+1; \W{M}|+ |\b{N}; \W{M-1},M+1|),\\
&h_t=\frac{i}{4} \,(2i)^{N-1}|A|^{-1}(|1,\u{N}; \W{M}|+|\b{N}; -1,\t{M}|), \\
&s_x= |\t{N-1}, N+1; \t{M}|+ |\t{N}; \t{M-1},M+1|,\\
&s_t=-\frac{1}{4}(|0,\b{N}; \t{M}|+|\t{N}; 0,\b{M}|),\\
&\t h_t=\frac{1}{4(2i)^{N-1}}|A|(|-1,\t{N}; \W{M-2}|+|\W{N}; -1,\t{M-2}|),\\
&\b h_t=\frac{1}{4}(2i)^{N}|A|^{-1}(|1,\u{N}; \t{M+1}|+|\b{N}; 0,\b{M+1}|).
\end{align*}
Here $\u{N-1}$ denotes the consecutive columns $(\partial_{x}^{3}\phi,\cdots,\partial_{x}^{N-1}\phi)$.
Substituting them into equation \eqref{23a}, we have
\begin{align}
~& D_{t}\t h\cdot s+gf\nonumber\\
=\,& \t h_{t} s-\t h s_{t}+gf \nonumber\\
=\,& \frac{1}{2(2i)^{N-1}}|A|\Big[\frac{1}{2}|\t{N}; \t{M}|(|-1,\t{N}; \W{M-2}|+|\W{N}; -1,\t{M-2}|)
\nonumber\\
~&-\frac{1}{2}|\W{N}; \W{M-2}|(|0,\b{N}; \t{M}|+|\t{N}; 0,\b{M}|)+|\W{N}; \t{M-1}||\t{N}; \W{M-1}|\Big].
\label{eqd}
\end{align}
From the following identity
\begin{align*}
 |\t{N}; \t{M}| \Big((-2i \,\mathrm{Tr} (A^{-2}) |\W{N}; \W{M-2}|\Big)=
 |\W{N}; \W{M-2}| \Big((-2i \,\mathrm{Tr} (A^{-2}) |\t{N}; \t{M}|\Big),
\end{align*}
where $\mathrm{Tr}(A)$ stands for the trace of $A$, and making use of the above Lemma \ref{lemma 2},
we find the relation
\begin{align*}
|\t{N}; \t{M}| ( |-1, \t{N};  \W{M-2}|-|\W{N}; -1, \t{M-2}|)
= |\W{N}; \W{M-2}| ( |0, \b{N};  \t{M}|-|\t{N}; 0, \b{M}|).
\end{align*}
With it in hand, we reduce equation \eqref{eqd}   to
\begin{align}
~& D_{t}\t h\cdot s+gf\nonumber\\
 =&\, \frac{1}{2(2i)^{N-1}}|A|\Big[|\t{N}; \t{M}||\W{N}; -1,\t{M-2}|-|\W{N}; \W{M-2}||\t{N}; 0,\b{M}|
 +|\W{N}; \t{M-1}||\t{N}; \W{M-1}|\Big].
 \label{eqd1}
\end{align}
Next, using the dispersion relation of $\phi$ and $\psi$, i.e. \eqref{wron-cond-x},
we may rewrite some double Wronskians in the above equation as the following:
\begin{align*}
&|\W{N}; -1,\t{M-2}|
=(-1)^{M-1}(-2i)^{M+N}|A|^{-2}|\t{N+1}; 0, \overline{M-1}|,\\
&|\W{N}; \W{M-2}|
=(-1)^{M-1}(-2i)^{M+N}|A|^{-2}|\t{N+1}; \t{M-1}|,\\
&|\W{N}; \t{M-1}|
=(-1)^{M-1}(-2i)^{M+N}|A|^{-2}|\t{N+1}; \b{M}|.
\end{align*}
Substituting them into the right-hand side of  \eqref{eqd1},
and then making use of Lemma \ref{lemma 1},
we can immediately  find that the right-hand side of  \eqref{eqd1} vanishes,
which means bilinear  equation \eqref{23a} holds.

All other equations in \eqref{4MTM-d} can be proved along the same lines.
We skip details and complete the proof.

\section{Connection with the bilinear Mikhailov model}\label{app-B}

As pointed out in Proposition \ref{prop-2-1}, if
\begin{equation}\label{uv-0}
uv \to 0,~~~ \mathrm{as}~ x \to \pm \infty,
\end{equation}
the unreduced MTM \eqref{31pKN-1} can be transformed from the Mikhailov model \eqref{pKN-1}
through the transformation \eqref{trans 33}.
Note that the Mikhailov model \eqref{pKN-1}, via the transformation
\begin{equation}\label{tran}
u=\frac{g}{f},~~v=\frac{h}{s},
\end{equation}
can be written into bilinear form \cite{LWZ-2022}:
\begin{subequations}\label{4MTM-bb}
\begin{align}
& I_1=D_{x}D_{t}g\cdot f+gf=0 \label{22a},\\
& I_2=D_{x}D_{t}h\cdot s+hs=0 \label{22b},\\
& I_3=D_{x}D_{t}f\cdot s+iD_{x}g\cdot h=0, \label{22c} \\
& I_4=D_{t}f\cdot s+igh=0. \label{22d}
\end{align}
\end{subequations}
This implies that, under the assumption \eqref{uv-0},
the above equations should also serve as a bilinear form of the  unreduced MTM \eqref{31pKN-1}.
In the following let us explore more details of such connections.

First, bilinear equation \eqref{22d} together with \eqref{tran} indicates
\begin{equation}\label{uv-t}
uv=i\left(\ln \frac{f}{s}\right)_t.
\end{equation}
In light of the conserved relation \eqref{conser}, we may assume
\begin{equation}\label{qr}
qr=-i\left(\ln \frac{f}{s}\right)_x,
\end{equation}
and then from the definition of $\beta$, i.e. \eqref{beta},
we have
\begin{equation}\label{beta-a}
e^{i\beta}= \frac{f}{s}.
\end{equation}
It then follows from the transformation \eqref{trans 33} that
\begin{equation} \label{trans 5-2}
q_1=\left(\frac{g}{f}\right)_{x}\frac{f}{s}, ~~
~~q_2=\left(\frac{h}{s}\right)_{x}\frac{s}{f}, ~~
~~q_3=i\frac{g}{s}, ~~
~~q_4=-i\frac{h}{f}.
\end{equation}

Next, we demonstrate that \eqref{4MTM-bb} can serve as an alternative bilinear form of the
unreduced MTM \eqref{31pKN-1}, via the above transformation.
Substituting \eqref{trans 5-2} into \eqref{KN11} and \eqref{KN21} yields, respectively,
\begin{subequations}\label{eqs-1}
\begin{align}
& \eqref{KN11}=\frac{i}{fs} I_1+\!i\frac{D_xg\cdot f}{f^2s^2}I_4
-i\frac{g}{f^2s^2}\left(sD_xD_t f\cdot f + 2ihD_xg\cdot f\right)=0,\\
& \eqref{KN21}=\frac{i}{fs}I_2- i\frac{D_xh\cdot s}{f^2s^2}I_4-
i\frac{h}{f^2s^2}\left (fD_xD_t s\cdot s-2igD_xh\cdot s\right)=0,
\end{align}
\end{subequations}
where $I_1, I_2$ and $I_4$ are given as in \eqref{4MTM-bb}.
Meanwhile,  by using the following relations
\begin{subequations}
\begin{align}
& sD_xD_t f\cdot f=fD_xD_tf\cdot s-2f_xD_tf\cdot s+f(D_tf\cdot s)_x,\\
& fD_xD_t s\cdot s=sD_xD_tf\cdot s+2s_xD_tf\cdot s-s(D_tf\cdot s)_x,\\
& 2hD_xg\cdot f=fD_xg\cdot h+f(gh)_x-2f_xgh,\\
& 2gD_xh\cdot s=-sD_xg\cdot h+s(gh)_x-2s_xgh,
\end{align}
\end{subequations}
we can find that
\begin{subequations}\label{eqs-2}
\begin{align}
 sD_xD_t f\cdot f+2ihD_xg\cdot f &=f\, I_3+f (I_4)_x -2f_xI_4,\\
 fD_xD_t s\cdot s-2igD_xh\cdot s &=s\, I_3-s (I_4)_x +2s_xI_4.
\end{align}
\end{subequations}
Inserting them into \eqref{eqs-1}, we arrive at
\begin{subequations}\label{relation-NB}
\begin{align}
& \eqref{KN11}=\frac{i}{fs}I_1+i\frac{D_xg\cdot f}{f^2s^2}I_4
- i\frac{g}{f^2s^2}\Big(fI_3+fI_{4,x}-2f_xI_4\Big)=0,\\
& \eqref{KN21}=\frac{i}{fs}I_2-i\frac{D_xh\cdot s}{f^2s^2}I_4
-i\frac{h}{f^2s^2}\Big(sI_3-sI_{4,x}+2s_xI_4\Big)=0,
\end{align}
\end{subequations}
which means \eqref{4MTM-bb} provides a set of bilinear form for the first two equations \eqref{KN11} and \eqref{KN21}.

For equation \eqref{KN31}, with the transformation \eqref{trans 5-2}, it reads
\begin{align}
\eqref{KN31}=
-\Big(\frac{g}{f}\cdot \frac{f}{s}\Big)_x+\Big(\frac{g}{f}\Big)_x\frac{f}{s}
+i\Big(\frac{g}{f}\Big)_x\Big(\frac{h}{s}\Big)_x\frac{g}{s}=0,
\end{align}
i.e.
\begin{equation}\label{eqs-4}
  \Big(\frac{g}{f}\Big)_x \Big(\frac{h}{s}\Big)_x=-i \Big(\ln{\frac{f}{s}}\Big)_x .
\end{equation}
Recalling the conservation law \eqref{con-q} together with \eqref{trans 5-2} which  indicates
\begin{equation}\label{eqs-4}
 \partial_t\left[ \Big(\frac{g}{f}\Big)_x \Big(\frac{h}{s}\Big)_x\right]
 =-  \left( \frac{gh}{fs}\right)_x,
\end{equation}
and recalling bilinear equation \eqref{22d} which indicates
\begin{equation}\label{eqs-4}
\partial_t \left[ \Big(\ln{\frac{f}{s}}\Big)_x \right]
=-i    \left( \frac{gh}{fs}\right)_x,
\end{equation}
if we assume both $\Big(\frac{g}{f}\Big)_x \Big(\frac{h}{s}\Big)_x$
and $\Big(\ln{\frac{f}{s}}\Big)_x $ tend to zero when $|t|\to \infty$,
we can conclude that \eqref{eqs-4} holds, so does equation \eqref{KN31}.
One can have a same result for equation \eqref{KN41}.
Moreover, comparing the two transformations \eqref{trans 5} and \eqref{trans 5-2},
we find
\begin{equation}
\t h=\frac{D_xg\cdot f}{s},~~~~~
\b h=\frac{D_xh\cdot s}{f},
\end{equation}
which are nothing but the bilinear equations \eqref{23d} and \eqref{23e}.

In conclusion, \eqref{4MTM-bb} does provides a set of bilinear formulation for
the unreduced MTM \eqref{31pKN-1} under the aforementioned assumption.
However, as we have pointed out in Remark \ref{Rem-2.1},
the transformation \eqref{trans 5-2} and bilinear form \eqref{4MTM-bb}
rely on the conservation laws \eqref{con-q}, \eqref{conser}
and the significance of $\beta=\partial^{-1}_x qr$,
both of which require that $q,r,q_i\to 0$ as $x\to \pm \infty$.
In principle, the bilinear form \eqref{4MTM-bb} is not applicable to
those solutions whose envelopes do not decrease
(e.g.  Fig.\ref{F-2}(a), Fig.\ref{F-3}, Fig.\ref{F-4}(c,d) and Fig.\ref{F-5}(b) in this paper).

\end{appendix}

\vskip 20pt

\end{document}